\pgfplotsset{compat=1.14}
\newcommand{\sset}[1]{\left\{ #1\right\}}
\newcommand{\prob}[1]{\ensuremath{\mathrm{Pr}\left[#1\right]}}
\DeclareMathOperator*{\argmax}{argmax}
\newcommand{\algoname}[1]{\ensuremath{\text{\rm\sc #1}}}
\newcommand{\map}{\longrightarrow}
\newcommand{\orderstat}[3]{{#1}_{{#2}:{#3}}}
\DeclareMathOperator*{\expectation}{\mathbb E}
\newcommand{\expect}[2][]{\expectation_{#1}\nolimits\left[#2\right]}
\newcommand{\expectsmall}[2][]{\expectation_{#1}\nolimits[#2]}
\newcommand{\myerson}[2]{\algoname{Myerson}(#1,#2)}
\newcommand{\price}[2]{\algoname{Price}(#1,#2)}
\newcommand{\pricep}[3]{\algoname{Price}(#1,#2,#3)}
\newcommand{\apx}[2]{\algoname{APX}(#1,#2)}
\newtheorem*{lemmanonum}{Lemma}
\begin{document}

\title{Optimal Pricing For MHR and \texorpdfstring{$\lambda$}{lambda}-Regular Distributions}
\titlenote{An earlier version of this paper,
not including the results for $\lambda$-regular distributions, appeared in
WINE'18~\citep{gz2018}.\\ Supported by the Alexander von Humboldt Foundation with
funds from the German Federal Ministry of Education and Research (BMBF). }

\author{Yiannis Giannakopoulos}
\email{yiannis.giannakopoulos@tum.de}
\orcid{0000-0003-2382-1779}
\author{Diogo Poças}
\email{diogo.pocas@tum.de}
\orcid{0000-0002-5474-3614}
\affiliation{%
  \institution{TU Munich}
  \department{Chair of Operations Research}
  \streetaddress{Arcisstr.~21}
  \city{München}
  \postcode{80333}
  \state{Bayern}
  \country{Germany}
}

\author{Keyu Zhu}
\authornote{Work done mostly while visiting the Chair of Operations Research of TU Munich.}
\email{keyu.zhu@gatech.edu}
\affiliation{%
  \institution{Georgia Institute of Technology}
  \department{School of Industrial and Systems Engineering}
  \streetaddress{765 Ferst Drive NW}
  \city{Atlanta}
  \state{GA}
  \postcode{30332}
  \country{USA}
  }

\begin{abstract}
We study the performance of anonymous posted-price selling mechanisms for a
standard Bayesian auction setting, where $n$ bidders have i.i.d.\ valuations for a
single item. We show that for the natural class of Monotone Hazard Rate (MHR)
distributions, offering the same, take-it-or-leave-it price to all bidders can
achieve an (asymptotically) optimal revenue. In particular, the approximation
ratio is shown to be $1+O(\ln \ln n/\ln n)$, matched by a tight lower bound
for the case of exponential distributions. This improves upon the
previously best-known upper bound of $e/(e-1)\approx 1.58$ for the slightly more
general class of regular distributions. In the worst case (over $n$), we still show
a global upper bound of $1.35$. We give a simple, closed-form description
of our prices which, interestingly enough, relies only on minimal knowledge of the
prior distribution, namely just the expectation of its second-highest order
statistic.

Furthermore, we extend our techniques to handle the more general class of
$\lambda$-regular distributions that interpolate between MHR ($\lambda=0$) and
regular ($\lambda=1$). Our anonymous pricing rule now results in an asymptotic
approximation ratio that ranges smoothly, with respect to $\lambda$, from $1$ (MHR
distributions) to $e/(e-1)$ (regular distributions). Finally, we explicitly give a
class of continuous distributions that provide matching lower bounds, for every
$\lambda$.
\end{abstract}

\begin{CCSXML}
<ccs2012>
<concept>
<concept_id>10003752.10010070.10010099.10010107</concept_id>
<concept_desc>Theory of computation~Computational pricing and auctions</concept_desc>
<concept_significance>500</concept_significance>
</concept>
<concept>
<concept_id>10003752.10003809.10003636</concept_id>
<concept_desc>Theory of computation~Approximation algorithms analysis</concept_desc>
<concept_significance>300</concept_significance>
</concept>
<concept>
<concept_id>10003752.10010070.10010099.10010101</concept_id>
<concept_desc>Theory of computation~Algorithmic mechanism design</concept_desc>
<concept_significance>300</concept_significance>
</concept>
</ccs2012>
\end{CCSXML}

\ccsdesc[500]{Theory of computation~Computational pricing and auctions}
\ccsdesc[300]{Theory of computation~Approximation algorithms analysis}
\ccsdesc[300]{Theory of computation~Algorithmic mechanism design}

\keywords{pricing, optimal auctions, hazard rate distributions, regular distributions, \texorpdfstring{$\lambda$}{lambda}-regularity}

\maketitle

\section{Introduction} In this paper we study a traditional Myersonian auction
setting: an auctioneer has an item to sell and he is facing $n$ potential buyers.
Each buyer has a (private) valuation for the item, and these valuations are i.i.d.\
according to some known continuous probability distribution $F$. You can think of
this valuation, as modelling the amount of money that the buyer is willing to spend
in order to get the item. An auction is a mechanism that receives as input a bid
from each buyer, and then decides if the item is going to be sold and to whom, and
for what price. Our goal is to design auctions that maximize the seller's expected
revenue.

We focus only on truthful auctions, that is, selling mechanisms that give no
incentives to the bidders to lie about their true valuation. Such auctions are both
conceptually and practically convenient. This restriction is essentially without
loss for our revenue maximization objective, due to the Revelation
Principle\footnote{In this paper we will avoid discussing such subtler issues as
implementability and truthfulness, since our goal is to study the performance of
specific and very simple pricing mechanisms. The interested reader is pointed to
\citep{Nisan2007a} as a good starting point for a deeper investigation of those
ideas.}.

In general, such an optimal auction can be rather complicated and even randomized
(aka a lottery). However, in his celebrated result, \citet{Myerson1981a} proved that
(under some standard assumptions on the valuations' distribution) revenue
maximization can be achieved by a very simple deterministic mechanism, namely a
second-price auction paired with a reserve value $r$. In such an auction, all buyers
with bids smaller than $r$ are ignored and the item is sold to the highest bidder
for a price equal to the second-highest bid (or $r$, if no other bidder remains).
Equivalently, you can think of this as the seller himself taking part in the
auction, with a  bid equal to $r$, and simply running a standard, Vickrey
second-price auction; if the auctioneer is the winning bidder, then the item stays
with him, that is, it remains unsold. Furthermore, \Citet{Bulow1996} essentially
showed that we can still guarantee a $1-\frac{1}{n}$ fraction of this optimal
revenue, even if we drop the reserve price $r$ completely and use just a standard
second-price auction.

No matter how simple and powerful the above optimal auction seems, it still requires
explicitly soliciting bids from all buyers and using the second-highest as the
``critical payment''; this is essentially a centralized solution, that asks for a
certain degree of coordination. Arguably, there is an even simpler selling mechanism
which, as a matter of fact, is being used extensively in practice, known as
\emph{anonymous pricing}: the seller simply decides on a selling price $p$, and then
the item goes to any buyer that can afford it (breaking ties arbitrarily); that is,
we sell the item to any bidder with a valuation greater or equal to $p$, for a price
of exactly $p$.

The question we investigate in this paper, is how well can such an extremely simple
selling mechanism perform when compared to an arbitrary, optimal auction. We resolve
this in a very positive way proving that, under natural assumptions on the valuation
distribution, as the number of buyers grows large, anonymous pricing achieves
optimal revenue. More precisely, its approximation ratio is $1+O(\ln\ln n/\ln n)$.
Furthermore, we show that in order to get such a near-optimal performance, the
seller does not really need to have full knowledge of the bidders' population; he
just needs to know the expectation of the second-highest order statistic of the
valuation distribution, that is, (a good estimate of) the \emph{expected}
second-highest bid is enough. Finally, we demonstrate how this approximation ratio
deteriorates as we gradually relax our distributional assumptions.

\subsection{Related Work}
\label{sec:related}
The seminal reference in auction theory is the work of \citet{Myerson1981a} who
completely characterized the revenue-maximizing auction in single-item settings with
bidder valuations drawn from independent (but not necessarily identical)
distributions. Under his standard regularity condition (see
\cref{sec:lambdareg-model}), this optimal auction has a very simple description when
the valuation distributions are identical: it is a second-price auction with a
reserve. Furthermore, there is an elegant, closed-form formula that gives the
reserve price (see \cref{sec:model}).

One can achieve good, constant approximations to that optimal revenue by using even
simpler auctions, namely \emph{anonymous pricing} mechanisms. These mechanisms offer
the same take-it-or-leave-it price to all bidders, and the item is sold to someone
who can afford it (breaking ties arbitrarily). An upper bound of $e/(e-1)\approx
1.58$ on the approximation ratio of anonymous pricing can be shown from the work of
\citet{Chawla2010a}. 
\citet{Blumrosen2008a} study the asymptotic performance of
pricing when the number of bidders grows large and demonstrate a lower bound on the
approximation ratio of $0.88/0.65=1.37$ for anonymous pricing.
If we allow for non-continuous distributions that have point-masses, then
\citet{Duetting2016} provide a matching lower bound of $e/(e-1)$. Although the class
of MHR distributions (see \cref{sec:lambdareg-model}) is a natural restriction of
Myerson's regularity, that has been extensively studied in optimal auction theory,
mechanism design and complexity to derive powerful positive results (see, e.g.,
\citep{Hartline2009a,Bhattacharya2010a,Dhangwatnotai2014a,babaioff2017posting,Daskalakis2012b,Cai2011b,gkyr2015,gkl2017}),
no better bounds are known for anonymous pricing in this class. This is one of our
goals in this paper.

\Citet{SchweizerSzech2019} propose a quantitative notion of regularity, termed
$\lambda$-regularity (see \cref{sec:lambdareg-model}), that allows for a smooth
interpolation between the general class of regularity à la Myerson and its MHR
restriction. This is essentially equivalent to the notion of $\alpha$-strong
regularity of \citet{Cole2014a} (for $\alpha=\lambda -1$) and $\rho$-concavity of
\citet{Caplin:1991aa} (for $\rho=-\lambda$). These parametrizations have proved very
useful in developing a fruitful and more ``fine-grained'' theory of optimal auctions
(see, e.g., \citep{Cole2017,Ewerhart:2013aa,Mares:2011aa,Mares:2014aa}).

Although not immediately related to our model, an important line of work studies the
performance of ``simple'' auctions, such as pricing and auctions with reserves, for
the more general case where bidders' valuations may be non-identically distributed.
In such settings, the elegance of Myerson's characterization is not in effect any
more, and the optimal auction can be rather complicated. Nevertheless, in an
influential paper, \citet{Hartline2009a} showed that, for regular distributions, a
second-price auction with a single anonymous reserve guarantees a $4$-approximation
to the optimal ratio, and also provided a lower bound of $2$. This upper bound was
subsequently improved to $e\approx 2.72$ by \citet{Alaei2015}, achieved even by the
simpler class of anonymous pricing mechanisms. At the same paper, they also provided
a lower bound of $2.23$ for the approximation ratio of anonymous pricing for
non-i.i.d.\ bidders. This was recently improved to $2.62$~\citep{Jin2018}
and proven to be tight by~\citet{Jin:2019aa}.
For bounds on the approximation ratios between different
pricing and reserve mechanisms, under various assumptions on the underlying
distributions and the order of the bidders' arrival,
see~\citep{Alaei2015,Duetting2016,Jin2018,Chawla2010a} and \citep[Chapter~4]{Hartlinea}.
For anonymous pricing beyond the standard setting of linear utilities see the very recent work of~\citet{Feng:2019aa}. 

Finally, we briefly mention that there is a very rich theory about sequential
pricing that deals with dynamically arriving buyers and which is inspired by and
related to secretary-like online problems and the powerful theory of prophet
inequalities. See, e.g.,
\citep{Hajiaghayi2007a,Kleinberg:2012,Alaei2014,Yan2011,Cesa-Bianchi2015,Chawla2010a,Correa:2019aa}. An intriguing equivalence between pricing and threshold stopping rules was recently established by~\citet{Correa_2019}. 
In particular, \citet{Correa2017} showed an upper bound of $1.34$ on the approximation ratio of sequential pricing in the i.i.d.\ auction model which translates to the same bound for the i.i.d.\ prophet setting; this resolved a long-standing open question from~\citet{Hill:1982aa}.
Similarly, it is not difficult to see that our setting of regular i.i.d.\ anonymous pricing corresponds to single-threshold rules for the i.i.d.\ prophet model, for which a tight bound of $1.58$ is known~\citep{Hill:1982aa,Ehsani:2018aa}.

\subsection{Our Results}

In this paper we study the performance of anonymous pricing mechanisms in
single-item auction settings with $n$ bidders that have i.i.d.\ valuations from the
same regular distribution $F$. These mechanisms are extremely simple: the seller
simply offers the same take-it-or-leave-it price $p$ to all potential buyers; the
item is then sold to a buyer that can meet this price, that is, has a valuation
greater than or equal to $p$; the winning bidder pays $p$ to the seller. Our benchmark
is the seller's expected revenue (with respect to his incomplete, prior knowledge of
the buyers' bids via distribution $F$) and we compare against the maximum revenue
achievable by any auction. For our particular model, this optimal auction is a
second-price auction with a reserve~\citep{Myerson1981a}.

Our main result (\cref{sec:upper}; see also \cref{fig:approx_upper}) is an explicit,
closed-form upper bound on the approximation ratio of the revenue of anonymous
pricing for MHR distributions. As the number $n$ of buyers grows large, this ratio
tends to the optimal value of $1$, at a rate of $1+O(\ln\ln n/\ln n)$
(\cref{th:main_upper}). Additionally, we design an upper bound that is fine-tuned to
handle also small values of $n$ (\cref{th:main_upper_small}), and using this we
provide a global, worst-case (with respect to $n$) upper bound of $1.35$ on the
approximation ratio. Previously, only an upper bound of $e/(e-1)\approx 1.58$ was
known (for any value of $n$), holding for the entire class of regular distributions.

In \cref{sec:prior_indi} we demonstrate how the aforementioned positive guarantee on
the revenue of anonymous pricing can still be (within an exponentially decreasing
\emph{additive} constant) achieved even if the seller does not have full knowledge
of the prior distribution $F$ (see \cref{fig:approx_lower}). In particular
(\cref{cor:upper_bound_prior_indi}), we give an explicit formula for such a ``good''
pricing rule that only depends on the expectation of the second-highest order
statistic of $F$.

To complete the picture, in \cref{sec:lower} we prove that our upper bound analysis
is essentially tight, by showing that the exponential distribution provides an
(almost) tight gap instance between the revenue of anonymous pricing and that of the
optimal auction (\cref{th:lower_expo}; see also \cref{fig:approx_lower}).

Finally, in~\cref{sec:lambdareg} we relax our MHR assumption, allowing for
$\lambda$-regular valuation distributions. This provides a smooth, parametrized
generalization of the MHR condition ($\lambda=0$), all the way to the entire class
of standard (Myersonian) regularity ($\lambda=1$). Extending our ideas
from~\cref{sec:MHR}, we are able to provide upper (\cref{thm:regularuppfix}) and
lower (\cref{thm:regularlowfix}) bounds on the approximation ratio of anonymous
pricing, for all $\lambda\in[0,1]$ (see also \cref{fig:regupplowbounds}). We
conclude in \cref{sec:lambda_reg_asy}, by looking how these bounds behave in the
limit as the number of bidders grows arbitrarily large; we derive a tight value for
the approximation ratio as a function of $\lambda$, which ranges smoothly from
optimality for MHR distributions ($\lambda=0$) to $e/(e-1)\approx 1.58$ for the most
general case of regular distributions ($\lambda=1$).

We conclude in~\cref{sec:conclusion} with some open questions for future work.

\subsubsection{Techniques}

Our upper bound technique differs from related previous
approaches~\cite{Chawla2010a,Alaei2015} in that we do not use the ex-ante relaxation
of the revenue-maximization objective. Instead, we deploy explicit upper bounds on
the optimal revenue (\cref{sec:bounds_optimal_rev}) that depend on key parameters of
the valuation distribution $F$, namely its order statistics and its monopoly
reserve. Then, we pair these with a range of critical properties of
$\lambda$-regular distributions that we develop in \cref{sec:MHR_props,sec:lambda_reg-props}. We believe
that some of these auxiliary results may be of independent interest, in particular
the order statistics tail-bounds
of~\cref{lemma:bound_tail_orderstats,lemma:bound_tail_regular} and the
reserve-quantile optimal revenue bound of \cref{lem:bounds_reserve_price} for the
special case of MHR distributions.

Our lower bounds are constructed by means of explicitly defining a family of
$\lambda$-regular, continuous valuation distributions
(see~\eqref{eq:pareto_instances}), that act as ``bad'' instances for any $\lambda$.
We achieve this by generalizing an instance by~\citet{Duetting2016} (tailored to the
special case of regular distributions) to work for general $\lambda$'s, while at the
same time ``smoothing it out'' to satisfy continuity.

\section{Model and Notation}
\label{sec:model}

A seller wants to sell a single item to $n\geq 2$ bidders. The valuations of the
bidders for the item are i.i.d.\ from a continuous probability distribution
supported over an interval $D_F\subseteq [0,\infty)$, with cdf $F$ and pdf $f$.
Throughout this paper we will assume that $F$ is $\lambda$-regular, for some real parameter $\lambda\in[0,1]$ (for formal definitions and discussion, see~\cref{sec:lambdareg-model} right below).
For a random variable $X\sim F$ drawn from $F$ and $1\leq k\leq n$, we will use
$\orderstat{X}{k}{n}$ to denote the $k$-th lowest order statistic out of $n$ i.i.d.\
draws from $F$. That is, $\orderstat{X}{1}{n}\leq \orderstat{X}{2}{n}\leq\dots\leq
\orderstat{X}{n}{n}$. For completeness and ease of reference, we discuss some useful
properties of order statistics in \cref{append:prob_basics}. 

A pricing mechanism that offers a take-it-or-leave-it price of $p\in D_F$ to all
bidders gives to the seller an expected revenue of
$$
\pricep{F}{n}{p}\equiv p[1-F^n(p)]\;,
$$
since the probability of no bidder being able to afford price $p$ is $F^n(p)$. We
will refer to such a mechanism simply as \emph{(anonymous) pricing}. Thus, the
optimal (maximum) revenue achievable via pricing is
$$
\price{F}{n}\equiv\sup_{p\in D_F}\pricep{F}{n}{p}\;.
$$
On the other hand, as discussed in the introduction, the optimal revenue attainable
by any mechanism may be higher; as a matter of fact, \citet{Myerson1981a} showed
that it is achieved by a second-price auction with a reserve equal to the
\emph{monopoly reserve}\footnote{As will become clearer in the
following~\cref{sec:lambdareg-model}, this quantity is well-defined for all
$\lambda$-regular distributions with $\lambda\in[0,1)$. For the special case of
$\lambda=1$, it might happen that there are multiple maximizers
in~\eqref{eq:monopoly_price}, or even none. To formally deal with such pathological
cases, in the former we can take $r^*=\inf\argmax_{r\in D_F} r(1-F(r))$. In the
latter we can slightly abuse notation and use $r^*=\infty$, which is essentially
equivalent to using an arbitrarily large price to approximate within arbitrary
accuracy the value of $\sup_{r\in D_F} r(1-F(r))$; then, the corresponding reserve
quantile is $q^*=0$ and maximizes the revenue curve in~\eqref{eq:revenue_curve} (for a
more detailed and rigorous discussion of this issue see, e.g., \citep[Appendix~1]{Fu2015}
and \citep[Appendix~C]{gk2014}.) The aforementioned choices do not affect the
validity of any of the results in the rest of our paper.}
\begin{equation}
\label{eq:monopoly_price}
r^*\equiv\argmax_{r\in D_F} r(1-F(r))\;. 
\end{equation}
of the valuation
distribution. We denote this optimum revenue by $\myerson{F}{n}$, and it can be
shown that
$$
\myerson{F}{n}\equiv\expect{\max\sset{0,\phi(\orderstat{X}{n}{n})}}\;,
$$
where $\phi(x)=x-\frac{1-F(x)}{f(x)}$ is the (nondecreasing) virtual valuation
function of $F$ (see \cref{sec:lambdareg-model}) and $\orderstat{X}{n}{n}$ its maximum
order statistic. Keep in mind that, due to the monotonicity of $\phi$ and the
definition of the reserve $r^*$, we know that $\phi(x)\geq 0$ for all $x\geq r^*$.

Sometimes it is more convenient to work in quantile space instead of the actual
valuation domain. More precisely, the quantile of distribution $F$ corresponding to
a value $x\in D_F$ is $q(x)=1-F(x)$. Using this, we can define what is known as the
\emph{revenue curve} of distribution $F$, by 
\begin{equation}
\label{eq:revenue_curve}
R(q)\equiv F^{-1}(1-q)\cdot q\;.
\end{equation} 
In other
words, if $p\in D_F$ is a price and $q$ is its corresponding quantile, then $R(q)$
is the expected revenue of selling the item to a single bidder, using a price $p$.
Thus, the monopoly reserve quantile $q^*$ that corresponds to the monopoly reserve
$r^*$ defined in~\eqref{eq:monopoly_price} is exactly a maximizer of the revenue curve $R(q)$. So, for a
single bidder $(n=1)$: $$\myerson{F}{1}=\price{F}{1}=\sup_{p\in
D_F}p(1-F(p))=\sup_{q\in [0,1]} R(q)=R(q^*)\;.$$ In general though for more players
($n\geq 2$) this is not the case, and our goal in this paper is exactly to study how
well the optimal revenue $\myerson{F}{n}$ can be approximated by pricing
$\price{F}{n}$. That is, we want to bound the following approximation ratio:
$$
\apx{F}{n}\equiv\frac{\myerson{F}{n}}{\price{F}{n}}\;.
$$

Finally, we use $H_n$ to denote the $n$-th
harmonic number $H_n=\sum_{i=1}^n\frac{1}{i}$, $\gamma\approx 0.577$ for the
Euler-Mascheroni constant (see also \cref{lem:harmonic_bounds_sequence}) and $\Gamma$, $\mathrm{B}$ for the standard gamma and beta functions: $\Gamma(x)=\int_0^{\infty}t^{x-1}e^t\,dt$, 
$\mathrm{B}(x,y)=\int_0^1 t^{x-1} (1-t)^{y-1}\,dt$. 
Recall that $\Gamma(n+1)=n!$ for any nonnegative integer $n$ and $\mathrm{B}(x,y)=\frac{\Gamma(x)\Gamma(y)}{\Gamma(x+y)}$  for all reals $x,y>0$ (see, e.g., \citep[Chapter~5]{olveretal:10}).

\subsection{\texorpdfstring{$\lambda$}{lambda}-Regular Distributions}
\label{sec:lambdareg-model}

Consider a continuous distribution $F$ supported over an interval $D_F$ of
nonnegative reals, and a real parameter $\lambda\in[0,1]$. We will say that $F$ is
\emph{$\lambda$-regular} if its \emph{$\lambda$-generalized virtual valuation}
function $$\phi_\lambda(x) \equiv \lambda \cdot x-\frac{1-F(x)}{f(x)}$$ is
monotonically nondecreasing in $D_F$. From~\citet[Proposition~1(iii)]{SchweizerSzech2019}, this can be equivalently restated as
$$
\begin{cases}
[1-F(x)]^{-\lambda}\;\;\text{is convex}, &\text{if}\;\; \lambda\in (0,1]\;,\\
\ln[1-F(x)]\;\;\text{is concave}, &\text{if}\;\;\lambda=0\;.
\end{cases}
$$ 
We will use $\mathcal{D}_{\lambda}$ to denote the family of all $\lambda$-regular
distributions. It is not difficult to see that, for any $0\leq\lambda \leq
\lambda'\leq 1$, any $\lambda$-regular distribution is also $\lambda'$-regular. In
other words, $\mathcal{D}_{\lambda}\subseteq\mathcal{D}_{\lambda'}$. As a matter of fact, this hierarchy is strict (as demonstrated by the distributions defined in~\eqref{eq:pareto_instances}). 

For the special case of $\lambda=1$, the above definition recovers exactly the
standard notion of regularity à la~\citet{Myerson1981a}. Based on this, for
simplicity and consistency, we will feel free in such cases to drop the $\lambda=1$
subscripts and refer to the corresponding distributions simply as \emph{regular} and
to the function $\phi_1(x)=\phi(x)$ as the \emph{virtual valuation}. An equivalent
way of looking at regularity, is that the revenue curve defined
in~\eqref{eq:revenue_curve} must be concave.

On the other extreme of the range, for $\lambda=0$ we get the definition of
\emph{Monotone Hazard Rate (MHR)} distributions.\footnote{A standard reference for
MHR distributions is that from~\citet{Barlow1964}. For an in-depth treatment of the
subject, we refer to the book of~\citet[Chapter~2]{Barlow1996}.} Intuitively, MHR
distributions have exponentially decreasing tails. Although they represent the
strictest class within the $\lambda$-regularity hierarchy, they are still general
enough to give rise to a wide family of natural distributions, like the uniform,
exponential, normal and gamma.

Given the above discussion, $\lambda$-regularity can be seen as a quantitative
measure of the ``regularity'' of the distribution, interpolating smoothly between
MHR ($\lambda=0$) and Myerson-regular ($\lambda=1$) distributions. It will also be
convenient to define, for each $\lambda\in[0,1]$, the worst-case approximation ratio
among all $\lambda$-regular distributions:

\[\apx{n}{\lambda}\equiv\sup_{F\in\mathcal{D}_{\lambda}}\apx{F}{n}=\sup_{F\in\mathcal{D}_{\lambda}}\frac{\myerson{F}{n}}{\price{F}{n}}\;.\]
This will be the main quantity of interest throughout our paper.

\section{Preliminaries}
\subsection{Bounds on the Optimal Revenue}
\label{sec:bounds_optimal_rev}
In this section we collect the bounds on the
optimal revenue $\myerson{F}{n}$ that we will use for our main positive results in
\cref{sec:upper,sec:lambdareg} to bound the approximation ratio of pricing. They rely on the regularity of the valuation distribution. The first one is essentially a refinement of
the well-known Bulow-Klemperer bound \cite{Bulow1996} (see, e.g., \citep[Corollary~5.3]{Hartlinea}), and it was proven by
\citet{Fu2015}:
\begin{lemma}[\citet{Fu2015}]
\label{lemma:opt_bound_Fu_et_al} For $n$ bidders with i.i.d.\ values from a
regular distribution $F$,
$$
\myerson{F}{n}\leq \expectsmall{\orderstat{X}{n-1}{n}} + R(q^*)(1-q^*)^{n-1}\;,
$$
where $X\sim F$ and $R$ is the revenue curve of $F$ and $q^*$ is the quantile
corresponding to the monopoly reserve price $r^*$ of $F$, $q^*=1-F(r^*)$.
\end{lemma}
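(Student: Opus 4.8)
The plan is to start from Myerson's expression $\myerson{F}{n}=\expect{\max\{0,\phi(\orderstat{X}{n}{n})\}}$ and split the expectation according to whether the maximum order statistic clears the monopoly reserve. Concretely, I would write $\myerson{F}{n}=\expect{\phi(\orderstat{X}{n}{n})\cdot\mathbf 1[\orderstat{X}{n}{n}\geq r^*]}$, using that $\phi\geq 0$ exactly on $[r^*,\infty)$ (as recalled just after the statement of Myerson's formula). The key idea is then to relate this truncated expectation of the virtual value of the \emph{highest} bidder to the actual value of the \emph{second-highest} bidder, $\expectsmall{\orderstat{X}{n-1}{n}}$, paying only the error term $R(q^*)(1-q^*)^{n-1}$ whenever the highest bidder is above the reserve but the second-highest is not.

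First I would move to quantile space, where things are cleanest: set $q=q(x)=1-F(x)$, so that the revenue curve is $R(q)=F^{-1}(1-q)\cdot q$, and recall the standard identity $\phi(x)=\frac{d}{dq}\big(R(q)\big)\big|_{q=q(x)}$ up to sign conventions, i.e.\ the virtual value is (minus) the derivative of the revenue curve in quantile space; in particular regularity of $F$ is exactly concavity of $R$. Next, using the density of the top order statistic and integrating by parts, I would rewrite $\expect{\phi(\orderstat{X}{n}{n})\mathbf 1[\cdot\geq r^*]}$ as an integral over quantiles $q\in[0,q^*]$ against $d(1-(1-q)^n)$, and similarly express $\expectsmall{\orderstat{X}{n-1}{n}}$ via its order-statistic density. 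Concavity of $R$ lets me bound the integrand pointwise: the tangent-line (or chord) inequality for the concave curve $R$ gives, for each quantile $q\leq q^*$, a bound of the form $R(q)+\phi\cdot(\text{something})\leq$ the contribution coming from the $(n-1)$-th statistic, and the leftover mass concentrated at quantile $q^*$ contributes exactly $R(q^*)$ weighted by the probability $(1-q^*)^{n-1}$ that fewer than $n-1$ bidders exceed $r^*$ — this is where the additive error term comes from.

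I expect the main obstacle to be the bookkeeping in the integration-by-parts / change-of-variables step: one must carefully track the boundary terms at $q=0$ and $q=q^*$ (and handle the pathological $\lambda=1$ cases where $q^*$ might be $0$ or $r^*=\infty$, as flagged in the paper's footnote), and then invoke concavity of $R$ in just the right pointwise form so that the $n$-fold and $(n-1)$-fold binomial weights line up. A clean way to organize this is to use the comparison $1-(1-q)^n \le n\,q\,(1-q)^{n-1} + (1-(1-q^*)^{n-1})\cdot\mathbf 1[q\ge q^*]$-type inequalities on the relevant cdfs of the order statistics, together with $R$ concave and $R(q)\le R(q^*)$, and then integrate. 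Once the pointwise inequality between the two integrands (plus the atom at $q^*$) is established, integrating over $[0,1]$ and recognizing $\int_0^1 R(q)\,d(\text{cdf of }\orderstat{Q}{2}{n})$ as $\expectsmall{\orderstat{X}{n-1}{n}}$ finishes the argument. Since this is exactly the statement attributed to \citet{Fu2015}, I would also double-check my constant on the error term against their Theorem rather than re-deriving it from scratch.
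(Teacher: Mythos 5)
The paper itself does not prove this lemma --- it is imported as a black box from \citet{Fu2015} --- so your attempt has to stand on its own. Your overall strategy is the right one and can be completed: pass to quantile space where $\phi(x)=R'(q)$ for $q=1-F(x)$ (exactly, no sign flip), write $\myerson{F}{n}=\int_0^{q^*}R'(q)\,n(1-q)^{n-1}\,dq$ since the lowest quantile has density $n(1-q)^{n-1}$ and $R'\geq 0$ precisely on $[0,q^*]$, integrate by parts, and use concavity of $R$. However, the plan as written has a genuine gap exactly where the constant of the error term is decided. The error event you invoke --- ``highest bidder above the reserve but second-highest below'' --- has probability $nq^*(1-q^*)^{n-1}$, so bookkeeping along those lines only delivers the weaker bound $\expectsmall{\orderstat{X}{n-1}{n}}+n\,R(q^*)(1-q^*)^{n-1}$. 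Moreover, the one concrete inequality you propose, $1-(1-q)^n\le nq(1-q)^{n-1}+\bigl(1-(1-q^*)^{n-1}\bigr)\mathbf{1}[q\ge q^*]$, is false: at $q=1$ the left side equals $1$ while the right side equals $1-(1-q^*)^{n-1}<1$.

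The missing idea is the contribution of the quantile range $[q^*,1]$, where you never apply concavity. Integration by parts gives $\myerson{F}{n}= nR(q^*)(1-q^*)^{n-1}-nR(0)+n(n-1)\int_0^{q^*}R(q)(1-q)^{n-2}\,dq$, while $\expectsmall{\orderstat{X}{n-1}{n}}=n(n-1)\int_0^{1}R(q)(1-q)^{n-2}\,dq$ (the second-lowest quantile has density $n(n-1)q(1-q)^{n-2}$ and the value at quantile $q$ is $R(q)/q$). Concavity of $R$ together with $R(1)\ge 0$ yields the chord bound $R(q)\ge R(q^*)\frac{1-q}{1-q^*}$ on $[q^*,1]$, hence $n(n-1)\int_{q^*}^{1}R(q)(1-q)^{n-2}\,dq\ge (n-1)R(q^*)(1-q^*)^{n-1}$; it is precisely this leftover piece of the second-order-statistic integral that knocks the boundary term $nR(q^*)(1-q^*)^{n-1}$ down to the claimed $R(q^*)(1-q^*)^{n-1}$ (using also $R(0)\ge 0$). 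So your ingredients are the right ones, but without this step --- concavity applied on $[q^*,1]$, not on $[0,q^*]$ --- the argument as sketched cannot produce the stated constant, only the version inflated by a factor of $n$.
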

As an immediate consequence, the bound above holds also for the more restricted classes of
$\lambda$-regular distributions, for any $\lambda\in [0,1]$, and in particular for
MHR distributions (see also the discussion in \cref{sec:model}).

Our second bound on the optimal revenue, designed particularly for MHR
distributions, is a new one and might be of independent interest also for future
work:
\begin{lemma}
\label{lem:opt_bound_2} For every MHR distribution $F$ with monopoly reserve price
$r^*$ and quantile $q^*=1-F(r^*)$, and any positive integer $n$,
$$
\myerson{F}{n} \leq  r^*\int_{0}^{q^*}\frac{1-(1-z)^{n}}{z}\; dz\;.
$$
\end{lemma}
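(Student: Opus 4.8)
The plan is to pass to quantile space and compare the revenue curve of $F$ against the (extremal) revenue curve of the exponential distribution. First I would rewrite the optimal revenue in quantile form. Starting from $\myerson{F}{n}=\expect{\max\{0,\phi(\orderstat{X}{n}{n})\}}$ and using that $\phi$ is nondecreasing with $\phi(x)\geq 0$ exactly for $x\geq r^*$, the expectation reduces to an integral over the region above $r^*$; the substitution $q=1-F(y)$ (which turns the density $nF(y)^{n-1}f(y)$ of the top order statistic into $n(1-q)^{n-1}$) together with the identity $R'(q)=\phi(F^{-1}(1-q))$ gives
\[
\myerson{F}{n}=\int_0^{q^*} R'(q)\,n(1-q)^{n-1}\,dq .
\]

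Next I would bring in the MHR bound on the revenue curve, namely that $R(q)\leq \hat R(q):=r^*q\bigl(1-\ln(q/q^*)\bigr)$ for all $q\in(0,q^*]$ (this is essentially the reserve-quantile bound of \cref{lem:bounds_reserve_price}; it follows quickly from the hazard rate $h=f/(1-F)$ being nondecreasing and from $r^*=1/h(r^*)$, since for $p\geq r^*$ one has $\ln(q^*/q)=\int_{r^*}^{p}h(t)\,dt\geq (p-r^*)h(r^*)=p/r^*-1$). Note that $\hat R$ is smooth on $(0,q^*]$, that $\hat R(q)\to 0$ as $q\to 0$, and that $\hat R(q^*)=r^*q^*=R(q^*)$; hence $D:=\hat R-R$ is nonnegative on $[0,q^*]$ and vanishes at both endpoints, while $\hat R'(q)=r^*\ln(q^*/q)$.

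The crucial step is then a monotone-weight comparison: since $g(q):=n(1-q)^{n-1}$ is nonincreasing on $[0,q^*]$, integration by parts yields
\[
\int_0^{q^*}\bigl(\hat R'(q)-R'(q)\bigr)g(q)\,dq=\bigl[D(q)g(q)\bigr]_0^{q^*}-\int_0^{q^*}D(q)g'(q)\,dq=-\int_0^{q^*}D(q)g'(q)\,dq\geq 0,
\]
because $D\geq 0$ and $g'\leq 0$. Combined with the first display, this gives $\myerson{F}{n}\leq\int_0^{q^*} r^*\ln(q^*/q)\,n(1-q)^{n-1}\,dq$. It then remains to evaluate the right-hand side: one more integration by parts, using the antiderivative $1-(1-q)^n$ of $n(1-q)^{n-1}$ (which vanishes at $q=0$ and kills the boundary terms, since $q\ln(1/q)\to 0$), rewrites it as $r^*\int_0^{q^*}\frac{1-(1-q)^n}{q}\,dq$ — exactly the claimed bound.

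I expect the main obstacle to be the revenue-curve bound $R(q)\leq r^*q(1-\ln(q/q^*))$ together with a careful treatment of the boundary behaviour: the limit of $R$ and of the integrands as $q\to 0$, and the degenerate case in which the monopoly reserve sits at the left endpoint of the support so that $q^*=1$. Everything else is routine manipulation; as a sanity check, the exponential distribution has revenue curve exactly $\hat R$, so the whole chain of inequalities is tight for it, which is also consistent with the lower-bound instances used later in the paper.
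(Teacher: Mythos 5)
Your proposal is correct, but it takes a genuinely different route from the paper's proof of \cref{lem:opt_bound_2}. The paper stays in value space: it uses the pointwise inequality $1-F(x)\leq r^*f(x)$ for $x\geq r^*$ (MHR monotonicity of the hazard rate combined with $\phi(r^*)\geq 0$, see \eqref{eq:helper_bound_reserve}), decomposes $\myerson{F}{n}$ as $\pricep{F}{n}{r^*}$ plus a tail integral by adding and subtracting $\frac{\partial}{\partial x}\pricep{F}{n}{x}$, bounds the tail integrand directly, and only then passes to quantiles. You instead integrate the same hazard-rate information once to obtain the revenue-curve domination $R(q)\leq \hat R(q)=r^*q\bigl(1-\ln(q/q^*)\bigr)$ on $(0,q^*]$, and then transfer this from $R$ to $R'$ against the nonincreasing weight $n(1-q)^{n-1}$ via integration by parts, using that $D=\hat R-R\geq 0$ vanishes at both endpoints; the final evaluation by parts is routine. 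Both arguments rest on exactly the same two MHR facts ($h$ nondecreasing and $h(r^*)\geq 1/r^*$), but your deployment is more conceptual: it exhibits the bound as a comparison against the (truncated) exponential revenue curve, which makes the tightness for the exponential instance of \cref{th:lower_expo} transparent, at the cost of needing the monotone-weight/majorization step and a little care with integrability of $R'$ near $q=0$ (harmless, since $\int_0^{q^*}R'(q)n(1-q)^{n-1}dq=\myerson{F}{n}<\infty$ and $\ln(q^*/q)$ is integrable). Two small corrections: you should write $h(r^*)\geq 1/r^*$ rather than $r^*=1/h(r^*)$ (equality only holds for an interior monopoly reserve, and the inequality, which follows from $\phi(r^*)\geq 0$, is all you use); and your $\hat R$ bound is not literally Property~2 of \cref{lem:bounds_reserve_price} (which compares $r^*$ to the mean), though your self-contained derivation of it is correct.
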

\begin{proof} Fix an MHR distribution $F$, with monopoly reserve price $r^*$ and
corresponding quantile $q^*$. Then, we know that for the virtual valuation (see
\cref{sec:model}) it is $$\phi(r^*)=r^*-\frac{1-F(r^*)}{f(r^*)}\geq 0\;.$$ Also, from
the MHR condition, for any $x\geq r^*$ it must be that $$\frac{f(r^*)}{1-F(r^*)} \leq
\frac{f(x)}{1-F(x)}\;.$$ Combining the above we get that, for all $x\geq r^*$,
\begin{equation}
\label{eq:helper_bound_reserve} 1-F(x)\leq f(x)\cdot \frac{1-F(r^*)}{f(r^*)} \leq
r^* f(x)\;.
\end{equation}
Fix also a positive integer $n$, and let $X\sim F$. Then (see also
\cref{append:prob_basics}) the maximum order statistic $\orderstat{X}{n}{n}$ is
distributed according to $F^n$. Observe that (see also \cref{sec:model})
$$
\frac{\partial\, \pricep{F}{n}{x}}{\partial\, x}=\frac{\partial\,
x(1-F^n(x))}{\partial\, x}=1-F^n(x)-nxF^{n-1}(x)f(x)
$$
and thus
\begin{align*}
\myerson{F}{n}
    &= \expect{\max\sset{0,\phi(\orderstat{X}{n}{n})}}\\
    &= \int_{r^*}^{\infty} \phi(x)\, dF^n(x)\\
    &= \int_{r^*}^{\infty} \left(x-\frac{1-F(x)}{f(x)}\right)n F^{n-1}(x) f(x)\, dx\\
    &= \int_{r^*}^{\infty} nxF^{n-1}(x)f(x)-n(1-F(x))F^{n-1}(x)  \, dx\\
    &= \int_{r^*}^{\infty} - \frac{\partial\, \pricep{F}{n}{x}}{\partial\, x} + 1-F^n(x)-n(1-F(x))F^{n-1}(x) \, dx\\
    &= \pricep{F}{n}{r^*}+ \int_{r^*}^{\infty} 1+(n-1)F^n(x)-nF^{n-1}(x) \, dx\;.
\end{align*}
Also, due to \eqref{eq:helper_bound_reserve}, for all $x\geq r^*$:
\begin{align*}
1-F^n(x)-n(1-F(x))F^{n-1}(x)
  &=\left[\frac{1-F^n(x)}{1-F(x)}-nF^{n-1}(x) \right] (1-F(x))\\
  &\leq r^* \left[\frac{1-F^n(x)}{1-F(x)}-nF^{n-1}(x) \right] f(x)\;.
\end{align*}

By performing a change of variable to quantile space, that is setting $z=1-F(x)$ and
observing that $\frac{d\, z}{d\, x}=-f(x)$ and $1-F(r^*)=q^*$, we get that
\begin{align*}
\int_{r^*}^{\infty} 1-F^n(x)-n(1-F(x))F^{n-1}(x) \, dx
    &\leq r^*\int_0^{q^*}\frac{1-(1-z)^n}{z}-n(1-z)^{n-1}\, dz\\
    &= -r^*\left[1-(1-q^*)^n\right]+r^*\int_0^{q^*}\frac{1-(1-z)^n}{z}\, dz\\
    &= -\pricep{F}{n}{r^*}+r^*\int_0^{q^*}\frac{1-(1-z)^n}{z}\, dz\;.
\end{align*}
Thus,
$$
\myerson{F}{n} \leq r^*\int_0^{q^*}\frac{1-(1-z)^n}{z}\, dz\;.
$$\end{proof}

\subsection{MHR Distributions}
\label{sec:MHR_props} 
In this section we state some properties of MHR distributions that
will play a critical role into deriving our main results in the rest of the paper.
\Cref{lemma:bound_tail_orderstats}, in particular, might be of independent
interest, since it is providing powerful tail-bounds with respect to the order
statistics of the distribution:

\begin{lemma}
\label{lemma:bound_tail_orderstats} For any continuous MHR random variable $X$,
integers $1\leq k \leq n$ and real $c\in[0,1]$,
$$
\prob{X < c\cdot\expectsmall{\orderstat{X}{k}{n}}}\leq 1- e^{-c(H_{n}-H_{n-k})}\;.
$$
\end{lemma}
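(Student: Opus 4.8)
The plan is to reduce everything to the exponential distribution via the standard quantile transform, and then apply Jensen's inequality twice. First I would set $G(x)\equiv -\ln(1-F(x))$, so that $1-F(x)=e^{-G(x)}$. For a continuous $F$ the random variable $G(X)$ is exactly standard exponential, $G(X)\sim\mathrm{Exp}(1)$, and $X=G^{-1}(G(X))$ almost surely, where $G^{-1}(t)=F^{-1}(1-e^{-t})$. The MHR assumption says precisely that $\ln(1-F)$ is concave, i.e.\ $G$ is convex; since $G$ is also nondecreasing with $G(0)=0$, its inverse $G^{-1}$ is \emph{concave}, nondecreasing, and passes through the origin. Writing $E\sim\mathrm{Exp}(1)$ we thus have $X=G^{-1}(E)$ (a.s.), and since $G^{-1}$ is monotone, $\orderstat{X}{k}{n}=G^{-1}(\orderstat{E}{k}{n})$, where $\orderstat{E}{k}{n}$ is the $k$-th order statistic of $n$ i.i.d.\ $\mathrm{Exp}(1)$ variables.

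Next I would bound $\expect{\orderstat{X}{k}{n}}$. By concavity of $G^{-1}$ and Jensen's inequality,
\[
\expect{\orderstat{X}{k}{n}}=\expect{G^{-1}(\orderstat{E}{k}{n})}\le G^{-1}\!\left(\expect{\orderstat{E}{k}{n}}\right)=G^{-1}(H_n-H_{n-k}),
\]
using the classical identity $\expect{\orderstat{E}{k}{n}}=\sum_{i=n-k+1}^{n}\frac1i=H_n-H_{n-k}$ (the spacings of exponential order statistics are independent exponentials of rates $n,n-1,\dots,n-k+1$). Then I would handle the multiplicative constant $c$: since $G^{-1}$ is concave with $G^{-1}(0)=0$, it is superhomogeneous on $[0,1]$, i.e.\ $c\,G^{-1}(t)\le G^{-1}(ct)$ for every $c\in[0,1]$, whence
\[
c\cdot\expect{\orderstat{X}{k}{n}}\le c\cdot G^{-1}(H_n-H_{n-k})\le G^{-1}\!\left(c(H_n-H_{n-k})\right).
\]
Applying the nondecreasing cdf $F$ to both sides and using $F(G^{-1}(t))=1-e^{-t}$, together with $\prob{X<y}=F(y)$ by continuity of $F$, I would conclude
\[
\prob{X<c\cdot\expect{\orderstat{X}{k}{n}}}=F\!\left(c\cdot\expect{\orderstat{X}{k}{n}}\right)\le F\!\left(G^{-1}(c(H_n-H_{n-k}))\right)=1-e^{-c(H_n-H_{n-k})},
\]
which is exactly the claimed bound.

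The whole conceptual content sits in the first paragraph: realizing that the MHR condition makes the valuation a concave, origin-preserving image of a standard exponential; after that, the two estimates are merely Jensen's inequality plus the elementary scaling property of concave functions, combined with the well-known mean of an exponential order statistic. Accordingly, I expect the only real obstacle to be the routine technical bookkeeping: checking that $G^{-1}=F^{-1}(1-e^{-\cdot})$ is well defined and stays concave even when $F$ is not strictly increasing or is supported on a bounded interval, that $F(X)$ is uniform so $G(X)$ is genuinely $\mathrm{Exp}(1)$, and that the endpoints of $D_F$ cause no trouble (if $c\cdot\expect{\orderstat{X}{k}{n}}$ falls below $\inf D_F$ the claim is trivial, and otherwise $G$ ranges over all of $[0,\infty)$ so $G^{-1}$ is defined where we need it).
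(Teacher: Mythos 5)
Your proposal is correct and is essentially the paper's own argument in a different guise: writing $X=G^{-1}(E)$ with $G=-\ln(1-F)$ convex and $E$ exponential, and applying Jensen plus the scaling property of a concave function through the origin, is step-for-step equivalent to the paper's use of concavity of $c\mapsto\ln(1-F(c\nu))$ together with Jensen applied to $\ln(1-F(\orderstat{X}{k}{n}))$ and the explicit computation $\expectsmall{\orderstat{E}{k}{n}}=H_n-H_{n-k}$. The only cosmetic difference is that you use the monotone coupling to avoid the paper's change-of-variables integral, and the endpoint/inverse bookkeeping you flag is as harmless as the analogous implicit steps in the paper.
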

\begin{proof}\
Let $E$ denote an exponential random variable and also let $F$ and $G$ denote the cumulative probability
functions of $X$ and $\orderstat{X}{k}{n}$, respectively. Then (see also
\cref{append:prob_basics})
\begin{equation}
\label{eq:pdf_ordstat_1}
\frac{dG(y)}{dy}=k\binom{n}{k}F^{k-1}(y)(1-F(y))^{n-k} \frac{dF(y)}{dy}\;,
\end{equation} for almost all $y\in [0,\infty)$. To simplify notation, also let $\nu =
\expectsmall{\orderstat{X}{k}{n}}$. Our goal then is to upper-bound
$F(c\nu)$, i.e.\ lower-bound $1-F(c\nu)$. Since $F$ is MHR, $\zeta(c)=\ln (1-F(c\nu))$ is a
concave function of $c$ with $\zeta(0)=0$; thus, assuming $c\in[0,1]$ and applying Jensen's inequality,
\begin{equation*}
\ln (1-F(c\nu)) \geq c\ln(1-F(\nu))\geq c\expect{\ln(1-F(\orderstat{X}{k}{n}))}=c\int_0^\infty \ln(1-F(y))\,d G(y)\;;
\end{equation*}
the integral can be further simplified using \eqref{eq:pdf_ordstat_1} and the changes of variables $u=F(y)$, $t=-\ln(1-u)$;
\begin{align*}
c\int_0^\infty \ln(1-F(y))\,d G(y)
    &=c k\binom{n}{k} \int_0^\infty \ln(1-F(y)) F^{k-1}(y)(1-F(y))^{n-k} \,d F(y)\\
    &=c k\binom{n}{k} \int_0^1 \ln(1-u) u^{k-1}(1-u)^{n-k} \,d u\\
    &=-c k\binom{n}{k} \int_0^\infty t (1-e^{-t})^{k-1}(e^{-t})^{n-k} \,d t\\
    &= -c \expect{\orderstat{E}{k}{n}}\\
    &= -c(H_{n}-H_{n-k})\;,
\end{align*}
the last equality following from
\cref{append:prob_basics}. So, applying the exponential function on both sides,
finally we get the desired
$$
1-F(c\nu) \geq e^{-c(H_{n}-H_{n-k})}\;.
$$
\end{proof}

The next lemma states some useful bounds on the monopoly reserve of an MHR distribution:
\begin{lemma}
\label{lem:bounds_reserve_price}
For any MHR distribution with expectation $\mu$, monopoly reserve $r^*$ and
corresponding quantile $q^*$:
\begin{enumerate}
\item $q^*\geq 1/e$;
\item $\ln(1/q^*)\cdot \mu \leq r^* \leq \frac{\ln(1/q^*)}{1-q^*}\cdot \mu$.
\end{enumerate}
\end{lemma}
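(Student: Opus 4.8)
The plan is to exploit the two equivalent characterizations of MHR: that the hazard rate $h(x)=f(x)/(1-F(x))$ is nondecreasing, and equivalently that $\ln(1-F(x))$ is concave (with value $0$ at the left endpoint of the support, which for an MHR distribution we may take to be $x=0$ after a harmless translation, or handle directly). Writing $x=r^*$ and $q^*=1-F(r^*)$, the defining inequality $\phi(r^*)\geq 0$ says exactly $r^*\geq \frac{1-F(r^*)}{f(r^*)}=1/h(r^*)$, so $r^*h(r^*)\geq 1$. The key identity I would use is the integral representation of the mean of a nonnegative random variable, $\mu=\int_0^\infty (1-F(x))\,dx=\int_0^\infty e^{-\int_0^x h(t)\,dt}\,dx$.

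\emph{Part (1), $q^*\geq 1/e$.} I would argue as follows. By concavity of $\psi(x)\equiv\ln(1-F(x))$ with $\psi(0)=0$, we have $\psi(x)\geq \frac{x}{r^*}\psi(r^*)$ for all $x\in[0,r^*]$ and $\psi(x)\leq \frac{x}{r^*}\psi(r^*)$ for all $x\geq r^*$ (a concave function through the origin lies above its chords on $[0,r^*]$ and below their extensions beyond $r^*$). Hence $1-F(x)\geq (q^*)^{x/r^*}$ on $[0,r^*]$, which by integrating gives a lower bound on $\mu$ of a form that, combined with the first-order condition, should force $q^*\geq 1/e$. Concretely I expect the cleanest route is: the function $q(r)=1-F(r)$ satisfies $q(r^*)=q^*$ and, since $\ln q$ is concave with $\ln q(0)=0$, the derivative $(\ln q)'(r^*)=-h(r^*)$ satisfies $-h(r^*)\geq \frac{\ln q^*}{r^*}$, i.e. $r^*h(r^*)\leq \ln(1/q^*)$. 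But we also established $r^*h(r^*)\geq 1$. Chaining these gives $\ln(1/q^*)\geq 1$, i.e. $q^*\leq 1/e$ — wait, that is the wrong direction, so I would instead use concavity on the interval $[r^*,\infty)$: there $-h$ is nonincreasing and $\ln q(x)\leq \ln q^* - h(r^*)(x-r^*)$; integrating $q(x)=1-F(x)$ over $[0,\infty)$ and splitting at $r^*$, using $1-F\leq 1$ on $[0,r^*]$ and the exponential tail bound beyond, yields $\mu\leq r^* + q^*/h(r^*)\leq r^*(1+q^*)$. Comparing with a matching lower bound on $\mu$ will pin down the claimed direction; I will need to be careful which chord inequality is applied on which interval.

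\emph{Part (2), the sandwich $\ln(1/q^*)\cdot\mu\leq r^*\leq \frac{\ln(1/q^*)}{1-q^*}\cdot\mu$.} Both bounds come directly from the concavity of $\psi(x)=\ln(1-F(x))$ with $\psi(0)=0$, now used to sandwich $\mu=\int_0^\infty e^{\psi(x)}\,dx$. For the \emph{lower} bound on $r^*$: on $[0,r^*]$ we have $e^{\psi(x)}\geq e^{(x/r^*)\ln q^*}$, so $\mu\geq\int_0^{r^*} (q^*)^{x/r^*}\,dx=\frac{r^*(1-q^*)}{\ln(1/q^*)}$ — but this gives $r^*\leq \frac{\ln(1/q^*)}{1-q^*}\mu$, the \emph{upper} bound. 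For the lower bound $r^*\geq \ln(1/q^*)\cdot\mu$, I would use that concavity beyond $r^*$ together with $r^*h(r^*)\ge 1$ forces the tail to decay fast; specifically $\psi(x)\le (x/r^*)\ln q^*$ for all $x\ge 0$ by the chord-extension property is false in general, so instead I bound $\mu\le \int_0^\infty e^{-h(r^*)x}\,dx\cdot(\text{correction})$ — cleaner: since $h$ is nondecreasing, $\int_0^x h\ge x\,h(0)$... Actually the slick argument: $\mu = \int_0^\infty(1-F)\le\int_0^\infty e^{x(\ln q^*)/r^*}\,dx$? That needs $\psi(x)\le (x/r^*)\psi(r^*)$ globally, which \emph{is} true precisely because $\psi$ is concave through the origin and $r^*$ is an interior point — no: a concave function through the origin lies \emph{below} the chord to any point only on the segment, and \emph{above} the line past it. So for $x>r^*$ we get $\psi(x)\ge(x/r^*)\ln q^*$, i.e. a \emph{lower} bound on the tail, giving a lower bound on $\mu$, hence the \emph{upper} bound on $r^*$ again.

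So the genuinely delicate half is $r^*\ge\ln(1/q^*)\cdot\mu$, and this is where I expect the main obstacle. The right tool is the first-order/monopoly condition $r^*\ge 1/h(r^*)$ combined with monotonicity of $h$ to control $\mu$ from above: since $h$ is nondecreasing, for $x\le r^*$ we have $h(x)\le h(r^*)$ so $1-F(x)=e^{-\int_0^x h}\ge e^{-h(r^*)x}$ is again the wrong direction; for $x\ge r^*$, $h(x)\ge h(r^*)$ gives $1-F(x)\le q^* e^{-h(r^*)(x-r^*)}$. Hence
\begin{align*}
\mu &= \int_0^{r^*}(1-F(x))\,dx + \int_{r^*}^\infty (1-F(x))\,dx\\
&\le r^* q^* + q^*\int_{r^*}^\infty e^{-h(r^*)(x-r^*)}\,dx\\
&= r^* q^* \cdot\Big(1 + \tfrac{1}{r^* h(r^*)}\cdot\tfrac{1}{1}\Big)
\end{align*}
— hmm, this bounds $\mu$ by something like $r^*(q^*+ \text{small})$, not by $r^*/\ln(1/q^*)$. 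I therefore expect the actual proof replaces this crude split by the concavity estimate $1-F(x)\le q^{*\,x/r^*}$ valid \emph{for $x\le r^*$} (chord-below property: $\psi(x)\ge\frac{x}{r^*}\psi(r^*)$ so $1-F(x)\ge q^{*x/r^*}$ — still wrong sign) ... The sign bookkeeping here is exactly the crux: one must correctly identify that MHR gives $1-F(x)\le e^{-x/\mu_0}$-type bounds where the relevant rate is $\ln(1/q^*)/r^*$, and then integrate. I would resolve this by writing everything in quantile space via the revenue curve $R(q)=F^{-1}(1-q)\cdot q$, using that $r^*=R(q^*)/q^* $ and that MHR is equivalent to $\ln(1-F)$ concave $\iff$ $-\frac{d}{dq}\big(\text{something}\big)$ monotone, and then the two bounds on $r^*/\mu$ become bounds on an integral of $q^{*x/r^*}$ against $dx$, i.e. exactly the elementary inequality $\frac{1-q^*}{\ln(1/q^*)}\le 1\le \frac{1}{\ln(1/q^*)}$ after the substitution — no, $\frac{1}{\ln(1/q^*)}\ge 1$ fails for $q^*<1/e$. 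Given Part (1) gives $q^*\ge 1/e$, we have $\ln(1/q^*)\le 1$, so $\frac{1}{\ln(1/q^*)}\ge 1$, and \emph{that} is the inequality that makes $\mu\le r^*/\ln(1/q^*)$ plausible. So the logical order matters: prove (1) first, then (2) using $q^*\ge 1/e$.

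To summarize the intended structure: (i) record $r^*h(r^*)\ge 1$ and $h$ nondecreasing; (ii) prove $q^*\ge 1/e$ by combining $r^*h(r^*)\ge 1$ with the concavity chord inequality $(\ln q)'(r^*)\ge \ln q^*/r^*$ reinterpreted correctly, i.e. $h(r^*)r^*\le \ln(1/q^*)$ is \emph{false} in general so instead integrate the tail bound $1-F(x)\le q^* e^{-h(r^*)(x-r^*)}$ together with $1-F(x)\ge q^{*x/r^*}$ on $[0,r^*]$ to get $\mu$ trapped, and compare with the identity $\mu=\int_0^\infty(1-F)$, forcing $q^*\ge1/e$; (iii) for (2), upper-bound $\mu$ using the concave-chord tail lower bound (giving $r^*\le\frac{\ln(1/q^*)}{1-q^*}\mu$) and lower-bound $\mu$ using $q^*\ge1/e$ plus the hazard-rate tail bound (giving $r^*\ge\ln(1/q^*)\mu$). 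The main obstacle, as the above scratch-work shows, is getting the direction of each concavity/monotonicity inequality right and ensuring the four estimates (two for $\mu$, on each of $[0,r^*]$ and $[r^*,\infty)$) combine into exactly the stated constants — this is purely a matter of careful sign-tracking rather than any deep idea, and I would write it out by consistently working with $\psi(x)=\ln(1-F(x))$, concave, $\psi(0)=0$, $\psi(r^*)=\ln q^*$, $\psi'(r^*)=-h(r^*)$, $-r^*\psi'(r^*)\ge 1$.
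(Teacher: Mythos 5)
Your proposal does not yet constitute a proof; apart from one half of Part~(2), it is scratch work that stops at ``plausible''. What is genuinely correct: from $\phi(r^*)\ge 0$ you get $r^*h(r^*)\ge 1$, and the chord inequality for the concave function $\psi(x)=\ln(1-F(x))$ with $\psi(0)=0$ gives $1-F(x)\ge (q^*)^{x/r^*}$ on $[0,r^*]$, hence $\mu\ge r^*\frac{1-q^*}{\ln(1/q^*)}$, which is a complete proof of the upper bound $r^*\le\frac{\ln(1/q^*)}{1-q^*}\mu$. Everything else has gaps. For Part~(1), your concavity inequality is stated backwards: concavity forces the slope at the right end of a chord to be \emph{at most} the chord slope, i.e.\ $-h(r^*)\le \ln(q^*)/r^*$, so $r^*h(r^*)\ge\ln(1/q^*)$, which chained with $r^*h(r^*)\ge 1$ gives nothing. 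The standard argument runs the other way: use the \emph{tangent} line at $r^*$ evaluated at $0$, namely $0=\psi(0)\le\ln q^*+r^*h(r^*)$, together with the stationarity condition $r^*h(r^*)=1$ at an interior maximizer of $x(1-F(x))$ (an upper bound on $r^*h(r^*)$, not the lower bound you recorded), which yields $\ln q^*\ge -1$. Your fallback plan of trapping $\mu$ between $r^*\frac{1-q^*}{\ln(1/q^*)}$ and $r^*(1+q^*)$ cannot work: e.g.\ $q^*=0.2$ satisfies $\frac{1-q^*}{\ln(1/q^*)}\le 1+q^*$, so no contradiction with $q^*<1/e$ arises.

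For the lower bound $r^*\ge\ln(1/q^*)\mu$ the gap is not ``sign bookkeeping'': the tools you restrict yourself to (chords of $\psi$ through the origin, $1-F\le 1$ on $[0,r^*]$, and the tail bound from $h$ nondecreasing) provably cannot deliver it. They give at best $\mu\le r^*\bigl(1+\frac{q^*}{\ln(1/q^*)}\bigr)$, and since $\ln(1/q^*)+q^*\ge 1$ for all $q^*$, this never reaches the needed $\mu\le r^*/\ln(1/q^*)$ except at $q^*=1$. (Also, your displayed estimate $\int_0^{r^*}(1-F)\le r^*q^*$ is false; the correct crude bound is $r^*$.) The inequality you need is exactly the Barlow--Marshall moment bound $1-F(x)\ge e^{-x/\mu}$ in the range $q\ge 1/e$, which is what the paper invokes: its proof of Part~(2) is a citation of \citet[Corollary~3.10]{Barlow1964} applied at $q\gets q^*$, $x\gets r^*$ (and Part~(1) is likewise by citation, or as the $\lambda\to 0$ limit of \cref{lemma:regularproperties}). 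If you insist on a self-contained argument, the workable elementary route is again the tangent line at $r^*$ with the exact first-order condition: $\psi(x)\le\ln q^*-(x-r^*)/r^*$ for all $x$, combined with $\psi\le 0$, gives $\mu\le r^*\bigl(2-\ln(1/q^*)\bigr)$, and $2-t\le 1/t$ for $t=\ln(1/q^*)\in(0,1]$ (i.e.\ $(t-1)^2\ge 0$, using Part~(1)) closes the bound --- but this uses precisely the stationarity equality and tangent inequality that your sketch never brings in, and needs separate care when the maximizer of $x(1-F(x))$ is not interior.
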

\begin{proof}
The first property is by now almost folklore, see e.g.~\citep[Lemma~1]{aggarwal2009efficiency}
or \citep[Claim~B.2]{Babaioff2015}. Alternatively, it can be seen as the limiting case of \cref{lemma:regularproperties} as $\lambda\to 0$, since $\lim_{\lambda\to 0^+}(1-\lambda)^{1/\lambda}=1/e$.

For the second property, applying \citep[Corollary 3.10]{Barlow1964} for the
first-order moments ($r=1$), we get that for any quantile $q=1-F(x)$ of our MHR
distribution with $q\geq 1/e$, it must be that
$$
-\ln q \cdot \mu \leq x \leq \left[\int_0^1 q^y\,dy\right]^{-1}\cdot \mu =
\left(\frac{q-1}{\ln  q}\right)^{-1} \mu\;.
$$
By the first property of our lemma, it is valid to use the above inequality with
the monopoly reserve quantile $q^*=1-F(r^*)$ and so, by setting $q\gets q^*$ and
$x\gets r^*$ we have that
$$
-\ln(q^*)\mu \leq r^* \leq -\frac{\ln(q^*)}{1-q^*}\mu\;.
$$
\end{proof}

Finally, the following lemma shows that the high-order statistics of MHR
distributions are ``well-behaved'', in the sense that they cannot be away from the
expectation:
\begin{lemma}
\label{lem:bound_2ndordstat_mean_MHR} For any MHR random variable $X$ and integer
$n\geq 2$,
$$
\expectsmall{\orderstat{X}{n-1}{n}}\geq \left(1-\frac{H_n-1}{n-1}\right)\cdot
\expectsmall{X}\;.
$$
\end{lemma}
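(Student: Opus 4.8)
The plan is to reduce everything to the standard exponential distribution via a monotone coupling. Let $\Lambda(x)=-\ln(1-F(x))$ be the cumulative hazard function of $F$; the MHR assumption is exactly that $\Lambda$ is convex, and since it is also nondecreasing and vanishes at the left endpoint of $D_F$, its (generalized) inverse $g\equiv\Lambda^{-1}$ is \emph{concave}, nondecreasing, and nonnegative. If $E$ is a standard exponential random variable then $g(E)\sim F$, and because $g$ is nondecreasing this coupling preserves all order statistics simultaneously: $\orderstat{X}{k}{n}\stackrel{d}{=}g(\orderstat{E}{k}{n})$ for every $k\leq n$. Hence, with $c\equiv 1-\frac{H_n-1}{n-1}\in(0,1)$, it suffices to prove $\expectsmall{g(\orderstat{E}{n-1}{n})}\geq c\cdot\expectsmall{g(E)}$ for every concave, nondecreasing, nonnegative $g$. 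Plugging the identity $\expectsmall{g(Y)}=g(0)+\int_0^\infty g'(u)\,\prob{Y>u}\,du$ (valid for $Y\geq 0$) into both sides and then integrating by parts, exploiting that $g'$ is nonnegative and nonincreasing, this difference rewrites as
\[
(1-c)\,g(0)\;+\;g'(\infty)\,\Delta(\infty)\;+\;\int_0^\infty \Delta(u)\,d(-g')(u),\qquad\text{with}\quad \Delta(a)\equiv\int_0^a\!\Bigl(\prob{\orderstat{E}{n-1}{n}>t}-c\,e^{-t}\Bigr)\,dt.
\]
Since $d(-g')\geq 0$ and the first two summands are manifestly nonnegative, the whole statement reduces to the single, distribution-free inequality $\Delta(a)\geq 0$ for all $a\geq 0$.

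To prove $\Delta\geq 0$ I would combine its behaviour at the two endpoints with the fact that it is \emph{unimodal} (increasing, then decreasing). We have $\Delta(0)=0$, and $\lim_{a\to\infty}\Delta(a)=\expectsmall{\orderstat{E}{n-1}{n}}-c=(H_n-1)-c\geq 0$, the last inequality being equivalent to the harmonic-number bound $H_n\geq 2-\tfrac1n$ (tight at $n=2$). Given unimodality it then follows that $\Delta(a)\geq\min\{\Delta(0),\lim_{a\to\infty}\Delta(a)\}=0$. Unimodality of $\Delta$ means that $\Delta'(a)=\prob{\orderstat{E}{n-1}{n}>a}-c\,e^{-a}$ changes sign exactly once, from $+$ to $-$. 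After multiplying by $e^a$ and substituting $v=e^{-a}$, and using $\prob{\orderstat{E}{n-1}{n}>a}=1-(1-v)^{n-1}\bigl(1+(n-1)v\bigr)$, this becomes the claim that $\theta(v)\equiv\bigl[1-(1-v)^{n-1}(1+(n-1)v)\bigr]/v$ stays above $c$ and then drops below $c$ as $v$ decreases from $1$ to $0$; since $\theta(0^+)=0<c<1=\theta(1)$, it is enough that $\theta$ be unimodal on $(0,1]$. A direct computation yields $v^2\theta'(v)=Q(v)$ with $Q(0)=0$ and $Q'(v)=n(n-1)\,v\,(1-v)^{n-3}\bigl(1-(n-1)v\bigr)$, which changes sign at most once on $(0,1)$ (from $+$ to $-$); hence $\theta'$ is positive and then nonpositive there, so $\theta$ is unimodal on $(0,1]$, which finishes the argument.

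The only genuine content is the single-crossing of $\Delta'$, i.e.\ the unimodality of $\theta$ — a routine, if slightly fiddly, one-variable computation; everything else is bookkeeping. I also note that for small $n$ this computation can be bypassed entirely: using the spacing representation $\orderstat{E}{n-1}{n}\stackrel{d}{=}\sum_{i=2}^n\xi_i/i$ with $\xi_i$ i.i.d.\ standard exponential, applying concavity of $g$ to the convex combination $\sum_{i=2}^n\frac{1/i}{H_n-1}\,\xi_i$, and using $g\geq 0$, one obtains the \emph{stronger} bound $\expectsmall{g(\orderstat{E}{n-1}{n})}\geq (H_n-1)\,\expectsmall{g(E)}$ whenever $H_n-1\leq 1$, i.e.\ for $n\leq 3$. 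This stronger bound fails for $n\geq 4$ (as witnessed by a point-mass $X$), which is precisely why the constant has to be $c<1$ rather than $H_n-1$ in the general statement.
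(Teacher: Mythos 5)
Your proof is correct, but it follows a genuinely different route from the paper's. The paper's argument is two lines: it imports the known MHR bound $\expectsmall{\orderstat{X}{n}{n}}\leq H_n\,\expectsmall{X}$ from \citet[Lemma~5.3]{babaioff2017posting}, combines it with the identity $n\,\expectsmall{X}=\sum_{i=1}^n\expectsmall{\orderstat{X}{i}{n}}$ and the monotonicity $\expectsmall{\orderstat{X}{i}{n}}\leq\expectsmall{\orderstat{X}{n-1}{n}}$ for $i\leq n-1$, and rearranges. You instead give a self-contained argument from first principles: the representation $X\stackrel{d}{=}g(E)$ with $g=\Lambda^{-1}$ concave, nondecreasing and nonnegative is exactly the MHR characterization the paper uses (log-concavity of $1-F$), the reduction to the distribution-free inequality $\Delta(a)\geq0$ via $\expectsmall{g(Y)}=g(0)+\int_0^\infty g'(u)\prob{Y>u}\,du$ and integration by parts against the nonnegative measure $d(-g')$ is sound (the boundary term at $0$ vanishes since $u\,g'(u)\leq g(u)-g(0)\to0$), and I checked your computations: $\prob{\orderstat{E}{n-1}{n}\leq a}=(1-v)^{n-1}(1+(n-1)v)$ with $v=e^{-a}$, the derivative identities $N'(v)=n(n-1)v(1-v)^{n-2}$ and $Q'(v)=vN''(v)=n(n-1)v(1-v)^{n-3}(1-(n-1)v)$, the endpoint values $\theta(0^+)=0$, $\theta(1)=1$, and the limit condition $\Delta(\infty)\geq0\iff H_n\geq 2-\tfrac1n$, which holds for all $n\geq2$ with equality at $n=2$. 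The remaining steps (differentiability of concave $g$ only a.e., possible infinite slope at $0$) are routine. What the two approaches buy: the paper's proof is essentially free but leans on an external lemma; yours is longer yet elementary and self-contained, proves the stronger functional statement $\expectsmall{g(\orderstat{E}{n-1}{n})}\geq c\,\expectsmall{g(E)}$ for every concave nondecreasing nonnegative $g$ (a stochastic-dominance-flavoured fact of independent interest, in the spirit of \cref{lemma:bound_tail_orderstats}), and in fact the same single-crossing argument would support any constant $c\leq\min\{1,H_n-1\}$, which for $n\geq4$ is larger than the paper's $1-\frac{H_n-1}{n-1}$; your closing remark correctly delineates why the naive constant $H_n-1$ cannot work beyond $n=3$.
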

\begin{proof} For convenience denote $\mu=\expectsmall{X}$ and
$\nu=\expectsmall{\orderstat{X}{n-1}{n}}$. From
\citet[Lemma~5.3]{babaioff2017posting} we know that, since $X$ is MHR, its
highest-order statistic is upper bounded by
$$
\expectsmall{\orderstat{X}{n}{n}}\leq H_n\cdot \mu\;.
$$
Using this we get that:
$$
n\cdot\mu = \expect{\sum_{i=1}^n
\orderstat{X}{i}{n}}=\sum_{i=1}^{n-1}\expectsmall{\orderstat{X}{i}{n}}+\expectsmall{\orderstat{X}{n}{n}}\leq
(n-1)\nu + H_n\cdot \mu\;,
$$
and thus $(n-1)\nu \geq (n-H_n)\mu$, or equivalently, $\nu \geq
\frac{(n-H_n)}{n-1}\mu=\left(1-\frac{H_n-1}{n-1}\right)\mu$.
\end{proof}

\subsection{\texorpdfstring{$\lambda$}{lambda}-Regular Distributions}
\label{sec:lambda_reg-props}

The following lemmas are the counterparts of \cref{lemma:bound_tail_orderstats} and
\cref{lem:bounds_reserve_price} (Property~1), extending them to $\lambda$-regular
distributions for $\lambda>0$.

\begin{lemma}
\label{lemma:bound_tail_regular}
Let $X$ be a $\lambda$-regular distribution, for $\lambda\in(0,1]$. Then, for any
integer $1\leq k\leq n$ and real $c\in[0,1]$,
$$
\prob{X \leq c\cdot\expectsmall{\orderstat{X}{k}{n}}}\leq 1-
\left(1+c\left(\frac{n!\Gamma(n+1-k-\lambda)}{(n-k)!\Gamma(n+1-\lambda)}-1\right)\right)^{-1/\lambda}\;.
$$
\end{lemma}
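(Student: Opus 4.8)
The plan is to run the same argument as in \cref{lemma:bound_tail_orderstats}, but with the \emph{convex} function $\psi(x)\equiv[1-F(x)]^{-\lambda}$ — whose convexity on $D_F$ is exactly the defining property of $\lambda$-regularity for $\lambda\in(0,1]$ (see \cref{sec:lambdareg-model}) — playing the role that the concave $\ln(1-F)$ played in the MHR case. Write $F$ for the cdf of $X$, let $G$ be the cdf of $\orderstat{X}{k}{n}$, so that (see \cref{append:prob_basics}) $dG(y)=k\binom{n}{k}F^{k-1}(y)(1-F(y))^{n-k}\,dF(y)$, and set $\nu\equiv\expectsmall{\orderstat{X}{k}{n}}$. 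Note that $\psi\ge 1$ everywhere (since $1-F\in[0,1]$), that $\psi$ is nondecreasing, and that $\psi$ takes value $1$ below the support (where $F=0$); in particular $\psi$ is convex on $[0,\infty)$ with $\psi(0)=1$.

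First I would bound $F(c\nu)$ from above by a two-step Jensen-type chain. Since $c\mapsto\psi(c\nu)$ is convex on $[0,1]$ with value $1$ at $c=0$, linear interpolation gives $\psi(c\nu)\le c\,\psi(\nu)+(1-c)$; and applying Jensen's inequality to the convex $\psi$ and the random variable $\orderstat{X}{k}{n}$ of mean $\nu$ gives $\psi(\nu)\le\expectsmall{\psi(\orderstat{X}{k}{n})}$. The remaining expectation is a Beta integral: using the density of $\orderstat{X}{k}{n}$ and the substitution $u=F(y)$,
\[
\expectsmall{\psi(\orderstat{X}{k}{n})}
=k\binom{n}{k}\int_0^1 u^{k-1}(1-u)^{n-k-\lambda}\,du
=k\binom{n}{k}\,\mathrm{B}(k,\,n+1-k-\lambda),
\]
and simplifying with $k\binom{n}{k}=\frac{n!}{(k-1)!\,(n-k)!}$ and $\mathrm{B}(k,n+1-k-\lambda)=\frac{(k-1)!\,\Gamma(n+1-k-\lambda)}{\Gamma(n+1-\lambda)}$ collapses this to $M\equiv\frac{n!\,\Gamma(n+1-k-\lambda)}{(n-k)!\,\Gamma(n+1-\lambda)}$.

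Putting the pieces together, $[1-F(c\nu)]^{-\lambda}=\psi(c\nu)\le cM+(1-c)=1+c(M-1)$, where $1+c(M-1)\ge 1>0$ because $M\ge 1$. Raising both sides to the power $-1/\lambda$ — a decreasing map on the positive reals, so the inequality flips — yields $1-F(c\nu)\ge\bigl(1+c(M-1)\bigr)^{-1/\lambda}$, which is precisely the claimed bound on $\prob{X\le c\nu}=F(c\nu)$ (the weak inequality $X\le c\nu$ versus $X<c\nu$ being immaterial since $F$ is continuous). I expect the only delicate point, rather than a genuine obstacle, to be the degenerate case in which the Beta integral diverges — this happens exactly for $\lambda=1$ and $k=n$, where $M=\infty$, $\nu$ may be infinite, and the asserted inequality degenerates to the trivial $\prob{X\le c\nu}\le 1$; the sign-flip when taking $(-1/\lambda)$-th powers, and the fact that $\lambda>0$ is needed for $\psi$ to be non-trivial (which is why MHR requires the separate logarithmic argument), also warrant a moment's attention. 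Everything else is the MHR proof verbatim with the Beta function in place of the Gamma function.
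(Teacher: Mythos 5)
Your proof is correct and is essentially the paper's own argument: the same secant-line bound for the convex $c\mapsto(1-F(c\nu))^{-\lambda}$ with value $1$ at $c=0$, followed by Jensen's inequality for the order statistic and the identical Beta-integral computation of $\expectsmall{(1-F(\orderstat{X}{k}{n}))^{-\lambda}}$. The extra remarks (monotonicity of $t\mapsto t^{-1/\lambda}$, $M\geq 1$, the degenerate $k=n$, $\lambda=1$ case) are fine but not points where the paper's proof differs.
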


\begin{proof}
Let $F$ and $G$ be the cumulative density functions of $X$ and $\orderstat{X}{k}{n}$
respectively. Let also $\nu=\expectsmall{\orderstat{X}{k}{n}}$. As $F$ is
$\lambda$-regular, the function $\zeta(c)=(1-F(c\nu))^{-\lambda}$ is a convex
function on $c$, with $\zeta(0)=1$; thus, assuming $c\in[0,1]$, and applying
Jensen's inequality,

\begin{equation}
\zeta(c)\leq 1+c(\zeta(1)-1)=1+c((1-F(\nu))^{-\lambda}-1)\leq 1+c(\expectsmall{(1-F(\orderstat{X}{k}{n}))^{-\lambda}}-1)\;.
\label{eq:zeta}\end{equation}

The rest of the proof follows exactly as in the proof of
\cref{lemma:bound_tail_orderstats}. We use a change of variable to compute the
expected value:

\begin{align*}
\expectsmall{(1-F(\orderstat{X}{k}{n}))^{-\lambda}}
&=\int_0^\infty (1-F(y))^{-\lambda}dG(y)\\
&=k\binom{n}{k}\int_0^\infty F(y)^{k-1}(1-F(y))^{n-k-\lambda}dF(y)\\
&=k\binom{n}{k}\int_0^1 u^{k-1}(1-u)^{n-k-\lambda}du\\
&= k\binom{n}{k} \mathrm{B}(k,n+1-k-\lambda)\\
&=\frac{n!\Gamma(n+1-k-\lambda)}{(n-k)!\Gamma(n+1-\lambda)}\;.
\end{align*}
Plugging
this into \eqref{eq:zeta} and rearranging gives us the desired result.
\end{proof}

\begin{lemma}[\citet{SchweizerSzech2019}]
\label{lemma:regularproperties}
For any $\lambda$-regular distribution with $\lambda\in(0,1]$ and monopoly quantile $q^\ast$,
\begin{equation*}
q^\ast\geq(1-\lambda)^{1/\lambda}\;.
\end{equation*}
\end{lemma}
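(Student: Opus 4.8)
The plan is to pass to a rescaled quantile variable in which $\lambda$-regularity becomes a plain concavity statement. Write $p(q)=F^{-1}(1-q)$ for the price with quantile $q\in(0,1]$, and set $s:=q^{-\lambda}$, so that $s$ ranges over $[1,\infty)$; regard the price as a function $p(s)$ via $q=s^{-1/\lambda}$. Since $F$ is $\lambda$-regular, $x\mapsto (1-F(x))^{-\lambda}$ is convex and increasing on $D_F$, and evaluating it at $x=p(s)$ returns exactly $s$; hence $p(\cdot)$ is the inverse of a convex increasing function, so it is concave and increasing on $[1,\infty)$. Note also $p(1)=F^{-1}(0)=\inf D_F\ge 0$.

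Next I would rewrite the single-bidder revenue curve in this variable: $R(q)=p(q)\,q=p(s)\,s^{-1/\lambda}=:\Psi(s)$, and the monopoly quantile corresponds to the maximizer $s^\ast=(q^\ast)^{-\lambda}\ge 1$ of $\Psi$. A short computation gives $\Psi'(s)=\tfrac1\lambda s^{-1/\lambda-1}\bigl(\lambda s\,p'(s)-p(s)\bigr)$, and the bracket has derivative $(\lambda-1)p'(s)+\lambda s\,p''(s)\le 0$ (using $\lambda\le 1$, $p'\ge 0$, $p''\le 0$), so it is nonincreasing; thus $\Psi$ is unimodal and, since $p(s)=O(s)$ forces $\Psi(s)\to 0$ for $\lambda<1$, it attains its maximum. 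If $s^\ast=1$ then $q^\ast=1\ge (1-\lambda)^{1/\lambda}$ trivially, so assume $s^\ast>1$; also $p(s^\ast)>0$ (otherwise $R\equiv 0$ and the statement is vacuous). The first-order condition at $s^\ast$ reads $\lambda s^\ast p'(s^\ast)=p(s^\ast)$. On the other hand, concavity of $p$ on $[1,s^\ast]$ yields the secant bound $p'(s^\ast)\le \frac{p(s^\ast)-p(1)}{s^\ast-1}\le \frac{p(s^\ast)}{s^\ast-1}$, the last step using $p(1)\ge 0$. Combining, $\frac{p(s^\ast)}{\lambda s^\ast}=p'(s^\ast)\le \frac{p(s^\ast)}{s^\ast-1}$, and cancelling $p(s^\ast)>0$ gives $s^\ast-1\le \lambda s^\ast$, i.e. $s^\ast\le \frac{1}{1-\lambda}$. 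Since $t\mapsto t^{-1/\lambda}$ is decreasing, $q^\ast=(s^\ast)^{-1/\lambda}\ge (1-\lambda)^{1/\lambda}$, as claimed.

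The only real delicacy is smoothness bookkeeping: $p(s)$ need not be differentiable everywhere, and for $\lambda=1$ the revenue curve may have no or multiple maximizers — but there $(1-\lambda)^{1/\lambda}=0$ and the claim is vacuous, so one restricts to $\lambda\in(0,1)$. I expect this to be the main (and essentially cosmetic) obstacle; it is dispatched by working with one-sided derivatives of the concave function $p$ and the standard monotonicity $D^+p(a)\ge \frac{p(b)-p(a)}{b-a}\ge D^-p(b)$ for $a<b$, which is all the computation above really uses (the maximum condition becomes $D^-\Psi(s^\ast)\ge 0\ge D^+\Psi(s^\ast)$, yielding $D^-p(s^\ast)\ge \frac{p(s^\ast)}{\lambda s^\ast}$, and then $\frac{p(s^\ast)-p(1)}{s^\ast-1}\ge D^-p(s^\ast)$ closes the argument). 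A derivative-free alternative starts from the identity $\frac{d}{dq}\bigl[p(q)q^\lambda\bigr]=q^{\lambda-1}\phi_\lambda(p(q))$, so monotonicity of $\phi_\lambda$ makes $q\mapsto p(q)q^\lambda$ unimodal; combining this with the linear lower bound on the concave $p$ leads to the same inequality, but the rescaling above is the cleanest to carry out.
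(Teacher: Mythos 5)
Your argument is correct, but it is worth pointing out that the paper does not prove this lemma at all: it is imported verbatim from \citet{SchweizerSzech2019} (the paper only uses their convexity characterization of $\lambda$-regularity elsewhere), so what you have written is a self-contained derivation rather than a variant of an in-paper proof. Your route is essentially the $\rho$-concavity view of $\lambda$-regularity: the substitution $s=q^{-\lambda}$ turns the condition ``$(1-F)^{-\lambda}$ convex increasing'' into ``the price $p(s)$ is concave increasing on $[1,\infty)$ with $p(1)\ge 0$'', and then the monopoly condition $D^-\Psi(s^\ast)\ge 0$ for $\Psi(s)=p(s)s^{-1/\lambda}$ combined with the secant bound $\frac{p(s^\ast)-p(1)}{s^\ast-1}\ge D^-p(s^\ast)$ gives $s^\ast\le\frac{1}{1-\lambda}$, i.e.\ $q^\ast\ge(1-\lambda)^{1/\lambda}$. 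I checked the computation $\Psi'(s)=\frac{1}{\lambda}s^{-1/\lambda-1}\bigl(\lambda s\,p'(s)-p(s)\bigr)$, the attainment argument for $\lambda<1$ (concavity gives $p(s)=O(s)$, hence $\Psi(s)\to0$), and the edge cases ($s^\ast=1$, $p(s^\ast)=0$ degenerate, $\lambda=1$ vacuous since the bound is $0$, and nonsmoothness handled by one-sided derivatives of the concave $p$); all are sound, and note that even with the paper's tie-breaking convention the argument covers every maximizer, so the choice of $q^\ast$ is immaterial. Two small remarks: the unimodality paragraph (sign of the bracket's derivative, requiring $p''$) is not actually needed — existence of a maximizer plus the first-order and secant inequalities at that point suffice, which also spares you the second-derivative bookkeeping — and your closing identity $\frac{d}{dq}\bigl[p(q)q^{\lambda}\bigr]=q^{\lambda-1}\phi_\lambda(p(q))$ is the direct link to the definition of $\lambda$-regularity used in the paper, so either formulation would serve as a legitimate replacement for the external citation.
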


\section{Bounds for MHR Distributions}
\label{sec:MHR}

To facilitate us with stating and proving our bounds for the approximation
ratio of pricing, we define the following auxiliary function $g_n:[0,\infty)\map
[0,\infty)$, for any positive integer $n$,
\begin{equation}
\label{eq:gdef} g_n(c) \equiv c[1-(1-e^{-c(H_n-1)})^n]
\end{equation} 
and its (unique) maximizer in $[0,1]$ by
\begin{equation}
\label{eq:cparsdef} c_n \equiv \argmax_{c\in[0,1]} g_n(c)\;.
\end{equation} 
In~\cref{app:auxfunprops} we prove some properties of $g_n$ that will be used in the rest of this section.

\subsection{Upper Bounds}
\label{sec:upper} This section is dedicated to proving the main result of our paper.
First (\cref{th:main_upper}) we show that pricing is indeed asymptotically optimal
with respect to revenue and then (\cref{th:main_upper_small}) we also provide a more
refined upper-bound on the approximation ratio that is fine-tuned to work well for a
small number of bidders $n$. As we will see in the following \cref{sec:lower}, our
upper bound analysis of this section is essentially tight (see also
\cref{fig:approx_lower}).

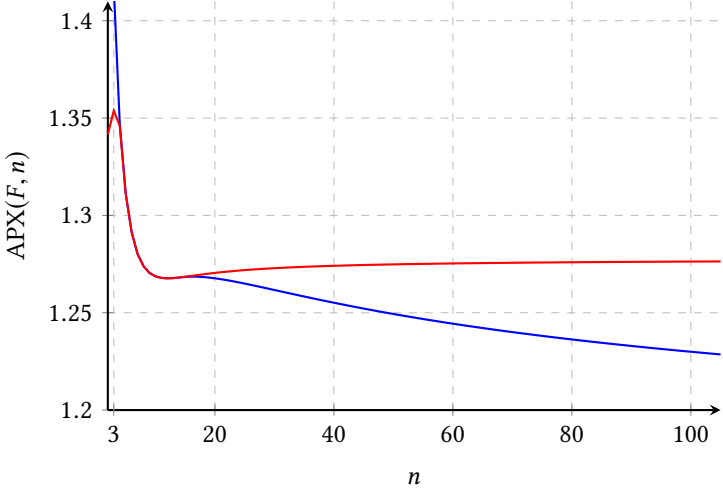
\begin{figure}
  \centering
\begin{tikzpicture}
\begin{axis}[
  width=0.7\textwidth,
    height=0.5\textwidth,
  xmin=2, xmax=105,
  ymin=1.2, ymax=1.41,
    axis lines = left,
    ytick={1.20,1.25,...,1.45},
    ymajorgrids=true,
    xmajorgrids=true,
    extra x ticks={3},
    grid style=dashed,
    xlabel = $n$,
    ylabel = {$\apx{F}{n}$},
    thick
]
\addplot[
    color=blue,
    ]
    coordinates {(2,1.57048)(3,1.4173)(4,1.34746)(5,1.31129)(6,1.29139)(7,1.28019)(8,1.2739)(9,1.27045)(10,1.26868)(11,1.2679)(12,1.26769)(13,1.2678)(14,1.2681)(15,1.26842)(16,1.26856)(17,1.26853)(18,1.26836)(19,1.26808)(20,1.2677)(21,1.26725)(22,1.26674)(23,1.26619)(24,1.2656)(25,1.26498)(26,1.26434)(27,1.26369)(28,1.26303)(29,1.26236)(30,1.26168)(31,1.26101)(32,1.26034)(33,1.25967)(34,1.259)(35,1.25834)(36,1.25769)(37,1.25704)(38,1.2564)(39,1.25577)(40,1.25515)(41,1.25453)(42,1.25392)(43,1.25332)(44,1.25273)(45,1.25215)(46,1.25158)(47,1.25101)(48,1.25045)(49,1.2499)(50,1.24936)(51,1.24883)(52,1.24831)(53,1.24779)(54,1.24728)(55,1.24678)(56,1.24628)(57,1.2458)(58,1.24532)(59,1.24484)(60,1.24438)(61,1.24392)(62,1.24346)(63,1.24302)(64,1.24258)(65,1.24214)(66,1.24171)(67,1.24129)(68,1.24087)(69,1.24046)(70,1.24006)(71,1.23966)(72,1.23926)(73,1.23887)(74,1.23849)(75,1.23811)(76,1.23773)(77,1.23736)(78,1.237)(79,1.23664)(80,1.23628)(81,1.23593)(82,1.23558)(83,1.23524)(84,1.2349)(85,1.23456)(86,1.23423)(87,1.2339)(88,1.23358)(89,1.23326)(90,1.23294)(91,1.23263)(92,1.23232)(93,1.23201)(94,1.23171)(95,1.23141)(96,1.23112)(97,1.23082)(98,1.23053)(99,1.23025)(100,1.22996)(101,1.22968)(102,1.2294)(103,1.22913)(104,1.22886)(105,1.22859)
    };
\addplot[
    color=red,
    ]
    coordinates {(2,1.34159)(3,1.35384)(4,1.34601)(5.,1.31129)(6.,1.29139)(7.,1.28019)(8.,1.2739)(9.,1.27045)(10.,1.26868)(11.,1.2679)(12.,1.26769)(13.,1.2678)(14.,1.2681)(15.,1.26848)(16.,1.2689)(17.,1.26931)(18.,1.26972)(19.,1.27011)(20.,1.27047)(21.,1.2708)(22.,1.27111)(23.,1.2714)(24.,1.27166)(25.,1.27191)(26.,1.27213)(27.,1.27234)(28.,1.27254)(29.,1.27272)(30.,1.27289)(31.,1.27305)(32.,1.2732)(33.,1.27334)(34.,1.27347)(35.,1.27359)(36.,1.27371)(37.,1.27382)(38.,1.27392)(39.,1.27402)(40.,1.27411)(41.,1.2742)(42.,1.27429)(43.,1.27437)(44.,1.27445)(45.,1.27452)(46.,1.27459)(47.,1.27466)(48.,1.27472)(49.,1.27479)(50.,1.27484)(51.,1.2749)(52.,1.27496)(53.,1.27501)(54.,1.27506)(55.,1.27511)(56.,1.27516)(57.,1.2752)(58.,1.27525)(59.,1.27529)(60.,1.27533)(61.,1.27537)(62.,1.27541)(63.,1.27544)(64.,1.27548)(65.,1.27551)(66.,1.27555)(67.,1.27558)(68.,1.27561)(69.,1.27564)(70.,1.27567)(71.,1.2757)(72.,1.27573)(73.,1.27576)(74.,1.27579)(75.,1.27581)(76.,1.27584)(77.,1.27586)(78.,1.27589)(79.,1.27591)(80.,1.27593)(81.,1.27595)(82.,1.27598)(83.,1.276)(84.,1.27602)(85.,1.27604)(86.,1.27606)(87.,1.27608)(88.,1.2761)(89.,1.27611)(90.,1.27613)(91.,1.27615)(92.,1.27617)(93.,1.27618)(94.,1.2762)(95.,1.27622)(96.,1.27623)(97.,1.27625)(98.,1.27626)(99.,1.27628)(100.,1.27629)(101.,1.27631)(102.,1.27632)(103.,1.27633)(104.,1.27635)(105.,1.27636)
    };
\end{axis}
\end{tikzpicture}
\Description{The upper bounds on the approximation ratio of anonymous pricing for
$n$ i.i.d.\ bidders with MHR valuations, given by~\cref{th:main_upper} and
\cref{th:main_upper_small}. The best (smallest) of the two converges to the optimal
value of $1$ as the number of bidders grows large, at a rate of $1+O\left(\ln \ln
n/\ln n\right)$. A single, unified plot of this can be seen in
\cref{fig:approx_lower}, together with a matching lower bound. In the worst case
($n=3$), our upper bound is at most $1.354$.}
  \caption{The upper bounds on the approximation ratio $\apx{F}{n}$ of anonymous
  pricing for $n$ i.i.d.\ bidders with MHR valuations, given by~\cref{th:main_upper}
  (blue) and \cref{th:main_upper_small} (red). The best (smallest) of the two
  converges to the optimal value of $1$ as the number of bidders grows large, at a
  rate of $1+O\left(\ln \ln n/\ln n\right)$. A single, unified plot of this can be
  seen in \cref{fig:approx_lower} (black), together with a matching lower bound
  (red). In the worst case ($n=3$), our upper bound is at most $1.354$.}
\label{fig:approx_upper}
\end{figure}

\begin{theorem}
\label{th:main_upper}
Using the same take-it-or-leave-it price, to sell an item to $n$ buyers with i.i.d.\
valuations from a continuous MHR distribution $F$, is asymptotically optimal with
respect to revenue. In particular,
$$
\apx{F}{n}=1+O\left(\frac{\ln \ln n}{\ln
n}\right)\;.
$$
\end{theorem}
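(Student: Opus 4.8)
The plan is to sandwich the ratio $\apx{F}{n}=\myerson{F}{n}/\price{F}{n}$ between a lower bound on the pricing revenue and the upper bound on the optimal revenue from \cref{lemma:opt_bound_Fu_et_al}, both expressed through the single quantity $\nu\equiv\expectsmall{\orderstat{X}{n-1}{n}}$, the expected second-highest order statistic of $F$ (here $X\sim F$). Concretely, I would show $\price{F}{n}\geq \nu\,g_n(c_n)$ and $\myerson{F}{n}\leq \nu\,(1+\varepsilon_n)$ for an exponentially small error term $\varepsilon_n$, so that $\apx{F}{n}\leq (1+\varepsilon_n)/g_n(c_n)$; the theorem then reduces to the purely analytic estimate $g_n(c_n)\geq 1-O(\ln\ln n/\ln n)$, where $g_n,c_n$ are as in \eqref{eq:gdef}--\eqref{eq:cparsdef}.

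For the lower bound on pricing I would use the anonymous price $p=c\nu$ with a free parameter $c\in[0,1]$. Applying \cref{lemma:bound_tail_orderstats} with $k=n-1$ (so that $H_n-H_{n-k}=H_n-H_1=H_n-1$) and using the continuity of $F$ gives $F(c\nu)=\prob{X<c\nu}\leq 1-e^{-c(H_n-1)}$, whence
\[
\pricep{F}{n}{c\nu}=c\nu\bigl[1-F^n(c\nu)\bigr]\geq c\nu\bigl[1-(1-e^{-c(H_n-1)})^n\bigr]=\nu\,g_n(c).
\]
Taking the supremum over $c\in[0,1]$ yields $\price{F}{n}\geq \nu\,g_n(c_n)$.

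For the optimal revenue, \cref{lemma:opt_bound_Fu_et_al} gives $\myerson{F}{n}\leq\nu+r^\ast q^\ast(1-q^\ast)^{n-1}$ since $R(q^\ast)=r^\ast q^\ast$, and it remains to control $\varepsilon_n\equiv r^\ast q^\ast(1-q^\ast)^{n-1}/\nu$. Writing $\mu$ for the mean of $F$: by \cref{lem:bounds_reserve_price} we have $q^\ast\geq 1/e$ and $r^\ast\leq\frac{\ln(1/q^\ast)}{1-q^\ast}\mu\leq\frac{e}{e-1}\mu$, the last inequality because $q\mapsto\frac{\ln(1/q)}{1-q}$ is decreasing on $(0,1)$; moreover $q^\ast(1-q^\ast)^{n-1}\leq\frac1e(1-\frac1e)^{n-1}$ since $q\mapsto q(1-q)^{n-1}$ is decreasing on $[1/e,1]$ for $n\geq 3$; and by \cref{lem:bound_2ndordstat_mean_MHR}, $\nu\geq(1-\frac{H_n-1}{n-1})\mu\geq\frac12\mu$ for all large $n$. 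Combining these bounds gives $\varepsilon_n=O\bigl((1-1/e)^n\bigr)$, which is negligible against $\ln\ln n/\ln n$.

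The remaining and main step is the analytic estimate $g_n(c_n)\geq 1-O(\ln\ln n/\ln n)$; this is where the bulk of the work lies, and it can be isolated among the properties of $g_n$ proved in \cref{app:auxfunprops}. Since $c_n$ maximizes $g_n$ on $[0,1]$ it suffices to exhibit one good test value, and I would take $c^\ast=1-\frac{2\ln\ln n}{\ln n}$, which lies in $(0,1)$ for large $n$. Using the elementary bound $H_n-1\leq\ln n$ we get $c^\ast(H_n-1)\leq c^\ast\ln n=\ln n-2\ln\ln n=\ln\bigl(n/(\ln n)^2\bigr)$, so $n\,e^{-c^\ast(H_n-1)}\geq(\ln n)^2$ and hence $(1-e^{-c^\ast(H_n-1)})^n\leq e^{-n e^{-c^\ast(H_n-1)}}\leq e^{-(\ln n)^2}$, which is super-polynomially small. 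Therefore $g_n(c_n)\geq g_n(c^\ast)=c^\ast\bigl[1-(1-e^{-c^\ast(H_n-1)})^n\bigr]\geq\bigl(1-\tfrac{2\ln\ln n}{\ln n}\bigr)\bigl(1-e^{-(\ln n)^2}\bigr)=1-O(\ln\ln n/\ln n)$, and putting everything together, $\apx{F}{n}\leq(1+\varepsilon_n)/g_n(c_n)\leq\bigl(1+O((1-1/e)^n)\bigr)\big/\bigl(1-O(\ln\ln n/\ln n)\bigr)=1+O(\ln\ln n/\ln n)$; since every bound used is uniform over MHR distributions, the same estimate holds for $\apx{n}{0}=\sup_F\apx{F}{n}$ as well. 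The delicate point throughout is the choice of the trade-off parameter $c^\ast$: it must be close enough to $1$ that the leading factor $c^\ast$ costs only $O(\ln\ln n/\ln n)$, yet far enough from $1$ that $c^\ast(H_n-1)$ drops below $\ln n$ by a growing margin, which is precisely what forces the ``no sale'' probability $(1-e^{-c^\ast(H_n-1)})^n$ to vanish.
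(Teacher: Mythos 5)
Your proposal is correct, and its core engine is the same as the paper's: price at $c\nu$ with $\nu=\expectsmall{\orderstat{X}{n-1}{n}}$, invoke \cref{lemma:bound_tail_orderstats} with $k=n-1$ to get $\price{F}{n}\geq\nu\max_{c\in[0,1]}g_n(c)$, and compare against the Fu-et-al bound of \cref{lemma:opt_bound_Fu_et_al}. Where you genuinely diverge is in the treatment of the residual term $R(q^*)(1-q^*)^{n-1}$. The paper handles it by introducing a \emph{second} candidate price, namely the monopoly reserve $r^*$, and bounding that term against $\pricep{F}{n}{r^*}=R(q^*)\frac{1-(1-q^*)^n}{q^*}$, so the $R(q^*)$ factors cancel and only $q^*\geq 1/e$ is needed; this yields the clean closed-form bound \eqref{eq:apx_bound_3}--\eqref{eq:apx_bound_1}, which the paper then reuses for \cref{th:main_upper_small} and for the prior-independence results. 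You instead bound $r^*q^*(1-q^*)^{n-1}$ multiplicatively against $\nu$ itself, via $r^*\leq\frac{e}{e-1}\mu$ (\cref{lem:bounds_reserve_price}) and $\nu\geq(1-\frac{H_n-1}{n-1})\mu$ (\cref{lem:bound_2ndordstat_mean_MHR}); this needs only one candidate price but imports the MHR moment machinery that the paper defers to \cref{sec:prior_indi} (cf.\ \cref{lem:reserve_ordstat_comparison}), and it gives an asymptotic statement rather than an explicit finite-$n$ formula. Finally, your self-contained proof that $\max_{c\in[0,1]}g_n(c)\geq 1-O(\ln\ln n/\ln n)$ via the test value $c^\ast=1-\frac{2\ln\ln n}{\ln n}$ is a valid variant of the paper's \cref{lem:3} (which uses $\tilde c_n=\frac{\ln n-\ln\ln n}{H_n-1}$ together with \cref{lem:bound_pricing_seq_parameter}); your choice is automatically in $(0,1)$ for large $n$, trading the explicit small-$n$ validity of the paper's choice for a slightly simpler verification. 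All estimates you use ($H_n-1\leq\ln n$, $(1-x)^n\leq e^{-nx}$, monotonicity of $q\mapsto\frac{\ln(1/q)}{1-q}$ and of $q\mapsto q(1-q)^{n-1}$ on $[1/e,1]$) check out, so the argument is sound.
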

A plot of the exact values of this upper bound (given by~\eqref{eq:apx_bound_1} below) can be seen in~\cref{fig:approx_upper} (blue).
\begin{proof}
First notice that by using the monopoly reserve price $r^*$ of $F$ as a
take-it-or-leave it price to the $n$ bidders, we get an expected revenue of
\begin{equation}
\label{eq:price1_bound}
\pricep{F}{n}{r^*}
=r^*(1-F(r^*)^n)
=r^*[1-(1-q^*)^n]
= R(q^*)\frac{1-(1-q^*)^n}{q^*}\;,
\end{equation}
where $q^*=1-F(r^*)$ is the quantile of the monopoly reserve price, for which we know that $q^*\geq\frac{1}{e}$ (\cref{lem:bounds_reserve_price}), and $R$ denotes the revenue curve (see~\cref{sec:model}).

Next, for simplicity denote $\nu = \expect{\orderstat{X}{n-1}{n}}$.
For any real $c\in [0,1]$, if we offer a price of $c\cdot \nu$ we have
\begin{equation}
\label{eq:bound_price_second_order_c}
\pricep{F}{n}{c\nu}
=c\nu[1-F(c\nu)^n]
\geq c\nu\left[1-\left(1-e^{-c(H_n-1)}\right)^n\right]\;,
\end{equation}
the inequality holding due to \cref{lemma:bound_tail_orderstats} (for $k=n-1$). Optimizing with respect to $c$ we get that
\begin{equation}
\label{eq:bound_price_second_order_opt}
\price{F}{n} \geq \nu \max_{c\in[0,1]}g_n(c)\;.
\end{equation}

Using the two lower bounds \eqref{eq:price1_bound} and
\eqref{eq:bound_price_second_order_opt} on the pricing revenue, in conjunction with
the upper bound on the optimal revenue from \cref{lemma:opt_bound_Fu_et_al} we can
bound the approximation ratio of pricing by
\begin{align}
\apx{F}{n}&= \frac{\myerson{F}{n}}{\price{F}{n}} \notag\\
  &\leq \frac{\nu}{\nu \max_{c\in[0,1]}g_n(c)}+ \frac{R(q^*)(1-q^*)^{n-1}}{R(q^*)\frac{1-(1-q^*)^n}{q^*}}\notag\\
  &= \frac{1}{\max_{c\in[0,1]}g_n(c)} + \frac{q^*(1-q^*)^{n-1}}{1-(1-q^*)^n}\label{eq:apx_bound_3}\\
  & \leq \frac{1}{\max_{c\in[0,1]}g_n(c)} + \frac{(e-1)^{n-1}}{e^n-(e-1)^{n}} \label{eq:apx_bound_1}\\
  &= 1+O\left(\frac{\ln\ln n}{\ln n}\right)+O\left(\left(\frac{e}{e-1}\right)^{-n}\right)\;. \label{eq:apx_bound_2}
\end{align}
\Cref{eq:apx_bound_1} holds by observing that function
$x\mapsto\frac{x(1-x)^{n-1}}{1-(1-x)^n}$ is decreasing over $(0,1]$, for any $n\geq
2$, and taking into consideration that $q^*\geq 1/e$, while for
\eqref{eq:apx_bound_2} we make use of the asymptotics from  \cref{lem:3}. The upper
bound given by \eqref{eq:apx_bound_1} is plotted by the blue line
in~\cref{fig:approx_upper}.
\end{proof}

\begin{theorem}
\label{th:main_upper_small} The approximation ratio of the revenue obtained by using
the same take-it-or-leave-it price, to sell an item to $n$ buyers with i.i.d.\
valuations from a continuous MHR distribution $F$, is at most
$$
\apx{F}{n}\leq
\max_{q\in[1/e,1]}\min\sset{\frac{1}{1-(1-e^{-H_n+1})^n}+\frac{q(1-q)^{n-1}}{1-(1-q)^n},\frac{\int_0^q\frac{1-(1-z)^n}{z}\,dz}{1-(1-q)^n}}\;.
$$
In particular, the worst case (maximum) of this quantity is attained at $n=3$ and is
at most $\apx{F}{3}\leq 1.354$.
\end{theorem}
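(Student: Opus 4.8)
The plan is to follow the proof of \cref{th:main_upper} almost verbatim, with two changes: also feed in the second optimal-revenue bound \cref{lem:opt_bound_2}, and keep the monopoly quantile $q^*$ as a free parameter that is optimized away only at the very end, over the range $[1/e,1]$ guaranteed by \cref{lem:bounds_reserve_price}. Concretely, as lower bounds on $\price{F}{n}$ I use the two anonymous prices already employed in \cref{th:main_upper}: pricing at the monopoly reserve $r^*$ gives, as in \eqref{eq:price1_bound}, $\price{F}{n}\ge r^*[1-(1-q^*)^n]=R(q^*)\frac{1-(1-q^*)^n}{q^*}$; pricing at $\nu\equiv\expect{\orderstat{X}{n-1}{n}}$ and invoking \cref{lemma:bound_tail_orderstats} with $k=n-1$, $c=1$ (so that $H_n-H_{n-k}=H_n-1$) gives $\price{F}{n}\ge\nu[1-(1-e^{-H_n+1})^n]=\nu\,g_n(1)$. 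For the first expression in the min I plug \cref{lemma:opt_bound_Fu_et_al} ($\myerson{F}{n}\le\nu+R(q^*)(1-q^*)^{n-1}$) into $\apx{F}{n}=\myerson{F}{n}/\price{F}{n}$, bounding the $\nu$-part by $\nu\,g_n(1)$ and the $R(q^*)$-part by $R(q^*)\frac{1-(1-q^*)^n}{q^*}$, to obtain $\apx{F}{n}\le\frac{1}{g_n(1)}+\frac{q^*(1-q^*)^{n-1}}{1-(1-q^*)^n}$; for the second expression I plug \cref{lem:opt_bound_2} ($\myerson{F}{n}\le r^*\int_0^{q^*}\frac{1-(1-z)^n}{z}\,dz$) and divide by $\price{F}{n}\ge r^*[1-(1-q^*)^n]$, so the $r^*$'s cancel and $\apx{F}{n}\le\frac{\int_0^{q^*}\frac{1-(1-z)^n}{z}\,dz}{1-(1-q^*)^n}$. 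Both inequalities hold for the same $q=q^*$, hence $\apx{F}{n}$ is at most the minimum of the two; since we only know $q^*\in[1/e,1]$, replacing $q^*$ by the worst case over $q\in[1/e,1]$ (a maximum, by continuity on a compact interval) gives exactly the stated formula.

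For the ``in particular'' part, write $A_n(q)$ and $B_n(q)$ for the two expressions inside the min. Because $q\mapsto\frac{q(1-q)^{n-1}}{1-(1-q)^n}$ is decreasing on $(0,1]$ (as already used in the proof of \cref{th:main_upper}), $A_n$ is decreasing, so $\max_{q\in[1/e,1]}\min\{A_n,B_n\}\le A_n(1/e)=\frac{1}{g_n(1)}+\frac{(e-1)^{n-1}}{e^n-(e-1)^n}$. For $n\ge4$ I show this last quantity is below $1.354$: its second summand equals $\frac{1}{(e-1)\left((e/(e-1))^n-1\right)}$, which is decreasing in $n$ and already at most $0.07$ for $n\ge5$, while $\frac{1}{g_n(1)}$ stays below its $n\to\infty$ limit $\left(1-e^{-e^{1-\gamma}}\right)^{-1}<1.278$ — the required uniform lower bound on $g_n(1)$ being among the properties of $g_n$ proved in \cref{app:auxfunprops} — and for $n=4$ a direct evaluation gives $A_4(1/e)\approx1.347$; this settles all $n\ge4$. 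For $n=2$ and $n=3$ one has $A_n(1/e)>1.354$, so the $B_n$ term is genuinely needed: $A_n$ is decreasing and $B_n$ nondecreasing in $q$ (a routine check, e.g.\ because the integrand ratio $\frac{1-(1-z)^n}{nz(1-z)^{n-1}}$ is increasing in $z$), hence $\max_q\min\{A_n,B_n\}$ is attained at the unique solution of $A_n(q)=B_n(q)$ — an elementary equation in $q$, linear for $n=2$ and cubic for $n=3$ — and substituting the root back gives $\approx1.3416$ for $n=2$ and $\approx1.3538$ for $n=3$. Hence the maximum over all $n$ is attained at $n=3$ and is below $1.354$.

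The formula itself is just a recombination of \cref{lemma:opt_bound_Fu_et_al,lem:opt_bound_2} with the two pricing lower bounds, so it is routine. The delicate part is the quantitative statement: the true maximum, $\approx1.3538$ at $n=3$, sits essentially on top of the asserted $1.354$, so the $n=3$ estimate (locating the root of an explicit cubic while controlling the rounding) has no slack to spare; and since $n$ ranges over an infinite set, the bound for large $n$ must be uniform, which is exactly where the quantitative control of $g_n(1)$ from \cref{app:auxfunprops} enters.
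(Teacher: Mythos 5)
Your derivation of the displayed bound is exactly the paper's: reuse \eqref{eq:apx_bound_3} with $c\gets 1$ (i.e.\ \cref{lemma:opt_bound_Fu_et_al} against the two prices $\nu$ and $r^*$), pair \cref{lem:opt_bound_2} with the reserve-price revenue \eqref{eq:price1_bound} so that $r^*$ cancels, take the minimum of the two bounds at the common $q^*$, and relax $q^*$ to the worst case in $[1/e,1]$ via \cref{lem:bounds_reserve_price}. For the ``in particular'' numerical claim the paper itself offers no written argument (it is read off the plotted values), so your attempt to justify it analytically goes beyond the paper; your monotonicity claims are fine ($A_n$ decreasing as in \cref{th:main_upper}, and $B_n$ nondecreasing follows from your observation that $z\mapsto\frac{1-(1-z)^n}{nz(1-z)^{n-1}}=\frac1n\sum_{k=0}^{n-1}(1-z)^{-k}$ is increasing), and your values at $n=2,3$ match the paper's. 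The one genuine soft spot is your citation: the uniform bound $g_n(1)\geq 1-e^{-e^{1-\gamma}}$ (equivalently $1/g_n(1)<1.278$ for all relevant $n$) is \emph{not} among the properties proved in \cref{app:auxfunprops} --- \cref{lem:propsg} concerns the maximizer of $g_n$ and \cref{lem:3} gives only an asymptotic lower bound on $\max_{c\in[0,1]}g_n(c)$, evaluated at $\tilde c_n\neq 1$. The bound you need is true and not hard to prove (e.g.\ from $-\ln(1-x)\geq x+\tfrac{x^2}{2}$ together with $H_n<\ln n+\gamma+\tfrac{1}{2n}$ one gets $-n\ln\bigl(1-e^{-(H_n-1)}\bigr)\geq e^{1-\gamma}$ for $n\geq 3$), but as written that step is unsupported and should be supplied if you want the $n\geq 5$ case to be rigorous rather than numerical.
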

A plot of the exact values of this upper bound can
be seen in~\cref{fig:approx_upper} (red).
\begin{proof}
From \eqref{eq:apx_bound_3} in the proof of \cref{th:main_upper} we
can get the following upper bound on the approximation ratio, by using (possibly
suboptimally) $c\gets 1$ for the maximization operator:
$$
\apx{F}{n}\leq \frac{1}{g_n(1)} + \frac{q^*(1-q^*)^{n-1}}{1-(1-q^*)^n}
=\frac{1}{1-(1-e^{-H_n+1})^n}+\frac{q^*(1-q^*)^{n-1}}{1-(1-q^*)^n}\;.
$$
On the other hand, using the reserve price of $F$ as a price and combining the
guarantee of \eqref{eq:price1_bound} with the upper bound on the optimal revenue
from \cref{lem:opt_bound_2}, gives us
$$
\apx{F}{n} \leq \frac{r^*\int_{0}^{q^*}\frac{1-(1-z)^{n}}{z}\,
dz}{R(q^*)\frac{1-(1-q^*)^n}{q^*}}=
\frac{\int_0^{q^*}\frac{1-(1-z)^n}{z}\,dz}{1-(1-q^*)^n}\;,
$$
since $R(q^*)=r^*q^*$. Recalling that $q^*\in[1/e,1]$ and taking the best (i.e.,
minimum) of the two bounds above, finishes the proof.
\end{proof}

\subsection{Lower Bound}
\label{sec:lower} The lower bound instance of this section (\cref{th:lower_expo})
shows that our main positive result for the approximation ratio of pricing under MHR
distributions in \cref{th:main_upper} is essentially tight (see also
\cref{fig:approx_lower}). It is achieved by an exponential distribution instance.
Before proving it, we need the following auxiliary lemma about the maximizers of
functions $g_n$ that we introduced in \eqref{eq:gdef}. Its proof can be found in
\cref{app:auxfunprops}.

\begin{lemma}
\label{lem:propsg}
For any positive integer $n$, function $g_n$ (defined in
\eqref{eq:gdef}) has a unique maximizer. Furthermore, for all $n\geq 17$,
$$
\argmax_{c\geq 0} g_n(c) \leq 1\;.
$$
\end{lemma}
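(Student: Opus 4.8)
The plan is to substitute $t = 1 - e^{-(H_n-1)c}$, a smooth increasing bijection of $c\in[0,\infty)$ onto $t\in[0,1)$ (valid for $n\geq 2$, so that $H_n-1>0$; for $n=1$ one simply has $g_1(c)=c$ and no maximizer to speak of). Under it, $g_n(c)=\frac{1}{H_n-1}\,\tilde h(t)$ with $\tilde h(t)\equiv -(1-t^n)\ln(1-t)$ on $[0,1)$. Since $\tilde h$ is continuous, vanishes at $t=0$ and as $t\to1^-$ (using $1-t^n=(1-t)\sum_{j=0}^{n-1}t^j$), and is strictly positive on $(0,1)$, its maximum on $[0,1)$ is attained at an interior critical point. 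A direct differentiation gives $\tilde h'(t)=nt^{n-1}\bigl(P(t)-Q(t)\bigr)$, where $P(t)=\frac1n\sum_{i=0}^{n-1}t^{-i}$ and $Q(t)=-\ln(1-t)$. On $(0,1)$ the function $P$ is strictly decreasing (it is $+\infty$ near $0$ and equals $1$ at $t=1$), while $Q$ is strictly increasing from $0$ to $+\infty$; hence $P-Q$ is strictly decreasing, and since it is $+\infty$ near $0$ and $-\infty$ near $1$ it changes sign exactly once, at some $t^\ast\in(0,1)$. Thus $t^\ast$ is the unique maximizer of $\tilde h$, and pulling back yields the unique maximizer $c_n$ of $g_n$ on $[0,\infty)$.

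For the second claim I would exploit that $g_n$ is unimodal with unique critical point $c_n$, so $c_n\leq 1$ if and only if $g_n'(1)\leq 0$. Since $g_n'(1)=1-(1-b)^n-nab(1-b)^{n-1}$ with $a=H_n-1$ and $b=e^{1-H_n}=e^{-a}$, dividing by $(1-b)^{n-1}>0$ shows that $g_n'(1)\leq 0$ is equivalent to the clean form
\[
(1-b)^{-(n-1)}\;\leq\;1+b(na-1)\;.
\]
Taking logarithms and bounding $-\ln(1-b)\leq b+\frac{b^2}{2(1-b)}$, it then suffices to establish $(n-1)b\bigl(1+\tfrac{b}{2(1-b)}\bigr)\leq\ln\!\bigl(1+b(na-1)\bigr)$.

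To finish, I would feed in the two-sided harmonic estimate $\ln n+\gamma<H_n<\ln n+\gamma+\frac1{2n}$ from \cref{lem:harmonic_bounds_sequence}. This yields $e^{1-\gamma}e^{-1/(2n)}<nb<e^{1-\gamma}$, $a>\ln n+\gamma-1$, and $b\leq e^{1-H_{17}}$ for $n\geq 17$. The left-hand side of the last inequality is then bounded above by an absolute constant strictly below $1.6$, while its right-hand side is at least $\ln(1+nab-b)$, with $nab$ growing like a positive multiple of $\ln n$; comparing the two shows the inequality holds for all $n\geq N_0$, for an explicit $N_0$ that comes out in the low twenties. The finitely many remaining cases $17\leq n< N_0$ are settled by evaluating $g_n'(1)$ (equivalently the displayed inequality) numerically.

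The main obstacle is precisely this last point: near $n=17$ the margin in $g_n'(1)\leq 0$ is tiny (one can check, e.g., $n=17$ gives a ratio of roughly $1.05$ in the displayed inequality), so no amount of clean analytic slack will cover every $n\geq 17$ simultaneously, and one is forced to combine the asymptotic estimate with explicit verification of a short list of small values. Keeping that list short requires the two-sided harmonic bounds together with the quadratic correction term in the $\ln(1-b)$ estimate, rather than cruder inequalities.
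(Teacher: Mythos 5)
Your argument is correct, and both halves take a route that is genuinely different from the paper's. For uniqueness, the paper works directly with $g_n''$, observing that $x\mapsto e^x(x-2)-nx+2$ changes sign once, so $g_n'$ first decreases then increases and (given $g_n'(0^+)=1$, $g_n'(\infty)=0$) crosses zero exactly once; your substitution $t=1-e^{-(H_n-1)c}$, reducing everything to $\tilde h(t)=-(1-t^n)\ln(1-t)$ and the single sign change of the monotone difference $P-Q$, is a cleaner and more elementary way to get the same unimodality, and your aside about $n=1$ (where $g_1(c)=c$ has no maximizer, so the statement implicitly needs $n\ge 2$) is a fair catch that the paper glosses over. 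For the claim $c_n\le 1$ when $n\ge 17$, the paper does not test $g_n'$ at $c=1$: it evaluates $g_n'$ at the sub-unit point $\tilde c_n=(\ln n+\gamma-1)/(H_n-1)$, notes that $g_n'(\tilde c_n)=1-\bigl(1-\tfrac{e^{1-\gamma}}{n}\bigr)^n\bigl(1+\tfrac{\ln n+\gamma-1}{e^{\gamma-1}-1/n}\bigr)$ is a decreasing sequence in $n$ (both factors positive and increasing), and so needs only the single numerical evaluation $g_{17}'(\tilde c_{17})\approx-0.019<0$. Your route --- $c_n\le1$ iff $g_n'(1)\le0$, rewritten as $(1-b)^{-(n-1)}\le 1+b(na-1)$ with $a=H_n-1$, $b=e^{-a}$, closed asymptotically for $n\ge N_0$ and numerically for the handful of cases $17\le n<N_0$ --- is sound and no less rigorous than the paper's (which also rests on a numerical check), just with a few more values to verify. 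One caveat: \cref{lem:harmonic_bounds_sequence} only gives the lower estimate $H_n>\ln n+\gamma$ (monotone decrease to $\gamma$); the upper estimate $H_n<\ln n+\gamma+\frac{1}{2n}$ you invoke is standard and true but is not stated in the paper, so you would need to prove or cite it separately, or else use the cruder $H_n\le 1+\ln n$ (which the paper does use elsewhere) at the price of pushing $N_0$ higher and lengthening the finite verification list.
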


\begin{theorem}
\label{th:lower_expo}
For $n\geq 2$ bidders with exponentially i.i.d.\ valuations,
the approximation ratio of anonymous pricing is at least
$$ \apx{\mathcal{E}}{n}\geq\frac{1}{\max_{c\geq 0}g_n(c)}\;,$$
where function $g_n$ is defined in
\eqref{eq:gdef} and $\mathcal E$ is the exponential distribution. In particular, the upper
bound derived in the proof of \cref{th:main_upper} is tight (up to an exponentially
vanishing \emph{additive} factor).
\end{theorem}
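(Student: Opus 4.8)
I would compute everything essentially exactly for the exponential distribution, exploiting scale-invariance of the approximation ratio. Since rescaling $\mathcal E$ multiplies both $\myerson{\mathcal E}{n}$ and $\price{\mathcal E}{n}$ by the same factor (and leaves $g_n$ untouched), we may assume the standard exponential $F(x)=1-e^{-x}$, $f(x)=e^{-x}$ on $D_F=[0,\infty)$. For this distribution the hazard rate is constant (so it is MHR), the virtual valuation is $\phi(x)=x-1$, the monopoly reserve is $r^*=1$ with $q^*=1/e$, and the exponential order‑statistics identities of \cref{append:prob_basics} give $\expectsmall{\orderstat{X}{n}{n}}=H_n$ and $\nu:=\expectsmall{\orderstat{X}{n-1}{n}}=H_n-1$.

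\textbf{Lower-bounding the optimal revenue.} Using $\max\{0,y\}\geq y$,
$$\myerson{\mathcal E}{n}=\expect{\max\sset{0,\phi(\orderstat{X}{n}{n})}}=\expect{\max\sset{0,\orderstat{X}{n}{n}-1}}\geq \expectsmall{\orderstat{X}{n}{n}}-1=H_n-1=\nu\;.$$
(Equivalently, a second-price auction with no reserve already extracts expected revenue $\nu$, a valid lower bound on the Myerson optimum.)

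\textbf{Computing the pricing revenue.} The plan is to parametrize a candidate price as $p=c\nu=c(H_n-1)$ with $c\geq 0$; as $c$ sweeps $[0,\infty)$ this is a bijection onto the price domain $D_F$. Substituting $F(p)=1-e^{-c(H_n-1)}$ into $\pricep{\mathcal E}{n}{p}=p[1-F^n(p)]$ yields exactly
$$\pricep{\mathcal E}{n}{c(H_n-1)}=c(H_n-1)\left[1-\left(1-e^{-c(H_n-1)}\right)^n\right]=(H_n-1)\,g_n(c)\;,$$
with $g_n$ as in \eqref{eq:gdef}. Hence $\price{\mathcal E}{n}=(H_n-1)\sup_{c\geq 0}g_n(c)$, and by \cref{lem:propsg} the supremum is attained, so $\price{\mathcal E}{n}=(H_n-1)\max_{c\geq 0}g_n(c)$. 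Combining with the previous display gives
$$\apx{\mathcal E}{n}=\frac{\myerson{\mathcal E}{n}}{\price{\mathcal E}{n}}\geq\frac{\nu}{(H_n-1)\max_{c\geq 0}g_n(c)}=\frac{1}{\max_{c\geq 0}g_n(c)}\;.$$

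\textbf{Tightness vs.\ the upper bound.} For $n\geq 17$, \cref{lem:propsg} gives $\argmax_{c\geq 0}g_n(c)\leq 1$, so $\max_{c\geq 0}g_n(c)=\max_{c\in[0,1]}g_n(c)$; comparing with \eqref{eq:apx_bound_1}, the lower bound above and the upper bound of \cref{th:main_upper} differ by exactly $\frac{(e-1)^{n-1}}{e^n-(e-1)^n}=O\big((e/(e-1))^{-n}\big)$, i.e.\ an exponentially vanishing additive term, which is the claimed tightness.

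The argument is a chain of exact identities rather than a sequence of estimates, so there is no genuine obstacle; the only points needing a little care are (i) verifying that the map $c\mapsto c\nu$ is a bijection onto $D_F$, so that optimizing over $c$ actually computes $\price{\mathcal E}{n}$ and not merely a lower bound, and (ii) invoking \cref{lem:propsg} both for existence/uniqueness of the maximizer and, for the tightness remark, for its location in $[0,1]$ once $n\geq 17$.
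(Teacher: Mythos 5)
Your proposal is correct and follows essentially the same route as the paper: lower-bound $\myerson{\mathcal E}{n}$ by $\expectsmall{\orderstat{X}{n-1}{n}}=H_n-1$, observe that the substitution $p=c(H_n-1)$ turns the pricing revenue exactly into $(H_n-1)g_n(c)$, and invoke \cref{lem:propsg} for the tightness claim when $n\geq 17$. The only cosmetic difference is that you derive the bound $\myerson{\mathcal E}{n}\geq H_n-1$ from the virtual-valuation formula with $\max\{0,y\}\geq y$, whereas the paper cites the revenue of a reserve-free second-price auction; both give the same quantity.
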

A plot of the
lower bound given by the theorem above can be seen in \cref{fig:approx_lower} (red).
\begin{proof}
Let $X\sim \mathcal E$ be an exponential random variable. 
Since the revenue of a second-price auction cannot be greater than the optimal one, we have 
$$
\myerson{\mathcal{E}}{n}
\geq \expect{\orderstat{X}{n-1}{n}}
=H_n-1,
$$
where the equality is taken from~\cref{append:prob_basics}. Furthermore,
$$
\price{\mathcal E}{n}
= \sup_{x\geq 0}x\left(1-F_{\mathcal E}^n(x)\right)
= \sup_{x\geq 0}x\left[1-\left(1-e^{-x}\right)^n\right]
= (H_n-1)\max_{c\geq 0}g_n(c)\;.
$$
Putting the above together, we get the desired lower bound on the approximation
ratio.

For the tightness, we need to show that our lower bound is within an additive,
exponentially decreasing factor of the upper bound given in \eqref{eq:apx_bound_1}.
Since the second term in \eqref{eq:apx_bound_1} is at most
$O\left(\left(\frac{e}{e-1}\right)^{-n}\right)$, it is enough to show that, for a
sufficiently large number of bidders $n$,
$$
\max_{c\in[0,\infty)} g_n(c)= \max_{c\in[0,1]} g_n(c)\;.
$$
This is exactly what we proved in~\cref{lem:propsg}, for any $n\geq 17$.
\end{proof}

\subsection{Explicit Prices -- Knowledge of the Distribution}
\label{sec:prior_indi}
\begin{figure}
  \centering
\begin{tikzpicture}
\begin{axis}[
  width=0.7\textwidth,
    height=0.5\textwidth,
  xmin=3, xmax=32,
  ymin=1.00, ymax=1.75,
    axis lines = left,
    ytick={1.00,1.1,...,1.7},
    ymajorgrids=true,
    xmajorgrids=true,
    extra y ticks={1.03974,1.70722},
    extra x ticks={3},
    grid style=dashed,
    xlabel = $n$,
    ylabel = {Approximation ratio},
    thick
]
\addplot[
    color=black,
    ]
    coordinates {(2,1.34159)(3,1.35384)(4,1.34601)(5,1.31129)(6,1.29139)(7,1.28019)(8,1.2739)(9,1.27045)(10,1.26868)(11,1.2679)(12,1.26769)(13,1.2678)(14,1.2681)(15,1.26842)(16,1.26856)(17,1.26853)(18,1.26836)(19,1.26808)(20,1.2677)(21,1.26725)(22,1.26674)(23,1.26619)(24,1.2656)(25,1.26498)(26,1.26434)(27,1.26369)(28,1.26303)(29,1.26236)(30,1.26168)(31,1.26101)(32,1.26034)(33,1.25967)(34,1.259)(35,1.25834)(36,1.25769)(37,1.25704)(38,1.2564)(39,1.25577)(40,1.25515)(41,1.25453)(42,1.25392)(43,1.25332)(44,1.25273)(45,1.25215)(46,1.25158)(47,1.25101)(48,1.25045)(49,1.2499)(50,1.24936)(51,1.24883)(52,1.24831)(53,1.24779)(54,1.24728)(55,1.24678)(56,1.24628)(57,1.2458)(58,1.24532)(59,1.24484)(60,1.24438)(61,1.24392)(62,1.24346)(63,1.24302)(64,1.24258)(65,1.24214)(66,1.24171)(67,1.24129)(68,1.24087)(69,1.24046)(70,1.24006)(71,1.23966)(72,1.23926)(73,1.23887)(74,1.23849)(75,1.23811)(76,1.23773)(77,1.23736)(78,1.237)(79,1.23664)(80,1.23628)(81,1.23593)(82,1.23558)(83,1.23524)(84,1.2349)(85,1.23456)(86,1.23423)(87,1.2339)(88,1.23358)(89,1.23326)(90,1.23294)(91,1.23263)(92,1.23232)(93,1.23201)(94,1.23171)(95,1.23141)(96,1.23112)(97,1.23082)(98,1.23053)(99,1.23025)(100,1.22996)(101,1.22968)(102,1.2294)(103,1.22913)(104,1.22886)(105,1.22859)
    };
    \addlegendentry{Optimal pricing -- upper bound}
\addplot[
    color=red,
    ]
    coordinates {(2,1)(3,1.03974)(4,1.13595)(5,1.1859)(6,1.21487)(7,1.23291)(8,1.24467)(9,1.25256)(10,1.25797)(11,1.2617)(12,1.26426)(13,1.266)(14,1.26713)(15,1.26782)(16,1.26818)(17,1.26829)(18,1.26821)(19,1.26798)(20,1.26764)(21,1.26721)(22,1.26672)(23,1.26617)(24,1.26559)(25,1.26497)(26,1.26434)(27,1.26369)(28,1.26303)(29,1.26236)(30,1.26168)(31,1.26101)(32,1.26034)(33,1.25967)(34,1.259)(35,1.25834)(36,1.25769)(37,1.25704)(38,1.2564)(39,1.25577)(40,1.25515)(41,1.25453)(42,1.25392)(43,1.25332)(44,1.25273)(45,1.25215)(46,1.25158)(47,1.25101)(48,1.25045)(49,1.2499)(50,1.24936)(51,1.24883)(52,1.24831)(53,1.24779)(54,1.24728)(55,1.24678)(56,1.24628)(57,1.2458)(58,1.24532)(59,1.24484)(60,1.24438)(61,1.24392)(62,1.24346)(63,1.24302)(64,1.24258)(65,1.24214)(66,1.24171)(67,1.24129)(68,1.24087)(69,1.24046)(70,1.24006)(71,1.23966)(72,1.23926)(73,1.23887)(74,1.23849)(75,1.23811)(76,1.23773)(77,1.23736)(78,1.237)(79,1.23664)(80,1.23628)(81,1.23593)(82,1.23558)(83,1.23524)(84,1.2349)(85,1.23456)(86,1.23423)(87,1.2339)(88,1.23358)(89,1.23326)(90,1.23294)(91,1.23263)(92,1.23232)(93,1.23201)(94,1.23171)(95,1.23141)(96,1.23112)(97,1.23082)(98,1.23053)(99,1.23025)(100,1.22996)(101,1.22968)(102,1.2294)(103,1.22913)(104,1.22886)(105,1.22859)
    };
    \addlegendentry{Optimal pricing -- lower bound}
\addplot[
  color=blue
  ]
  coordinates {(2,2.05371)(3,1.70722)(4,1.52147)(5,1.41642)(6,1.35529)(7,1.31921)(8,1.29782)(9,1.28517)(10,1.27776)(11,1.27351)(12,1.27117)(13,1.26997)(14,1.26945)(15,1.26926)(16,1.26909)(17,1.26886)(18,1.26857)(19,1.26821)(20,1.26778)(21,1.2673)(22,1.26678)(23,1.26621)(24,1.26561)(25,1.26499)(26,1.26435)(27,1.26369)(28,1.26303)(29,1.26236)(30,1.26169)(31,1.26101)(32,1.26034)(33,1.25967)(34,1.259)(35,1.25834)(36,1.25769)(37,1.25704)(38,1.2564)(39,1.25577)(40,1.25515)(41,1.25453)(42,1.25392)(43,1.25332)(44,1.25273)(45,1.25215)(46,1.25158)(47,1.25101)(48,1.25045)(49,1.2499)(50,1.24936)(51,1.24883)(52,1.24831)(53,1.24779)(54,1.24728)(55,1.24678)(56,1.24628)(57,1.2458)(58,1.24532)(59,1.24484)(60,1.24438)(61,1.24392)(62,1.24346)(63,1.24302)(64,1.24258)(65,1.24214)(66,1.24171)(67,1.24129)(68,1.24087)(69,1.24046)(70,1.24006)(71,1.23966)(72,1.23926)(73,1.23887)(74,1.23849)(75,1.23811)(76,1.23773)(77,1.23736)(78,1.237)(79,1.23664)(80,1.23628)(81,1.23593)(82,1.23558)(83,1.23524)(84,1.2349)(85,1.23456)(86,1.23423)(87,1.2339)(88,1.23358)(89,1.23326)(90,1.23294)(91,1.23263)(92,1.23232)(93,1.23201)(94,1.23171)(95,1.23141)(96,1.23112)(97,1.23082)(98,1.23053)(99,1.23025)(100,1.22996)(101,1.22968)(102,1.2294)(103,1.22913)(104,1.22886)(105,1.22859)};
  \addlegendentry{Pricing at $c_n\expectsmall{\orderstat{X}{n-1}{n}}$}
  \addplot[
    color=black,
    ]
    coordinates {(2,1.34159)(3,1.35384)(4,1.34601)(5,1.31129)(6,1.29139)(7,1.28019)(8,1.2739)(9,1.27045)(10,1.26868)(11,1.2679)(12,1.26769)(13,1.2678)(14,1.2681)(15,1.26842)(16,1.26856)(17,1.26853)(18,1.26836)(19,1.26808)(20,1.2677)(21,1.26725)(22,1.26674)(23,1.26619)(24,1.2656)(25,1.26498)(26,1.26434)(27,1.26369)(28,1.26303)(29,1.26236)(30,1.26168)(31,1.26101)(32,1.26034)(33,1.25967)(34,1.259)(35,1.25834)(36,1.25769)(37,1.25704)(38,1.2564)(39,1.25577)(40,1.25515)(41,1.25453)(42,1.25392)(43,1.25332)(44,1.25273)(45,1.25215)(46,1.25158)(47,1.25101)(48,1.25045)(49,1.2499)(50,1.24936)(51,1.24883)(52,1.24831)(53,1.24779)(54,1.24728)(55,1.24678)(56,1.24628)(57,1.2458)(58,1.24532)(59,1.24484)(60,1.24438)(61,1.24392)(62,1.24346)(63,1.24302)(64,1.24258)(65,1.24214)(66,1.24171)(67,1.24129)(68,1.24087)(69,1.24046)(70,1.24006)(71,1.23966)(72,1.23926)(73,1.23887)(74,1.23849)(75,1.23811)(76,1.23773)(77,1.23736)(78,1.237)(79,1.23664)(80,1.23628)(81,1.23593)(82,1.23558)(83,1.23524)(84,1.2349)(85,1.23456)(86,1.23423)(87,1.2339)(88,1.23358)(89,1.23326)(90,1.23294)(91,1.23263)(92,1.23232)(93,1.23201)(94,1.23171)(95,1.23141)(96,1.23112)(97,1.23082)(98,1.23053)(99,1.23025)(100,1.22996)(101,1.22968)(102,1.2294)(103,1.22913)(104,1.22886)(105,1.22859)
    };
\end{axis}
\end{tikzpicture}
\Description{Bounds on the approximation ratio of anonymous pricing for
  $n=3$ up to $n=30$ i.i.d. bidders with MHR valuations: the upper bound on optimal
  pricing derived in \cref{sec:upper} (see also \cref{fig:approx_upper}),
  the lower bound given by \cref{th:lower_expo}, and the upper bound of
  pricing at the expected value of the second-highest order statistic, scaled down
  by parameter $c_n$, given in \cref{cor:upper_bound_prior_indi} of
  \cref{sec:prior_indi}. They are all (asymptotically) optimal, their (additive)
  difference decreasing exponentially fast. They all converge to the optimal value
  of $1$ at a rate of $1+O(\ln\ln n/\ln n)$.}
  \caption{Bounds on the approximation ratio of anonymous pricing for
  $n=3,\dots,30$ i.i.d.\ bidders with MHR valuations: the upper bound on optimal
  pricing (black) derived in \cref{sec:upper} (see also \cref{fig:approx_upper}),
  the lower bound (red) given by \cref{th:lower_expo}, and the upper bound of
  pricing at the expected value of the second-highest order statistic, scaled down
  by parameter $c_n$ (blue), given in \cref{cor:upper_bound_prior_indi} of
  \cref{sec:prior_indi}. They are all (asymptotically) optimal, their (additive)
  difference decreasing exponentially fast. They all converge to the optimal value
  of $1$ at a rate of $1+O(\ln\ln n/\ln n)$.}
\label{fig:approx_lower}
\end{figure}
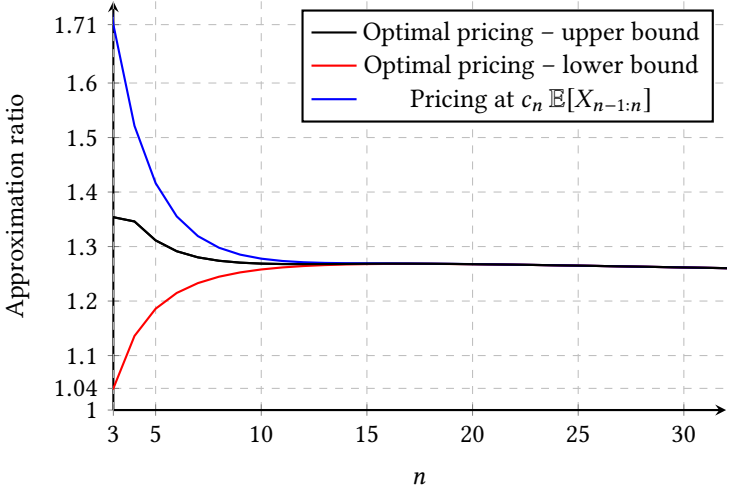

Our main result from \cref{sec:upper} demonstrates that a seller, facing $n$ bidders
with i.i.d.\ valuations from an MHR distribution $F$, can achieve (asymptotically)
optimal revenue by using just an anonymous, take-it-or-leave-it price. Taking a
careful look within the proof of \cref{th:main_upper}, we see that this upper bound
is derived by comparing the optimal Myersonian revenue (via the bound provided by
\cref{lemma:opt_bound_Fu_et_al}) to that of two different anonymous pricings;
namely, first (see \eqref{eq:price1_bound}) we use the monopoly reserve $r^*$ of $F$,
and then (see \eqref{eq:bound_price_second_order_c} and \eqref{eq:bound_price_second_order_opt})
a multiple of the expectation $\nu=\expectsmall{\orderstat{X}{n-1}{n}}$ of the
second-highest order statistic of $F$, in particular $c_n\cdot \nu$ where $c_n=\argmax_{c\in[0,1]} g_n(c)$ was defined in~\eqref{eq:cparsdef}.
Although the latter price requires only the knowledge of
$\nu=\expectsmall{\orderstat{X}{n-1}{n}}$, that is not the case for the former;
determining the reserve price $r^*$ demands, in general, a detailed knowledge of the
distribution $F$: it is the maximizer of $r(1-F(r))$.

As a result, we would ideally like to provide a more robust solution, that would
still provide optimality but depend only on limited information about $F$. If we pay
even closer attention to the proof of \cref{th:main_upper}, and the derivation of
\eqref{eq:apx_bound_2} in particular, we will see that the summand of our upper
bound that corresponds to the pricing using $r^*$ is exponentially decreasing,
according to $\left(\frac{e}{e-1}\right)^{-n}$. Therefore, if we could show that the
expected revenue achieved by using an anonymous price of $c_n \nu$ is within a constant
factor from that of using an anonymous price of $r^*$, then we could deduce that using only price $c_n
\nu$ yields essentially the same approximation ratio (and in particular, asymptotically optimal revenue). We now proceed to formalize this line of reasoning.

\begin{lemma}
\label{lem:reserve_ordstat_comparison} For $n\geq 2$ bidders with i.i.d.\ valuations
from an MHR distribution $F$ with monopoly reserve $r^*$ and parameters
$c_n\in[0,1]$ given by \eqref{eq:cparsdef},
$$
\pricep{F}{n}{c_n\cdot \expectsmall{\orderstat{X}{n-1}{n}}}\geq
(1-o(1))\frac{e-1}{e} \cdot \pricep{F}{n}{r^*}\;,
$$
where $X\sim F$.
\end{lemma}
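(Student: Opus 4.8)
The plan is to compare both prices against a single benchmark, the distribution's expectation $\mu=\expectsmall{X}$, and let the constant $\frac{e-1}{e}$ drop out of the reserve--quantile bounds of \cref{lem:bounds_reserve_price}. Write $\nu=\expectsmall{\orderstat{X}{n-1}{n}}$ and $q^*=1-F(r^*)$. First I would lower-bound the left-hand side. Since $c_n\in[0,1]$, applying the MHR tail bound \cref{lemma:bound_tail_orderstats} with $k=n-1$ and $c=c_n$, exactly as in \eqref{eq:bound_price_second_order_c}, gives $\pricep{F}{n}{c_n\nu}\ge\nu\, g_n(c_n)$, and by the choice \eqref{eq:cparsdef} of $c_n$ this equals $\nu\max_{c\in[0,1]}g_n(c)$. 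Into this I would feed two estimates: by \cref{lem:bound_2ndordstat_mean_MHR}, $\nu\ge\bigl(1-\frac{H_n-1}{n-1}\bigr)\mu=(1-o(1))\mu$; and, by the $g_n$-asymptotics underlying the passage from \eqref{eq:apx_bound_1} to \eqref{eq:apx_bound_2} in the proof of \cref{th:main_upper} (i.e.\ \cref{lem:3}), $\max_{c\in[0,1]}g_n(c)=1-O(\ln\ln n/\ln n)=1-o(1)$. Multiplying, $\pricep{F}{n}{c_n\nu}\ge(1-o(1))\mu$.

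Next I would upper-bound the right-hand side, where a crude estimate is enough: $\pricep{F}{n}{r^*}=r^*[1-(1-q^*)^n]\le r^*$. By \cref{lem:bounds_reserve_price}, part~(2) gives $r^*\le\frac{\ln(1/q^*)}{1-q^*}\mu$ and part~(1) gives $q^*\ge 1/e$; since $q\mapsto\frac{\ln(1/q)}{1-q}$ is decreasing on $(0,1)$ (a one-line derivative check), we obtain $r^*\le\frac{1}{1-1/e}\,\mu=\frac{e}{e-1}\mu$, that is, $\mu\ge\frac{e-1}{e}\pricep{F}{n}{r^*}$. Combining the two halves yields $\pricep{F}{n}{c_n\nu}\ge(1-o(1))\mu\ge(1-o(1))\frac{e-1}{e}\pricep{F}{n}{r^*}$, which is exactly the claimed inequality.

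The single step carrying real content is the identity $\max_{c\in[0,1]}g_n(c)=1-o(1)$ used above, i.e.\ the fact that pricing at the right multiple of the expected second-highest bid already captures expected revenue $(1-o(1))\mu$; but this is essentially a repackaging of the asymptotically optimal rate established in \cref{th:main_upper}, so the hard analysis has already been done. The remainder is bookkeeping with the MHR tools of \cref{sec:MHR_props}, and the precise constant $\frac{e-1}{e}$ is nothing but the value of $\frac{\ln(1/q)}{1-q}$ at the worst-case monopoly quantile $q=1/e$ coming from \cref{lem:bounds_reserve_price}. Both \cref{lemma:bound_tail_orderstats} and \cref{lem:bound_2ndordstat_mean_MHR} apply under the standing assumption $n\ge 2$, so no separate treatment of small $n$ is needed.
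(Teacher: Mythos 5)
Your proposal is correct and follows essentially the same route as the paper's proof: lower-bound $\pricep{F}{n}{c_n\nu}$ via \cref{lemma:bound_tail_orderstats}, \cref{lem:bound_2ndordstat_mean_MHR} and \cref{lem:3}, upper-bound $\pricep{F}{n}{r^*}$ via \cref{lem:bounds_reserve_price} and the monotonicity of $q\mapsto\ln(1/q)/(1-q)$, with the same constant $\frac{e}{e-1}$ emerging at $q^*=1/e$. The only cosmetic difference is that the paper carries the factor $1-(1-q^*)^n$ a step longer before discarding it, whereas you drop it immediately by bounding $\pricep{F}{n}{r^*}\leq r^*$.
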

\begin{proof}
For convenience, denote $\mu=\expectsmall{X}$ and $\nu =
\expectsmall{\orderstat{X}{n-1}{n}}$. By the proof of \cref{th:main_upper} (see
\eqref{eq:bound_price_second_order_c} and \eqref{eq:bound_price_second_order_opt}) we know that
by offering an anonymous price of $c_n\cdot \nu$ gives us an expected revenue of at
least
\begin{equation*}
\pricep{F}{n}{c_n\cdot \nu} \geq \nu \max_{c\in[0,1]}g_n(c)
    \geq\max_{c\in[0,1]}g_n(c)\frac{n-H_n}{n-1}\cdot \mu\;,
\end{equation*}
the second inequality holding due to~\cref{lem:bound_2ndordstat_mean_MHR}.

On the other hand, from \eqref{eq:price1_bound} we know that using the reserve price
$r^*$ as an anonymous price to all bidders gives an expected revenue of at most
\begin{equation*}
\pricep{F}{n}{r^*} = r^*[1-(1-q^*)^n]
    \leq \frac{\ln(1/q^*)}{1-q^*}[1-(1-q^*)^n] \cdot \mu\;,
\end{equation*}
the inequality holding due to \cref{lem:bounds_reserve_price}.

Putting everything together, we finally get that
\begin{align}
\frac{\pricep{F}{n}{r^*}}{\pricep{F}{n}{c_n\nu}}
&\leq
\frac{\ln(1/q^*)}{1-q^*}[1-(1-q^*)^n]\frac{n-1}{n-H_n}\frac{1}{\max_{c\in[0,1]}g_n(c)}\label{eq:helper7}\\
&\leq \frac{e}{e-1}\frac{n-1}{n-H_n}\frac{1}{\max_{c\in[0,1]}g_n(c)}\notag\\
&\leq (1+o(1))\frac{e}{e-1}\;.\notag
\end{align}
The second inequality holds because $\frac{\ln(1/q^*)}{1-q^*}[1-(1-q^*)^n]\leq
\frac{\ln(1/q^*)}{1-q^*}\leq \frac{e}{e-1}$, since the function
$x\mapsto\frac{\ln(1/x)}{1-x}$ is decreasing for $x>0$ and $q^*\geq 1/e$ (from
Property~1 of \cref{lem:bounds_reserve_price}). The last inequality is a consequence
of \cref{lem:3} and the fact that $H_n\leq \ln(n)+1$.
\end{proof}

As discussed before, \cref{lem:reserve_ordstat_comparison} shows us that there
indeed exists an anonymous price that depends on the knowledge of only the
expectation of the second-highest order statistic of the valuation distribution and which,
furthermore, guarantees an (asymptotically) optimal revenue. We can even provide a
closed-form upper bound for it:
\begin{theorem}
\label{cor:upper_bound_prior_indi} Let $F$ be an MHR distribution and
$\orderstat{X}{n-1}{n}$ denote the second-highest out of $n$ i.i.d.\ draws from
$F$. Then, using an anonymous price of $c_n\cdot
\expectsmall{\orderstat{X}{n-1}{n}}$, where $c_n$ is given in \eqref{eq:cparsdef}, to
sell an item to $n\geq 2$ bidders with i.i.d.\ valuations from $F$, guarantees a
revenue with approximation ratio of at most
$$
\frac{\myerson{F}{n}}{\pricep{F}{n}{c_n\expectsmall{\orderstat{X}{n-1}{n}}}}
\leq
\frac{1}{\max_{c\in[0,1]}g_n(c)}\left[1+
\frac{1}{e}\frac{n-1}{n-H_n}\left(\frac{e-1}{e}\right)^{n-2}\right]\;.
$$
\end{theorem}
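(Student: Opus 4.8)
The plan is to combine the upper bound on the optimal revenue from \cref{lemma:opt_bound_Fu_et_al} with the lower bound on the pricing revenue at $c_n\nu$ that was already established inside the proof of \cref{th:main_upper}; throughout, write $\mu=\expectsmall{X}$ and $\nu=\expectsmall{\orderstat{X}{n-1}{n}}$. Starting from $\myerson{F}{n}\leq \nu+R(q^*)(1-q^*)^{n-1}$ and dividing through by $\pricep{F}{n}{c_n\nu}$, the approximation ratio splits into two summands, which I bound separately.

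For the first summand, \eqref{eq:bound_price_second_order_c} and \eqref{eq:bound_price_second_order_opt} give $\pricep{F}{n}{c_n\nu}\geq \nu\, g_n(c_n)=\nu\max_{c\in[0,1]}g_n(c)$, hence $\nu/\pricep{F}{n}{c_n\nu}\leq 1/\max_{c\in[0,1]}g_n(c)$, producing the leading ``$1$'' inside the bracket. For the second summand I use $R(q^*)=r^*q^*$, bound $r^*\leq\frac{\ln(1/q^*)}{1-q^*}\mu$ by Property~2 of \cref{lem:bounds_reserve_price}, and in the denominator use $\pricep{F}{n}{c_n\nu}\geq\nu\max_{c\in[0,1]}g_n(c)\geq\frac{n-H_n}{n-1}\mu\max_{c\in[0,1]}g_n(c)$ by \cref{lem:bound_2ndordstat_mean_MHR}. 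The factor $\mu$ cancels and what remains is
\[
\frac{R(q^*)(1-q^*)^{n-1}}{\pricep{F}{n}{c_n\nu}}\leq\frac{n-1}{n-H_n}\cdot\frac{1}{\max_{c\in[0,1]}g_n(c)}\cdot\bigl(-q^*\ln q^*\bigr)(1-q^*)^{n-2}.
\]

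The only genuine computation is to check that $h(q)\equiv-q\ln q\,(1-q)^{n-2}$ is maximized over $q\in[1/e,1]$ at the left endpoint $q=1/e$, where $h(1/e)=\frac1e\bigl(\frac{e-1}{e}\bigr)^{n-2}$; recall $q^*\geq 1/e$ by Property~1 of \cref{lem:bounds_reserve_price}. This holds because on $(1/e,1)$ the factor $-q\ln q$ is nonnegative and strictly decreasing (its unconstrained maximum sits at $q=1/e$), while $(1-q)^{n-2}$ is nonnegative and nonincreasing, so their product is nonincreasing; for $n=2$ the second factor is identically $1$ and the claim is immediate. Plugging $h(q^*)\leq h(1/e)$ into the displayed bound and adding the two summands gives precisely
\[
\frac{\myerson{F}{n}}{\pricep{F}{n}{c_n\nu}}\leq\frac{1}{\max_{c\in[0,1]}g_n(c)}\left[1+\frac1e\,\frac{n-1}{n-H_n}\left(\frac{e-1}{e}\right)^{n-2}\right].
\]
The main obstacle here is purely bookkeeping: ensuring every estimate points the right way --- an \emph{upper} bound on $r^*$, \emph{lower} bounds on $\nu$ and on $\pricep{F}{n}{c_n\nu}$ --- so that the $\mu$'s and $\nu$'s cancel cleanly, together with the elementary monotonicity argument for $h$.
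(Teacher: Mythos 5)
Your proof is correct and follows essentially the same route as the paper: the same decomposition via \cref{lemma:opt_bound_Fu_et_al}, the same lower bound $\pricep{F}{n}{c_n\nu}\geq\nu\max_{c\in[0,1]}g_n(c)$, and the same use of \cref{lem:bounds_reserve_price} and \cref{lem:bound_2ndordstat_mean_MHR}, only unrolled directly rather than routed through \cref{lem:reserve_ordstat_comparison}. Your monotonicity argument for $-q\ln q\,(1-q)^{n-2}$ on $[1/e,1]$ is a (correct) elementary substitute for the paper's \cref{lem:helper_bound_1}.
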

A plot of this upper bound can be seen in~\cref{fig:approx_lower} (blue).
\begin{proof} Simulating the proof of the approximation upper bound in
\cref{th:main_upper}, but now using \eqref{eq:helper7} to approximate
$\pricep{F}{n}{r^*}$ by $\pricep{F}{n}{c_n\expectsmall{\orderstat{X}{n-1}{n}}}$, the
derivation in \eqref{eq:apx_bound_3} gives us that
\begin{align*}
\frac{\myerson{F}{n}}{\pricep{F}{n}{c_n\expectsmall{\orderstat{X}{n-1}{n}}}}
&\leq \frac{1}{\max_{c\in[0,1]}g_n(c)}\\
&+ \frac{\ln(1/q^*)}{1-q^*}[1-(1-q^*)^n]\frac{n-1}{n-H_n}\frac{1}{\max_{c\in[0,1]}g_n(c)}\cdot \frac{q^*(1-q^*)^{n-1}}{1-(1-q^*)^n}\\
&=\frac{1}{\max_{c\in[0,1]}g_n(c)}\left[1+ \frac{n-1}{n-H_n}\ln(1/q^*)q^*(1-q^*)^{n-2}\right]\\
&\leq \frac{1}{\max_{c\in[0,1]}g_n(c)}\left[1+ \frac{1}{e}\frac{n-1}{n-H_n}\left(\frac{e-1}{e}\right)^{n-2}\right]\;,
\end{align*}
the last inequality coming from \cref{lem:helper_bound_1}, together with the fact
that $q^*\in[1/e,1]$ (see Property~1 of \cref{lem:bounds_reserve_price}).
\end{proof}

\section{Bounds for \texorpdfstring{$\lambda$}{lambda}-Regular Distributions}
\label{sec:lambdareg}
\begin{figure}
\centering
\scalebox{0.55}{
\begin{tikzpicture}
\begin{axis}[
%  width=0.7\textwidth,
%    height=0.5\textwidth,
  xmin=0, xmax=1,
  ymin=1, ymax=1.7,
    ytick={1,1.1,...,1.8},
    ymajorgrids=true,
    xmajorgrids=true,
%    extra x ticks={0.1},
    grid style=dashed,
    xlabel = {$\lambda$},
    ylabel = {$\apx{5}{\lambda}$},
    thick
]
\addplot[
    color=red,
    ]
    coordinates {(0.001,1.00118) (0.01,1.01094) (0.02,1.02112) (0.03,1.03091) (0.04,1.04041) (0.05,1.04967) (0.06,1.05874) (0.07,1.06764) (0.08,1.07638) (0.09,1.08497) (0.1,1.09342) (0.11,1.10174) (0.12,1.10994) (0.13,1.11801) (0.14,1.12597) (0.15,1.13381) (0.16,1.14154) (0.17,1.14916) (0.18,1.15666) (0.19,1.16405) (0.2,1.17133) (0.21,1.17851) (0.22,1.18557) (0.23,1.19252) (0.24,1.19935) (0.25,1.20607) (0.26,1.21268) (0.27,1.21918) (0.28,1.22555) (0.29,1.23181) (0.3,1.23794) (0.31,1.24396) (0.32,1.24984) (0.33,1.25561) (0.34,1.26124) (0.35,1.26674) (0.36,1.2721) (0.37,1.27733) (0.38,1.28242) (0.39,1.28737) (0.4,1.29217) (0.41,1.29683) (0.42,1.30133) (0.43,1.30568) (0.44,1.30986) (0.45,1.31389) (0.46,1.31775) (0.47,1.32144) (0.48,1.32495) (0.49,1.32829) (0.5,1.33144) (0.51,1.3344) (0.52,1.33717) (0.53,1.33975) (0.54,1.34212) (0.55,1.34447) (0.56,1.34683) (0.57,1.34923) (0.58,1.35165) (0.59,1.3541) (0.6,1.35658) (0.61,1.35909) (0.62,1.36163) (0.63,1.36421) (0.64,1.36681) (0.65,1.36944) (0.66,1.37211) (0.67,1.3748) (0.68,1.37753) (0.69,1.3803) (0.7,1.3831) (0.71,1.38594) (0.72,1.38881) (0.73,1.39172) (0.74,1.39466) (0.75,1.39765) (0.76,1.40067) (0.77,1.40373) (0.78,1.40684) (0.79,1.40998) (0.8,1.41317) (0.81,1.4164) (0.82,1.41968) (0.83,1.423) (0.84,1.42636) (0.85,1.42978) (0.86,1.43324) (0.87,1.43675) (0.88,1.44031) (0.89,1.44392) (0.9,1.44758) (0.91,1.4513) (0.92,1.45507) (0.93,1.4589) (0.94,1.46279) (0.95,1.46673) (0.96,1.47074) (0.97,1.4748) (0.98,1.47893) (0.99,1.48313) (1.,1.48739)
    };
\addplot[
    color=blue,
    ]
    coordinates {(0.001,1.31148) (0.01,1.3132) (0.02,1.31513) (0.03,1.31708) (0.04,1.31904) (0.05,1.32103) (0.06,1.32304) (0.07,1.32507) (0.08,1.32712) (0.09,1.32919) (0.1,1.33128) (0.11,1.33339) (0.12,1.33552) (0.13,1.33768) (0.14,1.33985) (0.15,1.34205) (0.16,1.34428) (0.17,1.34652) (0.18,1.34879) (0.19,1.35109) (0.2,1.35341) (0.21,1.35575) (0.22,1.35812) (0.23,1.36052) (0.24,1.36294) (0.25,1.36539) (0.26,1.36786) (0.27,1.37036) (0.28,1.3729) (0.29,1.37545) (0.3,1.37804) (0.31,1.38066) (0.32,1.38331) (0.33,1.38598) (0.34,1.38869) (0.35,1.39143) (0.36,1.39421) (0.37,1.39701) (0.38,1.39985) (0.39,1.40272) (0.4,1.40563) (0.41,1.40857) (0.42,1.41155) (0.43,1.41456) (0.44,1.41761) (0.45,1.4207) (0.46,1.42382) (0.47,1.42699) (0.48,1.4302) (0.49,1.43344) (0.5,1.43673) (0.51,1.44006) (0.52,1.44344) (0.53,1.44686) (0.54,1.45032) (0.55,1.45383) (0.56,1.45739) (0.57,1.461) (0.58,1.46465) (0.59,1.46836) (0.6,1.47211) (0.61,1.47592) (0.62,1.47978) (0.63,1.4837) (0.64,1.48768) (0.65,1.49171) (0.66,1.4958) (0.67,1.49995) (0.68,1.50416) (0.69,1.50844) (0.7,1.51278) (0.71,1.51719) (0.72,1.52167) (0.73,1.52621) (0.74,1.53083) (0.75,1.53553) (0.76,1.5403) (0.77,1.54515) (0.78,1.55008) (0.79,1.5551) (0.8,1.5602) (0.81,1.56539) (0.82,1.57068) (0.83,1.57605) (0.84,1.58153) (0.85,1.58711) (0.86,1.5928) (0.87,1.5986) (0.88,1.60451) (0.89,1.61055) (0.9,1.61671) (0.91,1.623) (0.92,1.62944) (0.93,1.63602) (0.94,1.64276) (0.95,1.64966) (0.96,1.65675) (0.97,1.66404) (0.98,1.67155) (0.99,1.67931) (1.,1.68739)
    };
\end{axis}
\end{tikzpicture}}
~
\scalebox{0.55}{
\begin{tikzpicture}
\begin{axis}[
%  width=0.7\textwidth,
%    height=0.5\textwidth,
  xmin=0, xmax=1,
  ymin=1, ymax=1.7,
    ytick={1,1.1,...,1.8},
    ymajorgrids=true,
    xmajorgrids=true,
%    extra x ticks={0.1},
    grid style=dashed,
    xlabel = {$\lambda$},
    ylabel = {$\apx{20}{\lambda}$},
    thick
]
\addplot[
    color=red,
    ]
    coordinates {(0.001,1.0017) (0.01,1.01478) (0.02,1.0279) (0.03,1.04027) (0.04,1.05211) (0.05,1.06354) (0.06,1.07463) (0.07,1.08542) (0.08,1.09594) (0.09,1.10622) (0.1,1.11627) (0.11,1.12612) (0.12,1.13576) (0.13,1.14521) (0.14,1.15447) (0.15,1.16356) (0.16,1.17247) (0.17,1.18122) (0.18,1.18979) (0.19,1.1982) (0.2,1.20645) (0.21,1.21454) (0.22,1.22246) (0.23,1.23023) (0.24,1.23784) (0.25,1.24528) (0.26,1.25257) (0.27,1.25969) (0.28,1.26665) (0.29,1.27345) (0.3,1.28009) (0.31,1.28656) (0.32,1.29286) (0.33,1.299) (0.34,1.30497) (0.35,1.31076) (0.36,1.31638) (0.37,1.32183) (0.38,1.32709) (0.39,1.33218) (0.4,1.33708) (0.41,1.34179) (0.42,1.34631) (0.43,1.35064) (0.44,1.35478) (0.45,1.35872) (0.46,1.36245) (0.47,1.36598) (0.48,1.36929) (0.49,1.3724) (0.5,1.37528) (0.51,1.37799) (0.52,1.3807) (0.53,1.38344) (0.54,1.38622) (0.55,1.38902) (0.56,1.39186) (0.57,1.39473) (0.58,1.39763) (0.59,1.40056) (0.6,1.40353) (0.61,1.40653) (0.62,1.40957) (0.63,1.41264) (0.64,1.41575) (0.65,1.41889) (0.66,1.42207) (0.67,1.42529) (0.68,1.42855) (0.69,1.43185) (0.7,1.43519) (0.71,1.43857) (0.72,1.442) (0.73,1.44546) (0.74,1.44897) (0.75,1.45252) (0.76,1.45612) (0.77,1.45977) (0.78,1.46346) (0.79,1.4672) (0.8,1.47098) (0.81,1.47482) (0.82,1.47871) (0.83,1.48265) (0.84,1.48665) (0.85,1.4907) (0.86,1.4948) (0.87,1.49896) (0.88,1.50318) (0.89,1.50746) (0.9,1.5118) (0.91,1.5162) (0.92,1.52066) (0.93,1.52519) (0.94,1.52978) (0.95,1.53444) (0.96,1.53917) (0.97,1.54397) (0.98,1.54884) (0.99,1.55379) (1.,1.55881)
    };
\addplot[
    color=blue,
    ]
    coordinates {(0.001,1.2679) (0.01,1.26971) (0.02,1.27171) (0.03,1.2737) (0.04,1.27569) (0.05,1.27768) (0.06,1.27965) (0.07,1.28162) (0.08,1.28358) (0.09,1.28553) (0.1,1.28747) (0.11,1.28939) (0.12,1.2913) (0.13,1.2932) (0.14,1.29508) (0.15,1.29696) (0.16,1.29886) (0.17,1.30077) (0.18,1.30271) (0.19,1.30466) (0.2,1.30664) (0.21,1.30863) (0.22,1.31064) (0.23,1.31267) (0.24,1.31472) (0.25,1.31679) (0.26,1.31888) (0.27,1.321) (0.28,1.32313) (0.29,1.32529) (0.3,1.32747) (0.31,1.32967) (0.32,1.33189) (0.33,1.33414) (0.34,1.33641) (0.35,1.3387) (0.36,1.34102) (0.37,1.34336) (0.38,1.34573) (0.39,1.34813) (0.4,1.35055) (0.41,1.353) (0.42,1.35548) (0.43,1.35798) (0.44,1.36052) (0.45,1.36308) (0.46,1.36567) (0.47,1.3683) (0.48,1.37096) (0.49,1.37364) (0.5,1.37637) (0.51,1.37912) (0.52,1.38191) (0.53,1.38474) (0.54,1.3876) (0.55,1.3905) (0.56,1.39344) (0.57,1.39642) (0.58,1.39944) (0.59,1.40251) (0.6,1.40562) (0.61,1.40877) (0.62,1.41197) (0.63,1.41522) (0.64,1.41852) (0.65,1.42187) (0.66,1.42528) (0.67,1.42874) (0.68,1.43226) (0.69,1.43584) (0.7,1.43949) (0.71,1.4432) (0.72,1.44699) (0.73,1.45085) (0.74,1.45478) (0.75,1.4588) (0.76,1.4629) (0.77,1.46709) (0.78,1.47137) (0.79,1.47576) (0.8,1.48025) (0.81,1.48485) (0.82,1.48958) (0.83,1.49443) (0.84,1.49941) (0.85,1.50454) (0.86,1.50982) (0.87,1.51527) (0.88,1.52089) (0.89,1.5267) (0.9,1.53272) (0.91,1.53895) (0.92,1.54542) (0.93,1.55214) (0.94,1.55914) (0.95,1.56644) (0.96,1.57408) (0.97,1.58208) (0.98,1.59049) (0.99,1.59937) (1.,1.60881)
    };
\end{axis}
\end{tikzpicture}}
~
\scalebox{0.55}{
\begin{tikzpicture}
\begin{axis}[
%  width=0.7\textwidth,
%    height=0.5\textwidth,
  xmin=0, xmax=1,
  ymin=1, ymax=1.7,
    ytick={1,1.1,...,1.8},
    ymajorgrids=true,
    xmajorgrids=true,
%    extra x ticks={0.1},
    grid style=dashed,
    xlabel = {$\lambda$},
    ylabel = {$\apx{100}{\lambda}$},
    thick
]
\addplot[
    color=red,
    ]
    coordinates {(0.001,1.00186) (0.01,1.01588) (0.02,1.0298) (0.03,1.04286) (0.04,1.05533) (0.05,1.06733) (0.06,1.07894) (0.07,1.09022) (0.08,1.10121) (0.09,1.11192) (0.1,1.12238) (0.11,1.13261) (0.12,1.14261) (0.13,1.15241) (0.14,1.162) (0.15,1.1714) (0.16,1.1806) (0.17,1.18962) (0.18,1.19846) (0.19,1.20711) (0.2,1.21559) (0.21,1.2239) (0.22,1.23203) (0.23,1.23998) (0.24,1.24777) (0.25,1.25538) (0.26,1.26281) (0.27,1.27008) (0.28,1.27717) (0.29,1.28409) (0.3,1.29083) (0.31,1.2974) (0.32,1.30379) (0.33,1.31) (0.34,1.31603) (0.35,1.32188) (0.36,1.32754) (0.37,1.33302) (0.38,1.33831) (0.39,1.34341) (0.4,1.34831) (0.41,1.35302) (0.42,1.35753) (0.43,1.36183) (0.44,1.36594) (0.45,1.36983) (0.46,1.37351) (0.47,1.37698) (0.48,1.38023) (0.49,1.38326) (0.5,1.38608) (0.51,1.38888) (0.52,1.39172) (0.53,1.39458) (0.54,1.39748) (0.55,1.40041) (0.56,1.40337) (0.57,1.40637) (0.58,1.4094) (0.59,1.41246) (0.6,1.41556) (0.61,1.41869) (0.62,1.42186) (0.63,1.42507) (0.64,1.42831) (0.65,1.4316) (0.66,1.43492) (0.67,1.43828) (0.68,1.44168) (0.69,1.44512) (0.7,1.44861) (0.71,1.45213) (0.72,1.45571) (0.73,1.45932) (0.74,1.46298) (0.75,1.46668) (0.76,1.47044) (0.77,1.47424) (0.78,1.47808) (0.79,1.48198) (0.8,1.48593) (0.81,1.48993) (0.82,1.49398) (0.83,1.49809) (0.84,1.50225) (0.85,1.50647) (0.86,1.51075) (0.87,1.51508) (0.88,1.51947) (0.89,1.52393) (0.9,1.52844) (0.91,1.53302) (0.92,1.53767) (0.93,1.54238) (0.94,1.54716) (0.95,1.55201) (0.96,1.55693) (0.97,1.56193) (0.98,1.567) (0.99,1.57214) (1.,1.57737)
    };
\addplot[
    color=blue,
    ]
    coordinates {(0.001,1.2303) (0.01,1.23332) (0.02,1.2367) (0.03,1.24009) (0.04,1.2435) (0.05,1.24692) (0.06,1.25036) (0.07,1.25381) (0.08,1.25726) (0.09,1.26072) (0.1,1.26419) (0.11,1.26767) (0.12,1.27115) (0.13,1.27462) (0.14,1.2781) (0.15,1.28157) (0.16,1.28504) (0.17,1.2885) (0.18,1.29195) (0.19,1.29539) (0.2,1.29882) (0.21,1.30222) (0.22,1.30561) (0.23,1.30897) (0.24,1.31231) (0.25,1.31562) (0.26,1.3189) (0.27,1.32214) (0.28,1.32535) (0.29,1.32851) (0.3,1.33164) (0.31,1.33471) (0.32,1.33773) (0.33,1.3407) (0.34,1.3436) (0.35,1.34645) (0.36,1.34923) (0.37,1.35193) (0.38,1.35456) (0.39,1.35711) (0.4,1.35961) (0.41,1.36214) (0.42,1.36469) (0.43,1.36726) (0.44,1.36987) (0.45,1.3725) (0.46,1.37516) (0.47,1.37784) (0.48,1.38056) (0.49,1.3833) (0.5,1.38608) (0.51,1.38888) (0.52,1.39172) (0.53,1.39458) (0.54,1.39748) (0.55,1.40041) (0.56,1.40337) (0.57,1.40637) (0.58,1.4094) (0.59,1.41246) (0.6,1.41556) (0.61,1.41869) (0.62,1.42186) (0.63,1.42507) (0.64,1.42831) (0.65,1.4316) (0.66,1.43492) (0.67,1.43828) (0.68,1.44168) (0.69,1.44512) (0.7,1.44861) (0.71,1.45213) (0.72,1.45571) (0.73,1.45932) (0.74,1.46298) (0.75,1.46668) (0.76,1.47044) (0.77,1.47424) (0.78,1.47808) (0.79,1.48198) (0.8,1.48593) (0.81,1.48993) (0.82,1.49398) (0.83,1.49809) (0.84,1.50225) (0.85,1.50647) (0.86,1.51075) (0.87,1.51508) (0.88,1.51948) (0.89,1.52394) (0.9,1.52847) (0.91,1.53307) (0.92,1.53776) (0.93,1.54255) (0.94,1.54747) (0.95,1.55259) (0.96,1.55799) (0.97,1.56386) (0.98,1.57045) (0.99,1.57813) (1.,1.58737)
    };
\end{axis}
\end{tikzpicture}}
\Description{Upper and lower bounds on the approximation ratio of anonymous pricing for $n=5$ (left), $n=20$ (centre) and $n=100$ (right) bidders with i.i.d. $\lambda$-regular valuations, as a function of $\lambda$.}
\caption{Upper (\cref{thm:regularuppfix}) and lower (\cref{thm:regularlowfix}) bounds on the approximation ratio of anonymous pricing for $n=5$ (left), $n=20$ (centre) and $n=100$ (right) bidders with i.i.d.\ $\lambda$-regular valuations, as a function of $\lambda$.}
\label{fig:regupplowbounds}
\end{figure}
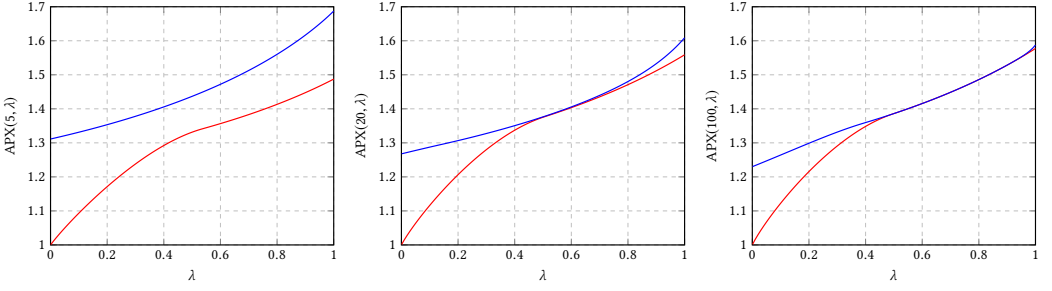

In this section we provide the generalized, $\lambda$-regular counterparts of our
main results for MHR distributions of the previous~\cref{sec:MHR}. In particular, we
provide upper (\cref{thm:regularuppfix}; cf.~\cref{th:main_upper}) and lower
(\cref{thm:regularlowfix}; cf.~\cref{th:lower_expo}) bounds on the approximation
ratio of anonymous pricing for $n$ bidders with i.i.d.\ $\lambda$-regular
valuations, and also derive an asymptotically tight expression for large $n$
(\cref{thm:regularinf}; cf.~\cref{th:main_upper}).

The following quantities will help us to state and prove our results in this section. 
For an integer $n\geq 2$ and
$\lambda\in(0,1]$, we define the quantities $\beta_{n,\lambda}, a_{n,\lambda} > 0$
and the function $g_{n,\lambda}:[0,\infty)\rightarrow[0,\infty)$ as follows:
\begin{equation}
\label{eq:helperreg_ab}
    \beta_{n,\lambda} \equiv \frac{n!\Gamma(2-\lambda)}{\Gamma(n+1-\lambda)}=\prod_{k=2}^n\left(1-\frac{\lambda}{k}\right)^{-1},
    \qquad
    a_{n,\lambda} \equiv
      \begin{cases}
      \frac{(1-\lambda)^{1/\lambda}\left[1-(1-\lambda)^{1/\lambda}\right]^{n-1}}{1-\left[1-(1-\lambda)^{1/\lambda}\right]^n},
      &\text{if}\;\; \lambda\in (0,1)\;,\\
      \frac{1}{n}, &\text{if}\;\; \lambda=1\;,
      \end{cases}
\end{equation}
\begin{equation}
\label{eq:helperreg_g}
    g_{n,\lambda}(c) \equiv c\left[1-\left[1-\left[1+c\left(\beta_{n,\lambda}-1\right)\right]^{-1/\lambda}\right]^n\right]\;.
\end{equation}
The function $g_n$ introduced in \cref{eq:gdef} can be recovered from $g_{n,\lambda}$ in the limit $\lambda\rightarrow0$. Note also that $\beta_{n,\lambda}$ corresponds to the fraction in
\cref{lemma:bound_tail_regular} with $k=n-1$. Using Stirling's approximation (see,
e.g.,~\citet[Ch. 8]{rudin:76}) we can derive the asymptotics
\begin{equation}\label{eq:betaapprox}\beta_{n,\lambda}=\frac{n!\Gamma(2-\lambda)}{\Gamma(n+1-\lambda)}\sim\Gamma(2-\lambda)n^\lambda\;,\end{equation}
valid for fixed $\lambda$, and for large $n$.

\begin{theorem}
\label{thm:regularuppfix}
Let $F$ be a $\lambda$-regular distribution with $\lambda\in(0,1]$, and $n\geq 2$.
Using the same take-it-or-leave-it price, to sell an item to $n$ buyers with i.i.d.\
valuations from $F$, achieves an approximation ratio of at most
\[\apx{F}{n}\leq \frac{1}{\sup_{c\in[0,1]}g_{n,\lambda}(c)}+a_{n,\lambda}\]
with respect to the optimal revenue,
where $a_{n,\lambda}$ and $g_{n,\lambda}$ are defined as in \eqref{eq:helperreg_ab} and \eqref{eq:helperreg_g}.
\end{theorem}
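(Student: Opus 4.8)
The plan is to follow the proof of \cref{th:main_upper} essentially line by line, swapping each MHR-specific ingredient for its $\lambda$-regular analogue. Writing $\nu=\expectsmall{\orderstat{X}{n-1}{n}}$ for $X\sim F$, I would first produce two lower bounds on $\price{F}{n}$. Pricing at the monopoly reserve $r^*$ gives, exactly as in \eqref{eq:price1_bound}, $\pricep{F}{n}{r^*}=R(q^*)\frac{1-(1-q^*)^n}{q^*}$, where now $q^*\geq(1-\lambda)^{1/\lambda}$ by \cref{lemma:regularproperties}. Pricing at $c\nu$ for $c\in[0,1]$ gives, via \cref{lemma:bound_tail_regular} applied with $k=n-1$, the estimate $F(c\nu)\leq 1-[1+c(\beta_{n,\lambda}-1)]^{-1/\lambda}$ and hence $\pricep{F}{n}{c\nu}\geq\nu\,g_{n,\lambda}(c)$ (with $g_{n,\lambda}$ as in \eqref{eq:helperreg_g}), so that $\price{F}{n}\geq\nu\sup_{c\in[0,1]}g_{n,\lambda}(c)$. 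Here one uses that the fraction appearing in \cref{lemma:bound_tail_regular} with $k=n-1$ is precisely $\beta_{n,\lambda}$, i.e.\ the identity $\frac{n!\Gamma(2-\lambda)}{\Gamma(n+1-\lambda)}=\prod_{k=2}^n(1-\lambda/k)^{-1}$ recorded in \eqref{eq:helperreg_ab}.

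Next I would combine these with the optimal-revenue upper bound of \cref{lemma:opt_bound_Fu_et_al}, which, being a statement about regular distributions, applies a fortiori to $\lambda$-regular ones. Feeding both pricing lower bounds and that bound into $\apx{F}{n}=\myerson{F}{n}/\price{F}{n}$ and splitting the numerator into its two summands, exactly as in the chain leading to \eqref{eq:apx_bound_3}, yields
\[\apx{F}{n}\leq\frac{1}{\sup_{c\in[0,1]}g_{n,\lambda}(c)}+\frac{q^*(1-q^*)^{n-1}}{1-(1-q^*)^n}\;.\]
It then remains to bound the last term by $a_{n,\lambda}$. Since the map $x\mapsto\frac{x(1-x)^{n-1}}{1-(1-x)^n}$ is decreasing on $(0,1]$ (as already invoked in the proof of \cref{th:main_upper}) and $q^*\geq(1-\lambda)^{1/\lambda}>0$ for $\lambda\in(0,1)$, evaluating it at $q^*$ is at most its value at $(1-\lambda)^{1/\lambda}$, which is exactly $a_{n,\lambda}$ by \eqref{eq:helperreg_ab}. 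For the boundary case $\lambda=1$, where $(1-\lambda)^{1/\lambda}=0$ and the monopoly reserve may be degenerate, I would instead pass to the limit $q^*\to0^+$ using the conventions of the footnote in \cref{sec:model}, noting $\lim_{x\to0^+}\frac{x(1-x)^{n-1}}{1-(1-x)^n}=\frac1n=a_{n,1}$.

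I do not expect a genuinely new difficulty here: the heavy lifting has already been done in \cref{lemma:bound_tail_regular,lemma:regularproperties}, and what remains is bookkeeping. The steps that need the most care are (i) identifying the quantity of \cref{lemma:bound_tail_regular} with $\beta_{n,\lambda}$ and verifying its product form; (ii) the boundary case $\lambda=1$, where $r^*$ need not exist and where $\nu$---indeed $\myerson{F}{n}$ itself---could a priori be infinite, in which event the asserted inequality is trivial but should be stated explicitly; and (iii) observing that $\sup_{c\in[0,1]}g_{n,\lambda}(c)>0$ (e.g.\ because $g_{n,\lambda}(1)>0$), so that the first summand is well-defined. The asymptotic companion \eqref{eq:betaapprox}, obtained from Stirling's formula, is not needed for this theorem but will enter later when passing to the large-$n$ limit.
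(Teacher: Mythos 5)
Your proposal is correct and follows essentially the same route as the paper: replace \eqref{eq:bound_price_second_order_c} by the bound from \cref{lemma:bound_tail_regular} with $k=n-1$ (whose fraction is exactly $\beta_{n,\lambda}$), reuse the chain leading to \eqref{eq:apx_bound_3} with \cref{lemma:opt_bound_Fu_et_al}, and then bound the reserve-quantile term by $a_{n,\lambda}$ via the monotonicity of $x\mapsto\frac{x(1-x)^{n-1}}{1-(1-x)^n}$ together with $q^*\geq(1-\lambda)^{1/\lambda}$ from \cref{lemma:regularproperties}. Your explicit treatment of the $\lambda=1$ boundary case and of the well-definedness of the suprema only spells out details the paper leaves implicit.
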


\begin{proof}
We follow the same steps as for \cref{th:main_upper}, but use
\cref{lemma:bound_tail_regular} and the quantile bound from
\cref{lemma:regularproperties}. Specifically, instead of \eqref{eq:bound_price_second_order_c} we bound the revenue from pricing at $c\cdot \nu$, where $c\in[0,1]$ and $\nu=\expect{\orderstat{X}{n-1}{n}}$, as
\begin{equation}
\label{eq:bound_price_second_order_c_lambda}
\pricep{F}{n}{c\nu}
=c\nu[1-F(c\nu)^n]
\geq c\nu\left[1-\left[1-\left[1+c\left(\beta_{n,\lambda}-1\right)\right]^{-1/\lambda}\right]^n\right]\;.
\end{equation}
the inequality holding due to \cref{lemma:bound_tail_regular} (for $k=n-1$). Optimizing with respect to $c$ we get that
$$\price{F}{n} \geq \nu \sup_{c\in[0,1]}g_{n,\lambda}(c)\;.
$$
We can now plug this bound at the derivation in \cref{th:main_upper} to get
\begin{align*}
\apx{F}{n}&= \frac{\myerson{F}{n}}{\price{F}{n}} \\
  &\leq \frac{\nu}{\nu \sup_{c\in[0,1]}g_{n,\lambda}(c)}+ \frac{R(q^*)(1-q^*)^{n-1}}{R(q^*)\frac{1-(1-q^*)^n}{q^*}} \\
  &= \frac{1}{\sup_{c\in[0,1]}g_{n,\lambda}(c)} + \frac{q^*(1-q^*)^{n-1}}{1-(1-q^*)^n} \\
  & \leq \frac{1}{\sup_{c\in[0,1]}g_{n,\lambda}(c)} + a_{n,\lambda}\;.
\end{align*}
For the last step, we use the bound $q^\ast\geq (1-\lambda)^{1/\lambda}$ from \cref{lemma:regularproperties}, together with the definition of $a_{n,\lambda}$ as in \eqref{eq:helperreg_ab}.
\end{proof}

For the lower bounds, we introduce the following family of \emph{rescaled Pareto} distributions $F_{\lambda,r}$, with support $[1,\infty)$, for $\lambda\in(0,1]$ and $r\in(0,1]$:

\begin{equation}
\label{eq:pareto_instances}
F_{\lambda,r}(x)=1-\frac{1}{\left[1+\frac{1}{r}(x-1)\right]^{1/\lambda}}\;.
\end{equation}
These can be seen as lying at the edge of
$\lambda$-regularity; it is not difficult to see that $F_{\lambda, r}$ is $\lambda$-regular, but \emph{not} $\lambda'$-regular for $\lambda'<\lambda$ (see~\cref{sec:lambdareg-model}).
They are inspired by the lower-bound construction of~
\citet[Appendix~C.3]{Duetting2016}, but we generalized them to be $\lambda$-regular and also removed the point mass at the right endpoint of the
support in order to guarantee continuity. 
On the other hand, for $r=1$, our distributions reduce to standard Pareto: $F_{\lambda,1}=1-\frac{1}{x^{1/\lambda}}$.  

We first collect below some basic properties of rescaled Pareto distributions, concerning
their expected second-highest order statistic and pricing revenue;  their proof can be found
in \cref{app:auxfunprops}.
\begin{lemma}\label{lem:basicpareto}
For any $\lambda\in(0,1]$ and $r\in(0,1]$,
\begin{enumerate}
    \item $\expect[X\sim F_{\lambda,r}]{X_{n-1:n}}=1+r\left(\beta_{n,\lambda}-1\right)$;
    \item $\price{F_{\lambda,r}}{n}=\sup_{q\in[0,1]}\left(1+r\left(\frac{1}{q^\lambda}-1\right)\right)\left(1-(1-q)^n\right)$.
\end{enumerate}
\end{lemma}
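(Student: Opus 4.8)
The plan is to deduce both statements from a single change of variables between value space and quantile space for the distribution $F_{\lambda,r}$. The crucial observation is that, from the definition \eqref{eq:pareto_instances}, the quantile $q(x)=1-F_{\lambda,r}(x)$ of a value $x\in[1,\infty)$ satisfies $q(x)^{-\lambda}=1+\tfrac1r(x-1)$, so this relation can be inverted explicitly as $x=1+r\bigl(q(x)^{-\lambda}-1\bigr)$. Both parts follow by substituting this identity into the appropriate expectation or supremum.

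For the first part I would apply the inversion identity pointwise at $x=\orderstat{X}{n-1}{n}$ and take expectations, obtaining
\[
\expect[X\sim F_{\lambda,r}]{\orderstat{X}{n-1}{n}}=1+r\left(\expect{\bigl(1-F_{\lambda,r}(\orderstat{X}{n-1}{n})\bigr)^{-\lambda}}-1\right).
\]
The remaining expectation is precisely the quantity evaluated, for general continuous $F$, inside the proof of \cref{lemma:bound_tail_regular}: specialising the chain of substitutions there to $k=n-1$ gives $\expect{(1-F(\orderstat{X}{n-1}{n}))^{-\lambda}}=(n-1)\binom{n}{n-1}\mathrm{B}(n-1,2-\lambda)=\frac{n!\,\Gamma(2-\lambda)}{\Gamma(n+1-\lambda)}=\beta_{n,\lambda}$, where the condition $2-\lambda>0$ (valid for $\lambda\in(0,1]$) ensures the Beta integral converges. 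Plugging back yields the claimed $1+r(\beta_{n,\lambda}-1)$. I would also remark that this expectation is finite for all $\lambda\le 1$, since the tail of $\orderstat{X}{n-1}{n}$ decays like $x^{-2/\lambda}$.

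For the second part I would start from $\pricep{F_{\lambda,r}}{n}{p}=p\bigl(1-F_{\lambda,r}^n(p)\bigr)$ and substitute $q=1-F_{\lambda,r}(p)$; by the inversion identity $p=1+r(q^{-\lambda}-1)$ and $1-F_{\lambda,r}^n(p)=1-(1-q)^n$, so the revenue at price $p$ equals $\bigl(1+r(q^{-\lambda}-1)\bigr)\bigl(1-(1-q)^n\bigr)$. Since $p\mapsto q$ is a continuous strictly decreasing bijection of the support $[1,\infty)$ onto $(0,1]$, we have $\sup_{p\in[1,\infty)}\pricep{F_{\lambda,r}}{n}{p}=\sup_{q\in(0,1]}\bigl(1+r(q^{-\lambda}-1)\bigr)\bigl(1-(1-q)^n\bigr)$; and because this expression tends to a finite limit as $q\to0^+$ (to $0$ when $\lambda<1$, and to $nr$ when $\lambda=1$), adjoining the endpoint does not change the supremum, giving $\sup_{q\in[0,1]}$ as stated.

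The computations are routine once the inversion identity is written down, so I do not expect a serious obstacle; the two points needing a little care are (i) correctly importing the internal calculation from the proof of \cref{lemma:bound_tail_regular} with $k=n-1$ in order to identify the expectation with $\beta_{n,\lambda}$, and (ii) the bookkeeping at $q=0$ in the second part, where the factor $1+r(q^{-\lambda}-1)$ blows up while $1-(1-q)^n$ vanishes — resolved by the limiting argument above, and consistent with the fact that $a_{n,1}=1/n$ is defined separately in \eqref{eq:helperreg_ab}.
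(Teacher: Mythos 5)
Your proof is correct and takes essentially the same route as the paper's: the same quantile change of variables $x=1+r\left(q^{-\lambda}-1\right)$, with the expectation in part~1 reduced to the Beta-function value $\frac{n!\,\Gamma(2-\lambda)}{\Gamma(n+1-\lambda)}=\beta_{n,\lambda}$ --- the paper simply recomputes this integral directly against the density of $\orderstat{X}{n-1}{n}$, whereas you import the identical computation (specialised to $k=n-1$) from the proof of \cref{lemma:bound_tail_regular}. Your treatment of the removable singularity at $q=0$ in part~2 is also sound and matches the paper's convention for $H_{n,\lambda}$.
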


For the sake of exposition we introduce the function $H_{n,\lambda}(r,q)$, for
$n\geq2$, $\lambda\in(0,1]$, $r\in(0,1]$, and $q=[0,1]$,
\begin{equation}\label{eq:hnlambda}
    H_{n,\lambda}(r,q)=\left(1+r\left(\frac{1}{q^\lambda}-1\right)\right)\left(1-(1-q)^n\right)\;.
\end{equation}
Note that $H_{n,\lambda}(r,q)$ is continuously defined at $q=0$ as the singularity
is removable.

\begin{theorem}\label{thm:regularlowfix} For $\lambda\in(0,1]$, $r\in(0,1]$, and
$n\geq 2$ bidders with i.i.d.\ valuations from the rescaled Pareto distribution
$F_{\lambda,r}$ (see~\eqref{eq:pareto_instances}), the approximation ratio of anonymous pricing is at least

\begin{equation}\label{eq:regularlowfix_1}\apx{F_{\lambda,r}}{n}\geq
\frac{1+r\left(\beta_{n,\lambda}-1\right)}{\sup_{q\in[0,1]}H_{n,\lambda}(r,q)}\;.\end{equation}
This implies the following lower bound on the approximation ratio of $\lambda$-regular
distributions:
\begin{equation}\label{eq:regularlowfix}\apx{n}{\lambda}\geq\adjustlimits\sup_{r\in[0,1]}\inf_{q\in[0,1]}\frac{1+r\left(\beta_{n,\lambda}-1\right)}{H_{n,\lambda}(r,q)}\;.\end{equation}
\end{theorem}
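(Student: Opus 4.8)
The plan is to take the rescaled Pareto family of~\eqref{eq:pareto_instances} itself as the family of hard instances and to reduce everything to the two identities packaged in~\cref{lem:basicpareto}. Concretely, to establish~\eqref{eq:regularlowfix_1} I would first lower-bound the optimal revenue by the revenue of a feasible truthful mechanism --- the plain second-price auction with no reserve --- whose revenue for $n$ i.i.d.\ bidders is exactly the expected second-highest valuation. This gives $\myerson{F_{\lambda,r}}{n}\geq\expect[X\sim F_{\lambda,r}]{\orderstat{X}{n-1}{n}}=1+r(\beta_{n,\lambda}-1)$ by~\cref{lem:basicpareto}(1). For the denominator I would invoke~\cref{lem:basicpareto}(2) which, together with the definition~\eqref{eq:hnlambda}, identifies $\price{F_{\lambda,r}}{n}=\sup_{q\in[0,1]}H_{n,\lambda}(r,q)$. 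Dividing the bound on $\myerson{F_{\lambda,r}}{n}$ by this value and recalling $\apx{F_{\lambda,r}}{n}=\myerson{F_{\lambda,r}}{n}/\price{F_{\lambda,r}}{n}$ yields~\eqref{eq:regularlowfix_1} immediately.

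For~\eqref{eq:regularlowfix} I would argue as follows. Each $F_{\lambda,r}$ with $r\in(0,1]$ is $\lambda$-regular (as noted right after~\eqref{eq:pareto_instances}), hence an admissible competitor in the supremum defining $\apx{n}{\lambda}$; so $\apx{n}{\lambda}\geq\apx{F_{\lambda,r}}{n}$ and~\eqref{eq:regularlowfix_1} applies for every $r$. Since the numerator $1+r(\beta_{n,\lambda}-1)$ is a strictly positive constant in $q$ and $H_{n,\lambda}(r,\cdot)\geq 0$ has strictly positive supremum, one can rewrite $\frac{1+r(\beta_{n,\lambda}-1)}{\sup_q H_{n,\lambda}(r,q)}=\inf_{q\in[0,1]}\frac{1+r(\beta_{n,\lambda}-1)}{H_{n,\lambda}(r,q)}$, adopting the convention that the ratio is $+\infty$ wherever $H_{n,\lambda}(r,q)=0$ (a value the infimum discards). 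Taking $\sup$ over $r\in(0,1]$ gives the bound with the outer supremum ranging over $(0,1]$; extending it to $r=0$ is harmless, since there the inner infimum equals $\inf_q\frac{1}{1-(1-q)^n}=1\leq\apx{n}{\lambda}$ trivially (anonymous pricing is itself a feasible mechanism, so $\apx{n}{\lambda}\geq 1$). This is~\eqref{eq:regularlowfix}.

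I do not expect a genuine obstacle at the level of this theorem: it is a short assembly. The real work lives in~\cref{lem:basicpareto} --- computing $\expect{\orderstat{X}{n-1}{n}}$ for the rescaled Pareto law (this is where the product formula for $\beta_{n,\lambda}$ in~\eqref{eq:helperreg_ab} originates, via a Beta-function integral over order statistics) and simplifying the pricing supremum to the one-parameter form $\sup_q H_{n,\lambda}(r,q)$ by changing to quantile coordinates $q=1-F_{\lambda,r}(x)$, under which $x=1+r(q^{-\lambda}-1)$ --- and in checking $\lambda$-regularity (but not $\lambda'$-regularity for $\lambda'<\lambda$) of the family, which is what makes it a ``tight'' witness. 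The only points one must be slightly careful about in the proof of the theorem proper are that the lower bound on $\myerson{F_{\lambda,r}}{n}$ uses \emph{only} feasibility of the second-price auction, so it needs no regularity, whereas regularity of $F_{\lambda,r}$ is invoked solely to legitimize it as an entry in $\apx{n}{\lambda}$; and that the removable singularity of $H_{n,\lambda}$ at $q=0$ (see the remark after~\eqref{eq:hnlambda}) causes no trouble, since the relevant suprema are attained in the interior.
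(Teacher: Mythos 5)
Your proposal is correct and follows essentially the same route as the paper: lower-bound $\myerson{F_{\lambda,r}}{n}$ by the expected second-highest order statistic via \cref{lem:basicpareto}(1), identify $\price{F_{\lambda,r}}{n}=\sup_{q}H_{n,\lambda}(r,q)$ via \cref{lem:basicpareto}(2), take the ratio, and then optimize over $r\in(0,1]$, extending to $r=0$ harmlessly since the infimum there is the trivial bound $1$. The paper's proof is exactly this assembly, so there is nothing to add.
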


\begin{proof}
Using \cref{lem:basicpareto} (1), and the fact that the optimal revenue can be lower bounded by the second-highest order statistic, gives
\[\myerson{F_{\lambda,r}}{n}\geq\expect[X\sim F_{\lambda,r}]{X_{n-1:n}}=1+r\left(\beta_{n,\lambda}-1\right)\;.\]
Using \cref{lem:basicpareto} (2), and the definition of $H_{n,\lambda}$ from \eqref{eq:hnlambda}, gives
\[\price{F_{\lambda,r}}{n}=\sup_{q\in[0,1]}H_{n,\lambda}(r,q)\;.\]
Taking the ratio between these two quantities proves \eqref{eq:regularlowfix_1}. If we optimize this ratio for $r\in(0,1]$, we obtain
\begin{align*}
\apx{n}{\lambda}&\geq\sup_{r\in(0,1]}\apx{F_{\lambda,r}}{n}\\
&\geq\sup_{r\in(0,1]}\frac{1+r\left(\beta_{n,\lambda}-1\right)}{\sup_{q\in[0,1]}H_{n,\lambda}(r,q)}\\
&=\adjustlimits\sup_{r\in[0,1]}\inf_{q\in[0,1]}\frac{1+r\left(\beta_{n,\lambda}-1\right)}{H_{n,\lambda}(r,q)}\;;
\end{align*}
note that in the last step, we can allow $r=0$ without loss since the right-hand side is well-defined and gives a trivial lower bound of 1.
\end{proof}

A plot of the bounds given in
\cref{thm:regularuppfix,thm:regularlowfix} can be seen in
\cref{fig:regupplowbounds}, for different values of $n$ and $\lambda$.

\subsection{Asymptotic Analysis}
\label{sec:lambda_reg_asy}
\begin{figure}
\centering
\begin{tikzpicture}
\begin{axis}[
%  width=0.7\textwidth,
%    height=0.5\textwidth,
  xmin=0, xmax=1,
  ymin=1, ymax=1.6,
    ytick={1,1.1,...,1.6},
    ymajorgrids=true,
    xmajorgrids=true,
    extra y ticks={1.58198},
    % extra x ticks={0.494},
    grid style=dashed,
    xlabel = {$\lambda$},
    ylabel = {$\apx{\infty}{\lambda}$},
    thick
]
\addplot[
    color=blue,
    ]
    coordinates {(0,1) (0.01,1.01616) (0.02,1.03028) (0.03,1.04351) (0.04,1.05614) (0.05,1.06828) (0.06,1.08003) (0.07,1.09143) (0.08,1.10253) (0.09,1.11334) (0.1,1.12391) (0.11,1.13423) (0.12,1.14432) (0.13,1.1542) (0.14,1.16388) (0.15,1.17335) (0.16,1.18262) (0.17,1.19171) (0.18,1.20061) (0.19,1.20933) (0.2,1.21786) (0.21,1.22622) (0.22,1.2344) (0.23,1.2424) (0.24,1.25023) (0.25,1.25788) (0.26,1.26535) (0.27,1.27265) (0.28,1.27977) (0.29,1.28672) (0.3,1.29349) (0.31,1.30008) (0.32,1.30649) (0.33,1.31272) (0.34,1.31876) (0.35,1.32462) (0.36,1.33029) (0.37,1.33578) (0.38,1.34107) (0.39,1.34617) (0.4,1.35107) (0.41,1.35578) (0.42,1.36028) (0.43,1.36458) (0.44,1.36867) (0.45,1.37256) (0.46,1.37623) (0.47,1.37968) (0.48,1.38291) (0.49,1.38592) (0.5,1.38874) (0.51,1.39158) (0.52,1.39444) (0.53,1.39734) (0.54,1.40027) (0.55,1.40323) (0.56,1.40622) (0.57,1.40925) (0.58,1.41231) (0.59,1.41541) (0.6,1.41854) (0.61,1.4217) (0.62,1.42491) (0.63,1.42815) (0.64,1.43143) (0.65,1.43475) (0.66,1.4381) (0.67,1.4415) (0.68,1.44494) (0.69,1.44841) (0.7,1.45193) (0.71,1.4555) (0.72,1.4591) (0.73,1.46276) (0.74,1.46645) (0.75,1.4702) (0.76,1.47399) (0.77,1.47783) (0.78,1.48171) (0.79,1.48565) (0.8,1.48964) (0.81,1.49368) (0.82,1.49777) (0.83,1.50192) (0.84,1.50613) (0.85,1.51039) (0.86,1.5147) (0.87,1.51908) (0.88,1.52352) (0.89,1.52802) (0.9,1.53258) (0.91,1.5372) (0.92,1.54189) (0.93,1.54665) (0.94,1.55148) (0.95,1.55638) (0.96,1.56135) (0.97,1.56639) (0.98,1.57151) (0.99,1.5767) (1.,1.58198)
    };
\end{axis}
\end{tikzpicture}
\Description{The asymptotically tight value of the approximation ratio of anonymous
pricing, given in \cref{thm:regularinf}, as a function of the regularity parameter
$\lambda$ (for $n\to\infty$ i.i.d. bidders).}
\caption{ The asymptotically tight value of the approximation ratio of anonymous
pricing, given in \cref{thm:regularinf}, as a function of the regularity parameter
$\lambda$ (for $n\to\infty$ i.i.d.\ bidders). }
\label{fig:regular_asymptotic}
\end{figure}
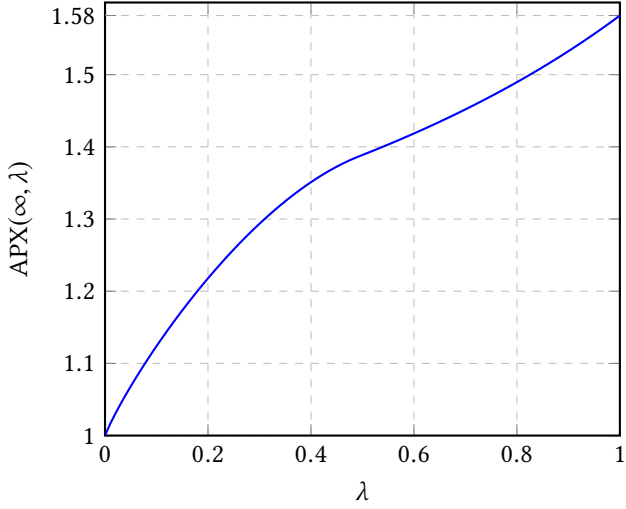

By observing the plots in \cref{fig:regupplowbounds}, it seems that the upper and
lower bounds are approaching some smooth function of $\lambda$ as $n$ grows large;
moreover, this function appears to increase from $1$ at $\lambda=0$ to
$\frac{e}{e-1}\approx 1.58$ at $\lambda=1$, which would recover the known bounds for
MHR (this paper, \cref{sec:MHR}) and regular (see,
e.g.,~\citep{Chawla2010a,Duetting2016}) distributions. In the remainder of this
paper we prove this is indeed the case, and characterize the limiting function. In
other words, we are interesting in taking the limit (as $n\rightarrow\infty$) of the
quantities defined in \cref{thm:regularuppfix,thm:regularlowfix}.

For each $\lambda\in(0,1]$, let us define the function $g_\lambda:[0,\infty)\rightarrow[0,\infty)$ as follows.

\begin{equation}\label{eq:glambda}
    g_{\lambda}(c)=c\left(1-e^{-(c\Gamma(2-\lambda))^{-1/\lambda}}\right)\;.
\end{equation}
The function $g_\lambda$ can be obtained from the function $g_{n,\lambda}$ introduced in \cref{eq:helperreg_g} as $n\rightarrow\infty$. We also define $\lambda^\ast\approx 0.4940$ as the unique positive root over $(0,1]$
of the equation (see~\cref{lem:lambdastar})
\begin{equation}\label{eq:lambdastar}1-\left(1+\frac{\Gamma(2-\lambda)^{-1/\lambda}}{\lambda}\right)e^{-\Gamma(2-\lambda)^{-1/\lambda}}=0\;.\end{equation}

For each $\lambda\in(0,1)$, let also $\eta(\lambda)$ denote the unique positive
solution of the equation $e^x=1+\frac{x}{\lambda}$, which is related to the function
$\eta(k)$ appearing in~\citet[Section 3.1.2]{Blumrosen2008a}. In fact, we must
mention here that one can derive a lower bound on the asymptotic approximation ratio
from~\citep{Blumrosen2008a}, which coincides with our \cref{thm:regularinf} in the
branch $0<\lambda\leq\lambda^\ast$. Although Blumrosen and Holenstein, in general,
study distributions satisfying the von Mises conditions, for their lower bounds they
make use of power law distributions of the form $F(x)=1-1/x^k$, for $k>1$. These
correspond to our rescaled Pareto distribution $F_{r,\lambda}$ from
\eqref{eq:pareto_instances} with $r=1$ and $\lambda=1/k$.

For the main result of this section we shall make use of the following technical lemma whose proof can be found in \cref{app:auxfunprops}.

\begin{lemma}\label{lem:supglam} For each $\lambda\in(0,1]$, let $g_\lambda$ be defined as in \eqref{eq:glambda} and $\eta(\lambda)$ be the unique positive solution of the equation $e^x=1+\frac{x}{\lambda}$. Let also $\lambda^\ast$ be the unique root of \eqref{eq:lambdastar}. The supremum of the function $g_\lambda$ over $[0,1]$ is as follows.
\begin{align*}
\text{If }0<\lambda\leq\lambda^\ast\text{, then}\quad&\sup_{c\in[0,1]}g_\lambda(c)=\sup_{c\geq 0}g_\lambda(c)=\frac{\eta(\lambda)^{1-\lambda}}{\Gamma(2-\lambda)(\lambda+\eta(\lambda))}\;.\\
\text{If }\lambda^\ast\leq\lambda\leq 1\text{, then}\quad&\sup_{c\in[0,1]}g_\lambda(c)=\phantom{\sup_{c\geq0}}g_\lambda(1)=1-e^{-\Gamma(2-\lambda)^{-1/\lambda}}\;.
\end{align*}
\end{lemma}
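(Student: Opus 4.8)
The plan is to collapse the apparent two-parameter optimization to a one-dimensional one by a change of variables, and then read off the two branches according to whether the constraint $c\le 1$ is active. Write $A=\Gamma(2-\lambda)>0$ and substitute $t=(cA)^{-1/\lambda}$, so that $c=t^{-\lambda}/A$ and the map $c\mapsto t$ is a decreasing bijection of $(0,\infty)$ onto $(0,\infty)$. Under this substitution $g_\lambda(c)=A^{-1}h(t)$, where $h(t)\equiv t^{-\lambda}(1-e^{-t})$, while $c\in[0,1]$ corresponds to $t\ge t_0$, with $t_0\equiv\Gamma(2-\lambda)^{-1/\lambda}$. So everything reduces to maximizing $h$ on $[t_0,\infty)$ and dividing by $\Gamma(2-\lambda)$; note also $t_0^{-\lambda}=A$, which makes $A^{-1}h(t_0)=1-e^{-t_0}=g_\lambda(1)$.

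Next I would determine the shape of $h$. A short computation gives $h'(t)=-\lambda\,t^{-\lambda-1}e^{-t}\bigl(e^{t}-1-t/\lambda\bigr)$. For $\lambda\in(0,1)$ the bracketed factor $\chi(t)\equiv e^{t}-1-t/\lambda$ has $\chi(0)=0$, $\chi'(0)=1-1/\lambda<0$, is convex, and $\chi(+\infty)=+\infty$, hence $\chi<0$ on $(0,\eta(\lambda))$ and $\chi>0$ on $(\eta(\lambda),\infty)$; consequently $h$ strictly increases on $(0,\eta(\lambda))$, strictly decreases on $(\eta(\lambda),\infty)$, vanishes at both ends, and has $\eta(\lambda)$ as its unique maximizer on $(0,\infty)$. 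Therefore the maximum of $h$ over $[t_0,\infty)$ equals $h(\eta(\lambda))$ exactly when $t_0\le\eta(\lambda)$ (the feasible half-line then contains the global maximizer), and equals $h(t_0)$ --- i.e.\ is attained at $c=1$ --- when $t_0\ge\eta(\lambda)$ (on $[t_0,\infty)$ the function is monotone decreasing). For $\lambda=1$ there is no positive root of $e^{t}=1+t$, $h$ is decreasing on all of $(0,\infty)$, and the maximum over $[t_0,\infty)$ is again at $t_0$, i.e.\ at $c=1$.

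It then remains to identify the switch-over with $\lambda=\lambda^\ast$. Since $\chi(t_0)\ge 0\iff t_0\ge\eta(\lambda)$, the inequality $t_0\ge\eta(\lambda)$ is equivalent to $e^{t_0}\ge 1+t_0/\lambda$, i.e.\ to $1-\bigl(1+\tfrac{\Gamma(2-\lambda)^{-1/\lambda}}{\lambda}\bigr)e^{-\Gamma(2-\lambda)^{-1/\lambda}}\ge 0$, whose left-hand side is precisely the function appearing in \eqref{eq:lambdastar}. This function is continuous on $(0,1]$; as $\lambda\to 0^{+}$ one has $t_0\to e^{1-\gamma}$ (finite), so $t_0/\lambda\to\infty$ and the expression tends to $-\infty$, while at $\lambda=1$ it equals $1-2/e>0$. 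By \cref{lem:lambdastar} it has a unique zero $\lambda^\ast\in(0,1]$, hence it is negative on $(0,\lambda^\ast)$ and positive on $(\lambda^\ast,1]$; equivalently $t_0<\eta(\lambda)$ for $0<\lambda<\lambda^\ast$ and $t_0\ge\eta(\lambda)$ for $\lambda^\ast\le\lambda<1$ (with equality at $\lambda^\ast$), and the case $\lambda=1$ was handled above. This matches exactly the case split in the statement.

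Finally I would evaluate the two optima. In the first branch, $\sup_{c\in[0,1]}g_\lambda=\sup_{c\ge 0}g_\lambda=h(\eta(\lambda))/\Gamma(2-\lambda)$; substituting $e^{\eta(\lambda)}=1+\eta(\lambda)/\lambda$ into $h(\eta(\lambda))=\eta(\lambda)^{-\lambda}\bigl(1-e^{-\eta(\lambda)}\bigr)$ yields $1-e^{-\eta(\lambda)}=\eta(\lambda)/(\lambda+\eta(\lambda))$ and hence the claimed value $\tfrac{\eta(\lambda)^{1-\lambda}}{\Gamma(2-\lambda)(\lambda+\eta(\lambda))}$. In the second branch the maximizer is $c=1$ and the value is $g_\lambda(1)=1-e^{-\Gamma(2-\lambda)^{-1/\lambda}}$ directly. (At $\lambda^\ast$, where $t_0=\eta(\lambda^\ast)$, the two formulas agree, using $\Gamma(2-\lambda^\ast)=t_0^{-\lambda^\ast}$.) The step I expect to be the main obstacle is the third one: controlling the sign of the $\lambda^\ast$-defining function over all of $(0,1]$, so that the constraint $c\le 1$ provably starts to bind exactly at $\lambda^\ast$ --- this rests on uniqueness of the root (\cref{lem:lambdastar}) plus the two boundary sign checks, and requires treating $\lambda=1$ separately since $\eta(\lambda)$ degenerates there. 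The substitution, the monotonicity analysis of $h$, and the algebraic simplification of $h(\eta(\lambda))$ are routine.
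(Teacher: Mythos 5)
Your proof is correct and follows essentially the same route as the paper's: establish that $g_\lambda$ is unimodal, observe that the constraint $c\le 1$ binds exactly when the sign of the expression in \eqref{eq:lambdastar} (which is precisely $g'_\lambda(1)$, equivalently $\chi(t_0)$ in your notation) is nonnegative, invoke \cref{lem:lambdastar} for the threshold $\lambda^\ast$, and use the substitution $x=(c\Gamma(2-\lambda))^{-1/\lambda}$ together with $e^{\eta}=1+\eta/\lambda$ to evaluate the unconstrained optimum. The only difference is organizational: you perform that substitution at the outset, so unimodality follows from convexity of $e^t-1-t/\lambda$ instead of the paper's direct computation of $g''_\lambda$, and you correctly handle $\lambda=1$ separately as the paper does.
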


\begin{theorem}\label{thm:regularinf}
For $\lambda\in(0,1]$, the approximation ratio of anonymous pricing with arbitrarily
many bidders having i.i.d.\ valuations from a $\lambda$-regular distribution, is
asymptotically
\[\lim_{n\rightarrow\infty}\apx{n}{\lambda}=
\frac{1}{\sup_{0\leq c\leq1}g_\lambda(c)}=
\left\{\begin{array}{cc}\frac{\Gamma(2-\lambda)(\lambda+\eta(\lambda))}{\eta(\lambda)^{1-\lambda}},&0<\lambda\leq\lambda^\ast\;,\\
\\\frac{1}{1-e^{-\Gamma(2-\lambda)^{-1/\lambda}}},&\lambda^\ast\leq\lambda\leq
1\;,\end{array}\right.\] where $g_\lambda$ is defined in \eqref{eq:glambda},
$\lambda^\ast$ is the unique root of \eqref{eq:lambdastar}, $\Gamma$ is the gamma
function, and $\eta(\lambda)$ is the unique positive solution of the equation
$e^x=1+\frac{x}{\lambda}$.
\end{theorem}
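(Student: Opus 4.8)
The plan is to sandwich $\apx{n}{\lambda}$ between the lower bound of \cref{thm:regularlowfix} and the upper bound of \cref{thm:regularuppfix}, and to show that \emph{both} converge, as $n\to\infty$, to the common value $1/\sup_{c\in[0,1]}g_\lambda(c)$; a squeeze then proves that the limit exists and equals this, after which plugging in the two-branch evaluation of $\sup_{c\in[0,1]}g_\lambda$ from \cref{lem:supglam} (whose hypothesis is supplied by \cref{lem:lambdastar}) yields the stated closed form. Every limit below rests on the two elementary asymptotics $\beta_{n,\lambda}\sim\Gamma(2-\lambda)n^\lambda$ (\eqref{eq:betaapprox}) and $(1-x/n)^n\to e^{-x}$, which together give $[1+c(\beta_{n,\lambda}-1)]^{-1/\lambda}\sim(c\,\Gamma(2-\lambda))^{-1/\lambda}/n$ and hence, for each fixed $c$, the pointwise convergence $g_{n,\lambda}(c)\to g_\lambda(c)$; the same computation gives $g_{n,\lambda}'(1)\to g_\lambda'(1)=1-\bigl(1+\tfrac1\lambda\Gamma(2-\lambda)^{-1/\lambda}\bigr)e^{-\Gamma(2-\lambda)^{-1/\lambda}}$, the left-hand side of \eqref{eq:lambdastar}, which is $>0$ for $\lambda>\lambda^\ast$ and $=0$ at $\lambda^\ast$.

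For the upper bound I would first observe $a_{n,\lambda}\to0$: when $\lambda\in(0,1)$ the numerator in \eqref{eq:helperreg_ab} carries the geometrically small factor $[1-(1-\lambda)^{1/\lambda}]^{n-1}$ while the denominator tends to $1$, and $a_{n,1}=1/n$. It then remains to improve pointwise convergence to $\sup_{c\in[0,1]}g_{n,\lambda}(c)\to\sup_{c\in[0,1]}g_\lambda(c)$. On $[0,\delta]$ both functions are $\le\delta$ (since $g_{n,\lambda}(c)\le c$), while on $[\delta,1]$ the derivatives $g_{n,\lambda}'$ are uniformly bounded for large $n$ — the factor $[1+c(\beta_{n,\lambda}-1)]^{-1/\lambda-1}=O(n^{-1-\lambda})$ absorbs both the $n$ from differentiating the $n$-th power and the $\beta_{n,\lambda}-1=O(n^\lambda)$ — so equicontinuity plus pointwise convergence gives uniform convergence on $[\delta,1]$; taking $\delta<\min\{c^\ast,g_\lambda(c^\ast)\}$ for the maximizer $c^\ast\in(0,1]$ of $g_\lambda$ on $[0,1]$ then forces $\sup_{c\in[0,1]}g_{n,\lambda}(c)=\sup_{[\delta,1]}g_{n,\lambda}\to\sup_{[\delta,1]}g_\lambda=\sup_{c\in[0,1]}g_\lambda(c)$ for all large $n$. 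Thus the bound of \cref{thm:regularuppfix} tends to $1/\sup_{c\in[0,1]}g_\lambda(c)$.

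For the lower bound I use the rescaled Pareto family $F_{\lambda,r}$ of \eqref{eq:pareto_instances}: the substitution $q^{-\lambda}=1+c(\beta_{n,\lambda}-1)$ turns $H_{n,\lambda}(r,q)$ into $(1+mc)P_n(c)$, where $m:=r(\beta_{n,\lambda}-1)=\expect[X\sim F_{\lambda,r}]{\orderstat{X}{n-1}{n}}-1$ and $P_n(c):=g_{n,\lambda}(c)/c=1-\bigl[1-(1+c(\beta_{n,\lambda}-1))^{-1/\lambda}\bigr]^n$, so \cref{thm:regularlowfix} gives $\apx{n}{\lambda}\ge(1+m)\big/\sup_{c\ge0}[(1+mc)P_n(c)]$ for every $m\in[0,\beta_{n,\lambda}-1]$. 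For $0<\lambda\le\lambda^\ast$ I take $m=\beta_{n,\lambda}-1$ (standard Pareto): since $(1+mc)P_n(c)=m\,g_{n,\lambda}(c)+P_n(c)\le m\sup_{c\ge0}g_{n,\lambda}(c)+1$ and $\sup_{c\ge0}g_{n,\lambda}\to\sup_{c\ge0}g_\lambda$ (same bookkeeping, also using the uniform tail bound $g_{n,\lambda}(c)\le Kc^{1-1/\lambda}\to0$), this yields $\liminf_n\apx{n}{\lambda}\ge1/\sup_{c\ge0}g_\lambda$, which is $1/\sup_{c\in[0,1]}g_\lambda$ by the first branch of \cref{lem:supglam}. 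For $\lambda^\ast\le\lambda\le1$ I instead take $m=m_n^\ast:=-P_n'(1)/g_{n,\lambda}'(1)$, which is positive and convergent since $g_{n,\lambda}'(1)\to g_\lambda'(1)>0$ here; this makes $c=1$ a stationary point of $c\mapsto(1+mc)P_n(c)$, and provided $c=1$ is its global maximum one obtains $\apx{n}{\lambda}\ge1/P_n(1)=1/g_{n,\lambda}(1)\to1/g_\lambda(1)$, which is $1/\sup_{c\in[0,1]}g_\lambda$ by the second branch of \cref{lem:supglam}. (The two choices agree at $\lambda^\ast$, where $m_n^\ast\to\infty$.)

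The step I expect to be the main obstacle is exactly the last one: showing that for $\lambda^\ast<\lambda\le1$ the point $c=1$ is the \emph{global} maximizer of $(1+m_n^\ast c)P_n(c)$, equivalently that $-P_n'(c)/P_n(c)=m/(1+mc)$ has a unique positive root. I would derive this from the unimodality of $c\mapsto(1+mc)P_n(c)$ — which is just the $n$-bidder pricing-revenue curve of $F_{\lambda,r}$ expressed in quantile-type coordinates — via a log-concavity estimate on $P_n$; this is the only genuinely non-routine analytic input. A secondary, more clerical matter, handled by the same equicontinuity and tail-decay estimates as in the upper bound, is justifying the interchange of $\lim_n$ with the suprema over $q$ (resp.\ $c$) across the different scaling regimes $q\asymp1/n$, $q\asymp n^{-1/\lambda}$ and $q=\Theta(1)$. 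With both sides pinned to $1/\sup_{c\in[0,1]}g_\lambda(c)$, the squeeze completes the proof, and \cref{lem:supglam} rewrites the limit in the stated piecewise form.
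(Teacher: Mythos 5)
Your overall architecture is the same as the paper's: squeeze $\apx{n}{\lambda}$ between \cref{thm:regularuppfix} and \cref{thm:regularlowfix}, show both bounds converge to $1/\sup_{c\in[0,1]}g_\lambda(c)$, and finish with \cref{lem:supglam}. The upper-bound part and the branch $0<\lambda\le\lambda^\ast$ (taking $r=1$, i.e.\ standard Pareto, and exchanging $\lim_n$ with the supremum via equicontinuity on $[\delta,1]$ plus the uniform tail bound $g_{n,\lambda}(c)\le Kc^{1-1/\lambda}$) are correct and only cosmetically different from the paper, which handles the same interchange with Dini's theorem in \cref{lem:lowboundtechnical1}; in fact for the upper bound pointwise convergence alone already gives the needed $\liminf_n\sup_c g_{n,\lambda}\ge\sup_c g_\lambda$.

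The genuine gap is exactly where you anticipated it, and your proposed fix does not work. You want $c=1$ to be the \emph{global} maximizer of $c\mapsto(1+m_n^\ast c)P_n(c)$ and plan to deduce unimodality "via a log-concavity estimate on $P_n$". But $P_n$ is not log-concave in $c$: for large $c$ one has $P_n(c)\sim n\,[1+c(\beta_{n,\lambda}-1)]^{-1/\lambda}$, whose logarithm is convex (equivalently, back in quantile coordinates the price factor $1-r+rq^{-\lambda}$ is log-\emph{convex}, since for $r=1$ it is $q^{-\lambda}$), so the "product of log-concave functions" route is unavailable, and single-peakedness of $(1+mc)P_n(c)$ --- equivalently of $q\mapsto H_{n,\lambda}(r,q)$ from \eqref{eq:hnlambda} --- is precisely the nontrivial fact that must be established; stationarity of $c=1$ alone gives nothing, since a priori it could be a local maximum dominated by the behaviour near $c=0$ or in the tail. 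The paper deliberately sidesteps proving unimodality for intermediate $r$: it invokes Berge's Maximum Theorem to get upper hemicontinuity of $r\mapsto\argmax_{q}H_{n,\lambda}(r,q)$, notes that at $r=0$ the maximum sits at $q=1$ while at $r=1$ \cref{lem:lowboundtechnical2} shows the peak is unique and lies left of $\beta_{n,\lambda}^{-1/\lambda}$ (for $\lambda>\lambda^\ast$ and large $n$), and then extracts an intermediate $r_n$ for which $q_n^\ast=\beta_{n,\lambda}^{-1/\lambda}$ is a maximizer --- no global analysis at $r_n$ itself is required. So either adopt that continuity/intermediate-value argument, or replace your log-concavity claim by a direct sign-change analysis of $\partial_q H_{n,\lambda}(r_n,q)$ in the spirit of \cref{lem:lowboundtechnical2}; as written, the key step of your second branch is unsupported. (Two small additional checks your sketch omits but which are easy: $m_n^\ast>0$ indeed needs $g_{n,\lambda}'(1)>0$ for large $n$, which follows from $g_{n,\lambda}'(1)\to g_\lambda'(1)>0$ via \cref{lem:lambdastar}, and $m_n^\ast\le\beta_{n,\lambda}-1$ so that the corresponding $r_n$ lies in $(0,1]$, which holds for large $n$ since $m_n^\ast$ converges while $\beta_{n,\lambda}\to\infty$.)
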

A plot of this asymptotic approximation ratio can be seen
in~\cref{fig:regular_asymptotic}.
\begin{proof} The second equality, i.e.\ the characterization of $\sup_{c\in[0,1]}g_\lambda(c)$, comes from \cref{lem:supglam}. 
The proof of the theorem can be split into three parts; we
start by providing an upper bound on the asymptotics; as for the lower bounds, our
proof considers the cases  $0<\lambda\leq\lambda^\ast$ and
$\lambda^\ast<\lambda\leq1$  separately.

To prove an upper bound, we start from the result obtained in
\cref{thm:regularuppfix},

\[\apx{n}{\lambda}\leq
\frac{1}{\sup_{c\in[0,1]}g_{n,\lambda}(c)}+a_{n,\lambda}\;.\]

For each $\lambda\in(0,1]$ and each $c\in[0,1]$, the Stirling approximation in
\eqref{eq:betaapprox} yields the pointwise convergence $g_{n,\lambda}(c)\rightarrow
g_\lambda(c)$. Thus, by elementary analysis\footnote{It is worth
mentioning that one could prove, with some technical effort, that the convergence
$g_{n,\lambda}(c)\rightarrow g_\lambda(c)$ is actually uniform on $c$, which would
allow us to interchange the limit with the supremum and write
$\lim_{n\to\infty}\sup_{c\in[0,1]}g_{n,\lambda}(c)=\sup_{c\in[0,1]}g_{\lambda}(c)$;
however, we will not need this result as the lower bound will be matching, so we
omit its proof.}, we have
$$\liminf_{n\rightarrow\infty}\sup_{0\leq c\leq
1}g_{n,\lambda}(c)\geq\sup_{0\leq c\leq 1}g_{\lambda}(c)\;.$$  
As the additive term $a_{n,\lambda}$ vanishes as
$n\to\infty$, we get the desired upper bound,
\begin{equation}\label{eq:reguppasym}
\limsup_{n\rightarrow\infty}\apx{n}{\lambda}\leq\frac{1}{\sup_{c\in[0,1]}g_\lambda(c)}\;.
\end{equation}

To prove lower bounds, we study the asymptotic behaviour of the result obtained in
\cref{thm:regularlowfix},
\[\apx{n}{\lambda}\geq\adjustlimits\sup_{r\in[0,1]}\inf_{q\in[0,1]}\frac{1+r\left(\beta_{n,\lambda}-1\right)}{H_{n,\lambda}(r,q)}\;.\]

We begin by setting $r=1$, and applying the change of variables
$c=1/(\Gamma(2-\lambda)n^\lambda q^\lambda)$, yielding
\begin{align*}
\apx{n}{\lambda}&\geq\frac{\beta_{n,\lambda}}{\sup_{0\leq q\leq
1}\frac{1-(1-q)^n}{q^\lambda}}\\
&=\frac{\frac{\beta_{n,\lambda}}{\Gamma(2-\lambda)n^\lambda}}{\sup_{c\geq1/(\Gamma(2-\lambda)n^\lambda)}c\left[1-\left(1-\frac{(c\Gamma(2-\lambda))^{-1/\lambda}}{n}\right)^n\right]}\;.
\end{align*}
Note that the numerator of the above fraction goes to 1 as $n\rightarrow\infty$ due
to Stirling's approximation. As for the denominator, we need some technical work to
ensure that the limit and the supremum operators commute, and to show that it
converges to $\sup_{c\geq 0}g_\lambda(c)$. This is done in
\cref{lem:lowboundtechnical1} in \cref{app:auxfunprops}. Thus, we get
\begin{equation}
\label{eq:lower_asy}
\liminf_{n\rightarrow\infty}\apx{n}{\lambda}\geq\frac{1}{\sup_{c\geq
0}g_\lambda(c)}\;,
\end{equation}
which matches the desired quantity as long as $0<\lambda\leq\lambda^\ast$, due to \cref{lem:supglam}.

To get tight lower bounds for the case $\lambda^\ast<\lambda\leq 1$, we take an
approach very much inspired by that in \cite{Duetting2016}; our starting point is
again \cref{thm:regularlowfix}, but now the idea is to carefully choose a value of
$r$ which depends on $n$, for which a specific choice of optimal quantile $q^\ast$
gives the desired bound.

For each $n\geq2$ and $\lambda\in(0,1]$, recall the function
$H_{n,\lambda}:[0,1]\times[0,1]\rightarrow\mathbb{R}$ from
\eqref{eq:hnlambda}. As this function is continuous on a compact set, we can apply
Berge's Maximum Theorem (see, e.g,~\cite[Theorem~17.31]{Aliprantis2006a}) to
conclude that the correspondence $r\mapsto\argmax_{q\in[0,1]} H_{n,\lambda}(r,q)$ is
non-empty, compact valued and upper hemicontinuous.\footnote{One could, with
additional technical work, also prove that $H_{n,\lambda}(r,q)$ has a single peak
(in $q$) for each $r$, making this correspondence a continuous function; the rest of
the argument would be essentially an application of the intermediate value theorem.}
When $r=0$, we have $H_{n,\lambda}(0,q)=1-(1-q)^n$, which has a single peak at
$q=1$. Next we consider the case $r=1$, that is,
$H_{n,\lambda}(1,q)=\frac{1}{q^\lambda}(1-(1-q)^n)$. In
\cref{lem:lowboundtechnical2}, \cref{app:auxfunprops}, we prove that this function
has a single peak, and that for $\lambda^\ast<\lambda\leq 1$ and large enough $n$,
its peak lies to the left of $\beta_{n,\lambda}^{-1/\lambda}$. Thus, by
hemicontinuity, we can conclude that, for large enough $n$, there must exist some
value of $r\in[0,1]$, say $r_n$, such that $H_{n,\lambda}(r_n,q)$ is maximized at
precisely the quantile $q^\ast_n=\beta_{n,\lambda}^{-1/\lambda}$.\footnote{Note that
this statement is not merely existential: we can compute $r_n$ by solving the
equation $\frac{\partial}{\partial\, q} H_{n,\lambda}(r_n,q^\ast_n)=0$.} We can now
plug this into \cref{thm:regularlowfix}, obtaining
\begin{align*}
    \apx{n}{\lambda}&\geq\inf_{0\leq q\leq
    1}\frac{1+r_n\left(\beta_{n,\lambda}-1\right)}{H_{n,\lambda}(r_n,q)}\\
    &=\frac{1+r_n\left(\beta_{n,\lambda}-1\right)}{\left(1+r_n\left(\beta_{n,\lambda}-1\right)\right)\left(1-\left(1-\beta_{n,\lambda}^{-1/\lambda}\right)^n\right)}\\
    &=\frac{1}{1-\left(1-\beta_{n,\lambda}^{-1/\lambda}\right)^n}\;;
\end{align*} 
by taking limits (and with the help of Stirling's approximation), we get
\[\liminf_{n\rightarrow\infty}\apx{n}{\lambda}\geq\frac{1}{1-e^{-\Gamma(2-\lambda)^{-1/\lambda}}}\;,\]
which matches the desired quantity as long as $\lambda^\ast<\lambda\leq 1$, due to \cref{lem:supglam}. This concludes the proof.
\end{proof}

\section{Conclusion and Future Directions}
\label{sec:conclusion}

In this paper, we studied the performance of anonymous selling mechanisms for a
Bayesian setting with a single item and $n$ bidders with i.i.d.~ valuations. We
completely characterized the asymptotic approximation ratio (with respect to the
optimal, Myersonian revenue) for $\lambda$-regular distributions with
$\lambda\in[0,1]$,
\[\adjustlimits\lim_{n\rightarrow\infty}\sup_{F\in\mathcal{D}_\lambda}\apx{F}{n}=\frac{1}{\sup_{0\leq
c\leq 1}g_\lambda(c)}\;.\]
This quantity increases smoothly (see~\cref{fig:regular_asymptotic}) from $1$ for
MHR distributions to $\frac{e}{e-1}\approx 1.58$ for regular distributions. For the
special case of MHR distributions, we also characterized the rate of convergence and
provided explicit, ``good'' pricing schemes that depend only on the knowledge of the
expected second-highest order statistic of the distribution.

A few interesting questions remain. Two quantities of independent interest are
\[\sup_{n\geq
2}\sup_{F\in\mathcal{D}_\lambda}\apx{F}{n}\qquad\text{and}\qquad\adjustlimits\sup_{F\in\mathcal{D}_\lambda}\limsup_{n\rightarrow\infty}\apx{F}{n}\;.\]

The first one corresponds to \emph{global} bounds on the approximation ratio, which
require a finer analysis, especially in the ``small'' $n$ range. Even for MHR
distributions, there is still a gap between the global upper bound of $1.35$ (numerically
achieved at $n=3$ in \cref{th:main_upper_small}) and the global lower
bound of $1.27$ (numerically achieved at $n=17$ in \cref{th:lower_expo}).

The second one intuitively captures settings where the seller's prior knowledge of a
large market is assumed to be size-independent. Studying this quantity amounts
to asking: do the lower bounds of \cref{thm:regularinf} \emph{require} constructing
a ``bad'' distribution $F_n$ for \emph{each} $n$ separately? In the range closer to
MHR, i.e. $0<\lambda\leq\lambda^\ast$, the answer is ``no'': our lower bounds were
obtained by setting $r=1$ globally, independent of $n$. But even for regular
distributions, there is still a gap between the size-independent lower bound of
$1.40$  (achieved by numerically maximizing \eqref{eq:lower_asy}) and the upper
bound of $1.58$ (see~\cref{fig:regular_asymptotic} at $\lambda=1$).

Another interesting direction would be to characterize analytically the rate of
convergence (with respect to the number of bidders $n$) of the approximation ratio
given by~\cref{thm:regularuppfix}. In other words, generalize \cref{th:main_upper}
for $\lambda$-regular distributions with $\lambda>0$.

\bibliographystyle{ACM-Reference-Format}
\bibliography{pricing}

\appendix

\section{Order Statistics}
\label{append:prob_basics}
For the benefit of the reader and for ease of reference, in this section we collect
some fundamental facts from probability theory that are used in the paper.

Let $F$ be (the cumulative function (cdf) of) a continuous probability distribution
supported over an interval $D_F\subseteq [0,\infty)$. Then $F$ is an absolutely
continuous function and thus almost everywhere differentiable in $D_F$ with $F'(x)=f(x)$, where $f$
is the density function (pdf) of $F$. Let $X\sim F$ be a random variable drawn from
$F$. Then, for any $x\in D_F$, $\prob{X\leq x}=F(x)$

Let integers $1\leq k\leq n$. Then the pdf $f_{k:n}$ of $\orderstat{X}{k}{n}$, i.e.\
the $k$-th (lowest) order statistic out of $n$ i.i.d.\ draws from $F$, is given by
(see \citep[Eq.~(2.2.2)]{Arnold_2008})
$$
f_{k:n}(x)=k\binom{n}{k}F^{k-1}(x)(1-F(x))^{n-k} f(x)\;, \qquad x\in D_F\;.
$$
In particular, the cdf of $\orderstat{X}{n}{n}$ is simply $F_{n:n}(x)=F^n(x)$.

The \emph{exponential} distribution has cdf $F_{\mathcal E}(x)=1-e^{-y}$ and pdf
$f_{\mathcal E}(y)=e^{-y}$, for $y\in [0,\infty)$. In particular, for $Y\sim\mathcal
E$, the expected values of its order statistics are (see, e.g.,
\citep[Eq.~(4.6.6)]{Arnold_2008})
$$
\expect{\orderstat{Y}{k}{n}}=H_{n}-H_{n-k}\;,
$$
where $H_m=\sum_{i=1}^m\frac{1}{i}$ is the $m$-th harmonic number.

\section{Technical Lemmas}
\label{app:lemmas}

\begin{lemma}
\label{lem:harmonic_bounds_sequence}
The sequence $H_n-\ln (n)$ is strictly
decreasing for all integers $n\geq 1$, and converges (as $n\to\infty$) to the
Euler--Mascheroni constant $\gamma\approx 0.577$.
\end{lemma}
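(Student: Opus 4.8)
The plan is to treat the two assertions separately: first the strict monotonicity of $a_n \equiv H_n - \ln n$, and then its convergence — the identification of the limit with the Euler–Mascheroni constant $\gamma$ being nothing more than the standard \emph{definition} of $\gamma$, so only boundedness needs to be argued.

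For monotonicity I would simply compute the consecutive difference,
\[
a_{n+1} - a_n = \frac{1}{n+1} - \bigl(\ln(n+1) - \ln n\bigr) = \frac{1}{n+1} - \ln\!\left(1 + \frac{1}{n}\right),
\]
and then invoke the elementary inequality $\ln(1+x) > \frac{x}{1+x}$, valid for all $x > 0$ (which itself follows from the fact that $x \mapsto \ln(1+x) - \frac{x}{1+x}$ vanishes at $0$ and has positive derivative $\frac{x}{(1+x)^2}$ on $(0,\infty)$). Applied with $x = 1/n$ this gives $\ln(1 + 1/n) > \frac{1}{n+1}$, hence $a_{n+1} - a_n < 0$ for every integer $n \geq 1$, which is precisely strict monotonicity.

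For convergence it then suffices to exhibit a lower bound, since a decreasing sequence bounded below converges. I would telescope $\ln n = \sum_{k=1}^{n-1}\ln(1 + 1/k)$ to write
\[
a_n = \frac{1}{n} + \sum_{k=1}^{n-1}\left(\frac{1}{k} - \ln\!\left(1 + \frac{1}{k}\right)\right),
\]
and observe that every summand is positive by the companion inequality $\ln(1+x) < x$ for $x>0$; thus $a_n > \frac{1}{n} > 0$. The limit therefore exists and, by definition, equals $\gamma \approx 0.577$. (As a sanity check one could note that the same decomposition together with $\ln(1+x) \geq x - x^2/2$ gives $a_n < \frac{1}{n} + \sum_{k\geq 1}\frac{1}{2k^2}$, an explicit finite upper bound.) I do not expect any real obstacle here: the only ingredient that requires a moment's care is having the two one-line logarithm estimates $\frac{x}{1+x} < \ln(1+x) < x$ for $x>0$ on hand, one for each half of the statement.
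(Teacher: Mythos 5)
Your proposal is correct, and it is more than the paper itself offers: the paper's ``proof'' of this lemma is simply a pointer to a standard reference (\citet[Eq.~(6.64)]{Graham1989a}), treating the statement as a known fact from analysis, whereas you supply a complete elementary argument. Your two halves are sound: the difference $a_{n+1}-a_n=\frac{1}{n+1}-\ln\left(1+\frac{1}{n}\right)$ is indeed negative by $\ln(1+x)>\frac{x}{1+x}$ for $x>0$ (evaluated at $x=1/n$, where $\frac{x}{1+x}=\frac{1}{n+1}$), and the telescoped form $a_n=\frac{1}{n}+\sum_{k=1}^{n-1}\left(\frac{1}{k}-\ln\left(1+\frac{1}{k}\right)\right)$ together with $\ln(1+x)<x$ shows $a_n>0$, so the decreasing sequence converges; identifying the limit with $\gamma$ is then just the standard definition of the Euler--Mascheroni constant, which is legitimate. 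The trade-off is only one of economy versus self-containedness: the paper's citation keeps the appendix short and leans on a textbook, while your argument costs a few lines but makes the lemma verifiable in place, needing nothing beyond the two one-line logarithm bounds you state (and your upper-bound sanity check via $\ln(1+x)\geq x-x^2/2$ is a nice, though unnecessary, extra).
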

\begin{proof}
This is a well-known fact from analysis, see e.g.~\citep[Eq.~(6.64)]{Graham1989a}.
\end{proof}

\begin{lemma}\label{lem:specific_price}
\label{lem:bound_pricing_seq_parameter} For all integers $n\geq 5$, $$
\ln(n)-\ln(\ln(n)) \leq H_n-1\;. $$
\end{lemma}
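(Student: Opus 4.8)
The plan is to reduce the inequality to the elementary fact that $\ln\ln n$ eventually dominates the constant $1-\gamma$.

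First I would invoke \cref{lem:harmonic_bounds_sequence}: since the sequence $H_n-\ln n$ is strictly decreasing and converges to the Euler--Mascheroni constant $\gamma\approx 0.577$, every one of its terms satisfies $H_n-\ln n>\gamma$. Rearranging, this gives $H_n-1>\ln n-(1-\gamma)$ for all integers $n\geq 1$.

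It therefore suffices to show $\ln n-(1-\gamma)\geq \ln n-\ln\ln n$, i.e.\ $\ln\ln n\geq 1-\gamma$. For $n\geq 5$ we have $\ln n\geq\ln 5>1$, so $\ln\ln n$ is well-defined and positive, and since $x\mapsto\ln\ln x$ is increasing on $(1,\infty)$ we get $\ln\ln n\geq\ln\ln 5$. A single numerical estimate then finishes the job: $\ln\ln 5=\ln(1.6094\ldots)=0.4759\ldots$, whereas $1-\gamma=0.4227\ldots$, so indeed $\ln\ln 5>1-\gamma$. Chaining the two inequalities gives $H_n-1>\ln n-\ln\ln n$, which is strictly stronger than the claimed bound.

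There is essentially no obstacle here; the only point requiring any care is the single numerical comparison $\ln\ln 5>1-\gamma$, which is precisely what forces the hypothesis $n\geq 5$ in this slack-based argument (for $n=4$ one has $\ln\ln 4=0.3266\ldots<1-\gamma$, so the reduction above no longer goes through, even though the inequality itself remains true and can be checked by a direct computation).
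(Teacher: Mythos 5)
Your proof is correct and follows essentially the same route as the paper: both invoke \cref{lem:harmonic_bounds_sequence} to get $H_n-1>\ln n-(1-\gamma)$ and then reduce to the single numerical comparison $\ln\ln n\geq 1-\gamma$ for $n\geq 5$ (the paper states this equivalently as $\ln n\geq\ln 5>e^{1-\gamma}$). No gap; your aside about $n=4$ correctly identifies why the hypothesis $n\geq 5$ is needed for this slack-based argument.
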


\begin{proof}
Since $\ln n$ is increasing with respect to the positive integer $n$, using the
value of the constant $\gamma\approx 0.577$ it is not difficult to numerically check
that for any $n\geq 5$ we have
$$
\ln n \geq \ln 5 \approx 1.609 > 1.526\approx  e^{1-\gamma}\;.
$$
As a result, using \cref{lem:harmonic_bounds_sequence}, we can see that
$$
H_n-1 \geq \ln(n)+\gamma-1
=\ln \frac{n}{e^{1-\gamma}}
> \ln \frac{n}{\ln n}
=\ln n - \ln \ln n\;.
$$
\end{proof}

\begin{lemma}
\label{lem:helper_bound_1}
For any integer $n\geq 2$ and any $x\in[1/e,1]$,
$$
\ln(1/x)x(1-x)^{n-2} \leq \frac{1}{e}\left(\frac{e-1}{e}\right)^{n-2}\;.
$$
\end{lemma}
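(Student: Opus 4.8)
The plan is to bound the left-hand side by a product of two elementary bounds, one for each of its factors, both of which are nonnegative on the relevant range $x\in[1/e,1]$.

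First I would isolate the factor $\ln(1/x)\cdot x=-x\ln x$ and invoke the standard fact that the function $x\mapsto -x\ln x$ is maximized over $(0,1]$ at $x=1/e$ with maximum value $1/e$; this is immediate from setting the derivative $-\ln x-1$ to zero (the function is increasing on $(0,1/e)$ and decreasing on $(1/e,1]$). Hence $\ln(1/x)\cdot x\leq 1/e$ for every $x\in(0,1]$, and in particular for $x\in[1/e,1]$.

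Next I would bound the remaining factor $(1-x)^{n-2}$. Since $x\in[1/e,1]$ we have $0\leq 1-x\leq 1-\tfrac1e=\tfrac{e-1}{e}$, and since $n\geq 2$ the exponent $n-2$ is a nonnegative integer, so $(1-x)^{n-2}\leq\left(\tfrac{e-1}{e}\right)^{n-2}$. As both $\ln(1/x)\cdot x$ and $(1-x)^{n-2}$ are nonnegative on $[1/e,1]$, multiplying the two inequalities yields exactly $\ln(1/x)\,x\,(1-x)^{n-2}\leq\tfrac1e\left(\tfrac{e-1}{e}\right)^{n-2}$, which is the claim. There is essentially no obstacle; the only points requiring a word of care are checking nonnegativity of both factors before multiplying the bounds, and noting that the edge case $n=2$ is automatically covered since then $(1-x)^{n-2}=1$ and the inequality reduces to $-x\ln x\leq 1/e$. (Alternatively, one could verify the inequality directly by showing that $h(x)=-x\ln x\,(1-x)^{n-2}$ is nonincreasing on $[1/e,1]$ via the logarithmic derivative $h'(x)/h(x)=\tfrac1x\bigl(1+\tfrac1{\ln x}\bigr)-\tfrac{n-2}{1-x}$, whose first term is negative on $(1/e,1)$ because $\ln x\in(-1,0)$ there, and whose second term is nonpositive for $n\geq2$; but the product-of-bounds argument above is shorter and I would prefer it.)
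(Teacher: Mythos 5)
Your proof is correct, and it takes a genuinely different (and more elementary) route than the paper's. You bound the two factors separately: $\ln(1/x)\,x=-x\ln x\leq 1/e$ on $(0,1]$ (maximum at $x=1/e$), and $(1-x)^{n-2}\leq\bigl(\tfrac{e-1}{e}\bigr)^{n-2}$ on $[1/e,1]$ since $0\leq 1-x\leq\tfrac{e-1}{e}$ and $n-2\geq0$; both factors are nonnegative, so the product of the bounds gives the claim. This works without any loss precisely because both factor-wise maxima over $[1/e,1]$ are attained at the same endpoint $x=1/e$, so the product of the two bounds coincides with the value of the whole expression at $x=1/e$, which is exactly the right-hand side. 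The paper instead defines $f_n(x)=\ln(1/x)\,x\,(1-x)^{n-2}$, computes $f_n'$ and shows it is nonpositive on $[1/e,1]$ (using $n\geq2$, $\ln(1/x)\geq0$ and $x\geq1/e$), concluding that $f_n$ is decreasing there with maximum $f_n(1/e)$; your parenthetical logarithmic-derivative alternative is essentially that same monotonicity argument. Your main argument is shorter and avoids differentiating the product altogether; the paper's approach yields the slightly stronger fact that the function is monotone on the whole interval, but that extra information is not used elsewhere, so nothing is lost by your route.
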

\begin{proof}
If we define the function $f:(0,1]\map (0,\infty)$ with
$$
f_n(x)= \ln\left(\frac{1}{x}\right)x(1-x)^{n-2}\;,
$$
then $f_n(1/e)=\frac{1}{e}\left(\frac{e-1}{e}\right)^{n-2}$. Thus, it is enough to
show that $f_n$ is monotonically decreasing in $[1/e,1]$. Taking its derivative, we
see that indeed
\begin{align*}
f_n'(x) &= (1-x)^{n-3} \left[x-1+(1-(n-1)x) \ln\left(\frac{1}{x}\right)\right]\\
    &\leq (1-x)^{n-3} \left[x-1+(1-x) \ln\left(\frac{1}{x}\right)\right]\\
    &= -(1-x)^{n-2} \left[1- \ln\left(\frac{1}{x}\right)\right]\\
    &\leq 0\;.
\end{align*}
The first inequality is due to the fact that $n\geq 2$ and $\ln(1/x)\geq 0$ and the
last one due to $x\geq 1/e$.
\end{proof}

\begin{lemma}
\label{lem:lambdastar}
The quantity
\[\xi(\lambda)=1-\left(1+\frac{\Gamma(2-\lambda)^{-1/\lambda}}{\lambda}\right)e^{-\Gamma(2-\lambda)^{-1/\lambda}}\]
has a unique zero $\lambda^\ast$ over the interval $(0,1]$; moreover,
$\xi(\lambda)<0$ for $0<\lambda<\lambda^\ast$, and  $\xi(\lambda)>0$ for
$\lambda^\ast<\lambda\leq 1$.
\end{lemma}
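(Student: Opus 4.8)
Throughout I abbreviate $u(\lambda)=\Gamma(2-\lambda)^{-1/\lambda}$, so that $\xi(\lambda)=1-\bigl(1+u(\lambda)/\lambda\bigr)e^{-u(\lambda)}$; this is smooth on $(0,1]$ since $\Gamma$ is smooth and positive on $[1,2)$. I would first record the elementary facts about $u$ that drive the argument. Writing $\psi=\Gamma'/\Gamma$ for the digamma and $\psi'$ for the trigamma function, we have $\ln u(\lambda)=-\lambda^{-1}\ln\Gamma(2-\lambda)$, and: \emph{(i)} $u(\lambda)\ge 1$ on $(0,1]$, because $\Gamma\le 1$ on $[1,2]$; \emph{(ii)} $u(\lambda)<e^{1-\gamma}$ on $(0,1]$, which follows by checking that $\rho(\lambda):=\ln\Gamma(2-\lambda)-(\gamma-1)\lambda$ satisfies $\rho(0)=0$ and $\rho'(\lambda)=\psi(2)-\psi(2-\lambda)\ge 0$ (using $\psi(2)=1-\gamma$ and monotonicity of $\psi$); \emph{(iii)} $u'(\lambda)=u(\lambda)\,w(\lambda)/\lambda^2$ where $w(\lambda)=\lambda\psi(2-\lambda)+\ln\Gamma(2-\lambda)$, and since $w(0)=0$ and $w'(\lambda)=-\lambda\psi'(2-\lambda)$, we get $|w(\lambda)|=\int_0^\lambda t\,\psi'(2-t)\,dt\le \tfrac{\pi^2}{6}\int_0^\lambda t\,dt=\tfrac{\pi^2}{12}\lambda^2$, using that $\psi'$ is decreasing with $\psi'(1)=\pi^2/6$.

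Next I would localize the zeros and establish existence. For $\lambda,x>0$ one has the equivalence $1-(1+x/\lambda)e^{-x}>0\iff \lambda>x/(e^x-1)$; applied with $x=u(\lambda)$ this reads $\xi(\lambda)>0\iff \lambda>u(\lambda)/(e^{u(\lambda)}-1)$. The map $x\mapsto x/(e^x-1)$ is strictly decreasing on $(0,\infty)$, so by \emph{(i)}, $u(\lambda)/(e^{u(\lambda)}-1)\le 1/(e-1)$, with equality only when $u(\lambda)=1$, i.e.\ $\lambda=1$; hence $\xi(\lambda)>0$ for every $\lambda\in[1/(e-1),1]$. On the other hand, as $\lambda\to 0^+$ the quantity $u(\lambda)$ stays bounded (by \emph{(i)}--\emph{(ii)}) while $u(\lambda)/\lambda\to\infty$, so $\xi(\lambda)\to-\infty$. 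Thus $\xi$ is continuous on $(0,1]$, negative near $0$, positive at $1/(e-1)$, so it has at least one zero, and every zero lies in $(0,1/(e-1))$.

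The crux is to show $\xi'(\lambda_0)>0$ at every zero $\lambda_0$. A direct computation gives $\xi'(\lambda)=e^{-u}\bigl[\tfrac{u}{\lambda^2}+u'\cdot\tfrac{u+\lambda-1}{\lambda}\bigr]$ with $u=u(\lambda)$. Since $u+\lambda-1>0$ (as $u\ge 1>1-\lambda$), if $u'(\lambda_0)\ge 0$ the bracket is clearly positive; if $u'(\lambda_0)<0$, then writing $|u'|=u|w|/\lambda^2$ by \emph{(iii)}, positivity of the bracket is equivalent to $|w(\lambda_0)|\,(u(\lambda_0)+\lambda_0-1)<\lambda_0$, for which, by $|w|\le\tfrac{\pi^2}{12}\lambda^2$, it suffices that $\tfrac{\pi^2}{12}\,\lambda_0\,(u(\lambda_0)+\lambda_0-1)<1$. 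Since $\lambda_0<1/(e-1)$ and $u(\lambda_0)<e^{1-\gamma}$ by the localization and \emph{(ii)}, the left-hand side is at most $\tfrac{\pi^2}{12}\cdot\tfrac{1}{e-1}\cdot\bigl(e^{1-\gamma}+\tfrac{1}{e-1}-1\bigr)\approx 0.53<1$, so indeed $\xi'(\lambda_0)>0$.

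Finally I would conclude: a continuous function on $(0,1]$ tending to $-\infty$ at $0$ and positive at $1$ has a zero, and since $\xi$ has strictly positive derivative at each of its zeros, the zeros are isolated upcrossings; two distinct zeros would force $\xi$ to keep a constant sign on the interval between two consecutive ones, making $\xi'\le 0$ at one endpoint -- a contradiction. Hence the zero $\lambda^\ast$ is unique. The same positivity of $\xi'(\lambda^\ast)$ together with $\xi(0^+)=-\infty$ forces $\xi<0$ on $(0,\lambda^\ast)$, and $\xi>0$ on $(\lambda^\ast,1]$ since $\xi\ge 0$ there with no further zeros. I expect the main obstacle to be item \emph{(iii)} -- obtaining a clean enough bound on $u'$ (equivalently on $w$) to make the final numerical inequality go through -- while everything else is routine once the reparametrization $\xi(\lambda)>0\iff\lambda>u(\lambda)/(e^{u(\lambda)}-1)$ is in place.
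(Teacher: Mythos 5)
Your proof is correct, and it takes a genuinely different route from the paper's. The paper computes $\xi'$ and factorizes it so that its sign is claimed to be governed by $\Gamma'(2-\lambda)$, giving an increase--decrease shape which, together with $\xi(0^+)=-\infty$ and $\xi(1)>0$, yields single crossing. You instead (a) localize all zeros to $\left(0,\tfrac{1}{e-1}\right)$ via the clean equivalence $\xi(\lambda)>0\iff\lambda>u/(e^u-1)$ with $u=\Gamma(2-\lambda)^{-1/\lambda}\ge 1$ and the monotonicity of $x\mapsto x/(e^x-1)$, and (b) prove transversality there from $\xi'(\lambda)=e^{-u}\bigl[\tfrac{u}{\lambda^2}+u'\,\tfrac{u+\lambda-1}{\lambda}\bigr]$, $u'=u\,w/\lambda^2$ with $w=\lambda\psi(2-\lambda)+\ln\Gamma(2-\lambda)$, using the explicit bounds $1\le u<e^{1-\gamma}$ and $|w(\lambda)|\le\tfrac{\pi^2}{12}\lambda^2$; the closing estimate $\tfrac{\pi^2}{12}\cdot\tfrac{1}{e-1}\bigl(e^{1-\gamma}+\tfrac{1}{e-1}-1\bigr)\approx 0.53<1$ checks out, and the standard digamma/trigamma facts you invoke are all correct. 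Two remarks. First, your bound actually shows $\xi'>0$ on all of $\left(0,\tfrac{1}{e-1}\right)$, not merely at zeros, so the concluding ``isolated upcrossings'' detour can be replaced by strict monotonicity on the localized interval combined with positivity on $\left[\tfrac{1}{e-1},1\right]$ --- slightly cleaner. Second, your derivative retains the term $\ln\Gamma(2-\lambda)/\lambda^2$ arising from differentiating the exponent $-1/\lambda$ in $u$, a term absent from the factorization of $\xi'$ displayed in the paper; for instance, at $\lambda=1$ your formula gives $\xi'(1)=e^{-1}(1-\gamma)>0$, consistent with a direct numerical check, whereas a sign dictated by $\Gamma'(2-\lambda)$ would make $\xi'$ negative there. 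So your quantitative trigamma estimate is doing real work: it buys a self-contained argument that does not rely on the unimodality picture, at the modest cost of a few explicit special-function bounds.
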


\begin{proof} For the proof of this result we will use the well-known fact that the
gamma function is continuously differentiable over $[1,2]$, and its derivative
changes sign from negative to positive at a single point in this interval
(\cite[Ch. 5]{olveretal:10}).

Observe that $\xi$ is continuously differentiable in $(0,1]$, with
\[\xi'(\lambda)=\frac{e^{-\Gamma(2-\lambda)^{-1/\lambda}}}{\lambda^2
\Gamma(2-\lambda)^{1+1/\lambda}}\left(\Gamma(2-\lambda)^{-1/\lambda}+\lambda-1\right)\Gamma'(2-\lambda)\;;\]
the first two factors in this expression are strictly positive over $(0,1]$, whereas
the third factor changes sign from positive to negative at a unique point, say
$\tilde{\lambda}$. It follows that $\xi(\lambda)$ is strictly increasing over
$(0,\tilde{\lambda})$ and strictly decreasing over $(\tilde{\lambda},1]$. As
$\xi(0^+)=-\infty$ and $\xi(1)>0$, we get the desired result.
\end{proof}

\section{Analytic Properties of Auxiliary Functions}
\label{app:auxfunprops}

\begin{lemmanonum}[\cref{lem:propsg}]
 Functions $g_n$ defined in
\eqref{eq:gdef} have a unique point of maximum
$$\xi_n= \argmax_{c\geq 0} g_n(c)\;.$$ 
Furthermore, for all $n\geq 17$,
$$
\xi_n \leq 1\;.
$$
\end{lemmanonum}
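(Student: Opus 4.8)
The plan is to pass to ``quantile space''. We may assume $n\ge 2$, so that $a:=H_n-1>0$ (for $n=1$ the function is $g_1(c)=c$, degenerate on $[0,\infty)$), and substitute $q=e^{-ac}$, which is a decreasing bijection of $[0,\infty)$ onto $(0,1]$. Under it $g_n(c)=\tfrac1a\,\Phi(q)$, where
\[
\Phi(q)\;:=\;\ln(1/q)\,\bigl[1-(1-q)^n\bigr].
\]
Since $\Phi(0^+)=\Phi(1^-)=0$ (the first because $q\ln(1/q)\to0$) and $\Phi>0$ on $(0,1)$, the continuous function $\Phi$ attains an interior maximum on $[0,1]$. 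It therefore suffices to show $\Phi$ has a \emph{unique} critical point in $(0,1)$ and to locate it.

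For uniqueness I would factor $n(1-q)^{n-1}>0$ out of $\Phi'$ and note that $\Phi'(q)=n(1-q)^{n-1}\bigl[R(q)-L(q)\bigr]$, where $R(q):=\ln(1/q)$ and
\[
L(q)\;:=\;\frac{1-(1-q)^n}{nq(1-q)^{n-1}}\;=\;\frac1n\sum_{j=0}^{n-1}(1-q)^{-j},
\]
the identity coming from the geometric expansion $1-(1-q)^n=q\sum_{j=0}^{n-1}(1-q)^j$. Now $L$ is strictly increasing on $(0,1)$ (every summand with $j\ge1$ is), with $L(0^+)=1$ and $L(1^-)=+\infty$, while $R$ is strictly decreasing from $+\infty$ to $0$; hence $L-R$ is strictly increasing and changes sign exactly once, at some $q_n\in(0,1)$. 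Thus $\Phi'>0$ on $(0,q_n)$ and $\Phi'<0$ on $(q_n,1)$, so $q_n$ is the unique maximizer of $\Phi$ and $\xi_n=\ln(1/q_n)/a$ the unique maximizer of $g_n$; pulling back, $g_n$ is increasing on $(0,\xi_n)$ and decreasing on $(\xi_n,\infty)$.

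It remains to prove $\xi_n\le1$ for $n\ge17$, which by the shape of $g_n$ just established is equivalent to $g_n'(1)\le0$. Differentiating, $g_n'(1)=1-(1-u_n)^n-nau_n(1-u_n)^{n-1}$ with $u_n:=e^{-a}=e^{-(H_n-1)}$, and using the same geometric expansion one checks that $g_n'(1)\le0$ is \emph{exactly} the inequality $L(u_n)\le H_n-1$. By \cref{lem:harmonic_bounds_sequence}, $H_n-1>\ln n+\gamma-1$, hence $u_n<c_0/n$ with $c_0:=e^{1-\gamma}\approx1.526$; since $L$ is increasing it suffices to bound $L(c_0/n)$. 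Bounding the sum defining $L$ termwise by $(1-x)^{-1}\le e^{x/(1-x)}$, and using that $t\mapsto\tfrac{e^t-1}{t}$ is increasing together with $nu_n<c_0$ and $1-u_n\ge1-u_{17}>1-c_0/17$ for all $n\ge17$, one obtains a \emph{uniform} bound $L(u_n)\le \tfrac{e^M-1}{M}$ with $M:=\tfrac{c_0}{1-c_0/17}$, a constant that lies strictly below $\ln n+\gamma-1$ once $n$ is moderately large; the remaining finitely many values of $n$ (those $n\ge17$ with $\ln n+\gamma-1\le\tfrac{e^M-1}{M}$) are handled by evaluating $g_n'(1)$ directly and verifying it is negative.

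The main obstacle is precisely this last estimate. Because the constant in play is $c_0=e^{1-\gamma}$, the inequality $L(u_n)\le H_n-1$ is genuinely tight near $n=17$ (numerically $L(u_{17})\approx2.36$ against $H_{17}-1\approx2.44$), so the crude termwise exponential bound is not by itself conclusive for $n$ in the teens; one must either sharpen it or absorb those few $n$ into an explicit finite check. Everything else—the monotonicity of $L$, of $R$, and of $t\mapsto\tfrac{e^t-1}{t}$, and the harmonic-number asymptotics—is routine.
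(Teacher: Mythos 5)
Your proposal is correct, and it takes a genuinely different route from the paper's proof of \cref{lem:propsg}. For uniqueness, the paper differentiates twice: it shows $g_n''$ changes sign exactly once (via the auxiliary map $x\mapsto e^x(x-2)-nx+2$), so $g_n'$ is decreasing then increasing, and with $g_n'(0^+)=1$, $g_n'(\infty)=0$ it crosses zero exactly once. Your change of variables $q=e^{-(H_n-1)c}$ and the factorization $\Phi'(q)=n(1-q)^{n-1}\left[\ln(1/q)-L(q)\right]$, with $L(q)=\frac1n\sum_{j=0}^{n-1}(1-q)^{-j}$ strictly increasing and $\ln(1/q)$ strictly decreasing, reaches the same single-crossing conclusion more elementarily (and you are right to set aside the degenerate $n=1$ case, which the paper glosses over). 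For the threshold $n\geq 17$ the two arguments diverge more substantially: you reduce $\xi_n\leq 1$ to $g_n'(1)\leq 0$, i.e.\ to the clean inequality $L\bigl(e^{-(H_n-1)}\bigr)\leq H_n-1$, and prove it by a uniform estimate $L(u_n)\leq (e^M-1)/M\approx 2.59$; since $\ln n+\gamma-1$ only exceeds this around $n\geq 21$, the cases $n=17,\dots,20$ must be settled by direct evaluation, as you acknowledge. The paper sidesteps exactly this tightness by evaluating $g_n'$ not at $c=1$ but at the sub-unit point $(\ln n+\gamma-1)/(H_n-1)$ (legitimate by \cref{lem:harmonic_bounds_sequence,lem:bound_pricing_seq_parameter}), where the derivative has the closed form $1-\bigl(1-\tfrac{e^{1-\gamma}}{n}\bigr)^n\bigl(1+\tfrac{\ln n+\gamma-1}{e^{\gamma-1}-1/n}\bigr)$, manifestly decreasing in $n$; hence a single numerical check at $n=17$ (value $\approx -0.019$) covers all $n\geq 17$. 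So your plan is sound and the outstanding step is not a wrong turn: the four remaining evaluations of $g_n'(1)$ do come out negative (e.g.\ $L(u_{17})\approx 2.3 < H_{17}-1\approx 2.44$), and the paper itself ultimately rests on a numerical evaluation at $n=17$. To make your write-up complete you should either carry out that explicit finite check for $n=17,\dots,20$, or adopt the paper's device of a carefully chosen evaluation point that turns the verification into a monotone-in-$n$ sequence requiring only one check.
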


\begin{proof}

Let $n$ be a positive integer. First we compute the first and second derivatives of
$g_n$:
\begin{align}
g_n'(c) &= 1 - \left(1-e^{-c(H_n-1)} \right)^n\left(1+n\frac{c(H_n-1)}{e^{c(H_n-1)}-1}\right)\label{eq:helper_g_deriv}\\
g_n''(x) &= \frac{n(H_n-1) \left(1-e^{-c(H_n-1)}\right)^n \left[e^{c(H_n-1)} (c(H_n-1)-2)-n c(H_n-1)+2\right]}{\left(e^{c(H_n-1)}-1\right)^2}\;.\notag
\end{align}

There is a unique point $\tau_n>0$ on which function $x\mapsto e^x (x-2)-n x+2$
changes sign from negative to positive, so the same holds for function $g_n''$;
thus, $g_n'$ is strictly decreasing over $(0,\tau_n)$ and increasing over
$(\tau_n,\infty)$. Furthermore, notice that $\lim_{c\to 0} g_n'(c)=1$ and
$\lim_{c\to\infty}g_n'(c)=0$. This means that there has to be a unique point
$\xi_n>0$ (where $\xi_n < \tau_n$) at which $g'_n$ changes sign from positive to
negative. Therefore, $\xi_n$ is the unique global maximizer of $g_n$.

In order to prove that a positive real $y$ is above that maximization point, i.e.,
$y\geq \xi_n$, it is enough to show that function $g_n$ is decreasing at $y$, or
equivalently, $g_n'(y)<0$. So, recalling from
\cref{lem:harmonic_bounds_sequence} that $H_n-1>\ln(n)+\gamma-1$, in order to
complete our proof it is enough to show that
$$
g_n'\left(\frac{\ln n +\gamma-1}{H_n-1}\right) < 0 \qquad \text{for all}\;\; n\geq 17\;.
$$
We will do this by demonstrating that $g_n'\left(\frac{\ln n +\gamma-1}{H_n-1}\right)$ is a decreasing
sequence (with respect to $n$) that gets negative for $n=17$. Using the explicit
formula \eqref{eq:helper_g_deriv} for $g_n'$, we can see that
$$
g_n'\left(\frac{\ln n +\gamma-1}{H_n-1}\right)=1-\left(1-\frac{e^{1-\gamma}}{n}\right)^n\left(1+\frac{\ln n+\gamma -1}{e^{\gamma -1}-1/n} \right)
$$
is decreasing, since sequences $\left(1-\frac{e^{1-\gamma}}{n}\right)^n$ and
$1+\frac{\ln n+\gamma -1}{e^{\gamma -1}-1/n}$ are both positive and increasing.
Finally it is easy to compute (by simple substitution) that for $n=17$,
$g_{17}'\left(\frac{\ln 17 +\gamma-1}{H_{17}-1}\right)\approx -0.019<0$.
\end{proof}

\begin{lemma}
\label{lem:3}
For the functions $g_n$ defined in \eqref{eq:gdef},
\begin{equation}
\max_{c\in[0,1]} g_n(c) =1-O\left(\frac{\ln\ln n}{\ln n}\right)=1-o(1)\;.
\end{equation}
\end{lemma}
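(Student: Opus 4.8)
The plan is to sandwich $\max_{c\in[0,1]}g_n(c)$ between the trivial upper bound $1$ and a lower bound obtained by evaluating $g_n$ at one carefully chosen point. Since the bracket in $g_n(c)=c\bigl[1-(1-e^{-c(H_n-1)})^n\bigr]$ never exceeds $1$, we immediately get $\max_{c\in[0,1]}g_n(c)\leq 1$, so the entire content of the lemma is the matching lower bound $1-O(\ln\ln n/\ln n)$.

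For the lower bound I would set
\[
c^\star=\frac{\ln n-\ln\ln n}{H_n-1}\;.
\]
By~\cref{lem:specific_price} we have $\ln n-\ln\ln n\leq H_n-1$ for all $n\geq 5$, so $c^\star\in[0,1]$ is an admissible choice and hence $\max_{c\in[0,1]}g_n(c)\geq g_n(c^\star)$. The point of this choice is that $c^\star(H_n-1)=\ln n-\ln\ln n$, so $e^{-c^\star(H_n-1)}=\frac{\ln n}{n}$, and therefore, using $1-x\leq e^{-x}$,
\[
\left(1-e^{-c^\star(H_n-1)}\right)^n=\left(1-\frac{\ln n}{n}\right)^n\leq e^{-\ln n}=\frac1n\;,
\]
which gives $g_n(c^\star)\geq c^\star\bigl(1-\tfrac1n\bigr)$.

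It then remains to lower-bound $c^\star$ itself. By~\cref{lem:harmonic_bounds_sequence} the sequence $H_n-\ln n$ is strictly decreasing and equals $1$ at $n=1$, so $H_n-1<\ln n$ for every $n\geq 2$; hence $c^\star\geq\frac{\ln n-\ln\ln n}{\ln n}=1-\frac{\ln\ln n}{\ln n}$. Combining the two estimates,
\[
\max_{c\in[0,1]}g_n(c)\geq\left(1-\frac{\ln\ln n}{\ln n}\right)\left(1-\frac1n\right)\geq 1-\frac{\ln\ln n}{\ln n}-\frac1n=1-O\!\left(\frac{\ln\ln n}{\ln n}\right)\;,
\]
which together with $\max_{c\in[0,1]}g_n(c)\leq 1$ proves the claim (the $1-o(1)$ form being immediate).

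There is no genuine obstacle here: the argument is elementary once the value $c^\star$ is identified, and the only point requiring care is the constraint $c^\star\leq 1$, which must hold for $c^\star$ to lie in the optimization range $[0,1]$. This is precisely where the estimate $\ln n-\ln\ln n\leq H_n-1$ of~\cref{lem:specific_price} enters, and it is also what forces the $\ln\ln n/\ln n$ shape of the error term. The one thing to watch is that the various harmonic-number estimates are applied in the correct direction — the upper bound $H_n-1\leq\ln n$ is used to \emph{lower}-bound $c^\star$, while the inequality of~\cref{lem:specific_price} is used to keep $c^\star\leq 1$.
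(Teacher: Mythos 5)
Your proposal is correct and follows essentially the same route as the paper: the same test point $c^\star=\frac{\ln n-\ln\ln n}{H_n-1}$, admissibility via \cref{lem:specific_price}, the bound $\left(1-\frac{\ln n}{n}\right)^n\leq\frac{1}{n}$, and the estimate $H_n-1\leq\ln n$ to lower-bound $c^\star$. The only cosmetic difference is that you derive $H_n-1\leq\ln n$ explicitly from \cref{lem:harmonic_bounds_sequence} and you spell out the trivial upper bound $\max_{c\in[0,1]}g_n(c)\leq 1$, which the paper leaves implicit.
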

\begin{proof}
For any integer $n\geq 5$, we will lower bound the maximum value of $g_n(x)$ by
evaluating it on the following numbers:
$$
\tilde c_n = \frac{\ln n-\ln\ln n}{H_n-1}=\frac{\ln\left(n/\ln n\right)}{H_n-1} \leq 1\; ;
$$
the inequality holds due to~\cref{lem:bound_pricing_seq_parameter}. We have that

\begin{align*}
g_n(\tilde c_n) &= \frac{\ln\left(n/\ln n\right)}{H_n-1}\left[1-\left(1-e^{-\ln\left(\frac{n}{\ln n}\right)}\right)^n\right]\\
    &=\frac{\ln n-\ln\ln n}{H_n-1}\left[1-\left(1-\frac{\ln n}{n}\right)^n\right]\\
    & \geq \frac{\ln n-\ln\ln n}{\ln n}\left(1-\frac{1}{n}\right)\\
    & = 1-O\left(\frac{\ln\ln n}{\ln n}\right)\;,
\end{align*}
the inequality being a consequence of the fact that
$\left(1-\frac{x}{n}\right)^n\leq e^{-x}$ for any positive real $x\leq n$ with $x\gets
\ln n$ and also due to $H_n\leq 1+\ln n$.
\end{proof}

\begin{lemmanonum}[\cref{lem:basicpareto}]
For the rescaled Pareto distribution $F_{\lambda,r}$ given by \eqref{eq:pareto_instances}, for $\lambda\in(0,1]$ and $r\in(0,1]$, we have the following expressions for its expected second-highest order statistic and optimal anonymous pricing,
\begin{enumerate}
    \item $\expect[X\sim F_{\lambda,r}]{X_{n-1:n}}=1+r\left(\beta_{n,\lambda}-1\right)$;
    \item $\price{F_{\lambda,r}}{n}=\sup_{0\leq q\leq 1}\left(1+r\left(\frac{1}{q^\lambda}-1\right)\right)\left(1-(1-q)^n\right)$,
\end{enumerate}
where $\beta_{n,\lambda}$ is given by \eqref{eq:helperreg_ab}.
\end{lemmanonum}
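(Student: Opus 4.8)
The plan is to treat the two parts separately, after a common reparametrization in \emph{quantile space}. For $F=F_{\lambda,r}$, which is supported on $[1,\infty)$, the map $p\mapsto q(p)=1-F_{\lambda,r}(p)=\left[1+\tfrac1r(p-1)\right]^{-1/\lambda}$ is a strictly decreasing bijection from $[1,\infty)$ onto $(0,1]$, with inverse $p(q)=1+r\left(q^{-\lambda}-1\right)$ and $dp=-\lambda r\,q^{-\lambda-1}\,dq$. Both parts follow by pushing everything through this substitution.

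For part~(2), I would substitute $p=p(q)$ into $\pricep{F_{\lambda,r}}{n}{p}=p\,[1-F_{\lambda,r}^n(p)]$; since $F_{\lambda,r}^n(p(q))=(1-q)^n$, this is exactly $\left(1+r(q^{-\lambda}-1)\right)\left(1-(1-q)^n\right)=H_{n,\lambda}(r,q)$. As $p$ ranges over the support $[1,\infty)$, $q$ ranges over $(0,1]$, so $\price{F_{\lambda,r}}{n}=\sup_{q\in(0,1]}H_{n,\lambda}(r,q)$; and since $H_{n,\lambda}(r,\cdot)$ extends continuously to $q=0$ (the singularity of $q^{-\lambda}$ is killed by the factor $1-(1-q)^n=O(q)$), adjoining $q=0$ does not change the supremum, giving the stated formula over $[0,1]$.

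For part~(1), I would use the tail-integral formula for a random variable valued in $[1,\infty)$: $\expect[X\sim F_{\lambda,r}]{\orderstat{X}{n-1}{n}}=1+\int_1^\infty\prob{\orderstat{X}{n-1}{n}>x}\,dx$. Now $\orderstat{X}{n-1}{n}\le x$ precisely when at least $n-1$ of the $n$ i.i.d.\ draws lie in $[1,x]$, so $\prob{\orderstat{X}{n-1}{n}>x}=\prob{\text{at least }2\text{ of the }n\text{ draws exceed }x}=\sum_{j=2}^n\binom nj q(x)^j(1-q(x))^{n-j}$. Substituting $q=q(x)$ and using $\int_0^1 q^{(j-\lambda)-1}(1-q)^{(n-j+1)-1}\,dq=\mathrm{B}(j-\lambda,n-j+1)=\frac{\Gamma(j-\lambda)\Gamma(n-j+1)}{\Gamma(n+1-\lambda)}$ (valid since $j\ge 2>\lambda$), the integral becomes $\lambda r\sum_{j=2}^n\binom nj\mathrm{B}(j-\lambda,n-j+1)=\frac{\lambda r\,n!}{\Gamma(n+1-\lambda)}\sum_{j=2}^n\frac{\Gamma(j-\lambda)}{j!}$. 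The sum is handled by a telescoping identity: from $\Gamma(j+1-\lambda)=(j-\lambda)\Gamma(j-\lambda)$ one gets $\frac{\lambda\,\Gamma(j-\lambda)}{j!}=\frac{\Gamma(j-\lambda)}{(j-1)!}-\frac{\Gamma(j+1-\lambda)}{j!}$, hence $\lambda\sum_{j=2}^n\frac{\Gamma(j-\lambda)}{j!}=\Gamma(2-\lambda)-\frac{\Gamma(n+1-\lambda)}{n!}$. Plugging back yields $\expect{\orderstat{X}{n-1}{n}}=1+r\left(\frac{n!\Gamma(2-\lambda)}{\Gamma(n+1-\lambda)}-1\right)=1+r(\beta_{n,\lambda}-1)$.

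There is no serious obstacle here; the work is purely book-keeping. The only points needing a bit of care are: (i) justifying the change of variables and the convergence of the integral — near $q=0$ the integrand is $\Theta(q^{1-\lambda})$, which is integrable for $\lambda\in(0,1]$, and $\expect{\orderstat{X}{n-1}{n}}<\infty$ since its tail decays like $x^{-2/\lambda}$; and (ii) the endpoint $\lambda=1$, where $F_{\lambda,1}$ itself has infinite mean — this case is still covered, since the derivation only involves the second (and higher) order statistics, and a direct check against $\beta_{n,1}=n$ confirms the formula.
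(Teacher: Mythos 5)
Your proposal is correct, and part (2) is argued exactly as in the paper: the quantile substitution $p=1+r(q^{-\lambda}-1)$ turns $p[1-F_{\lambda,r}^n(p)]$ into $H_{n,\lambda}(r,q)$, and the endpoint $q=0$ is harmless since it is a continuous limit of values attained on $(0,1]$ (the paper does not even comment on this, so your remark is a small added courtesy). For part (1) you take a genuinely different, somewhat longer route: you write $\expectsmall{\orderstat{X}{n-1}{n}}=1+\int_1^\infty\prob{\orderstat{X}{n-1}{n}>x}\,dx$, expand the survival probability as the binomial sum $\sum_{j=2}^n\binom nj q^j(1-q)^{n-j}$, integrate term by term against $\lambda r\,q^{-\lambda-1}\,dq$ into beta functions, and collapse the resulting sum $\lambda\sum_{j=2}^n\Gamma(j-\lambda)/j!$ by the telescoping identity $\lambda\Gamma(j-\lambda)/j!=\Gamma(j-\lambda)/(j-1)!-\Gamma(j+1-\lambda)/j!$ — all of which checks out, including the integrability discussion near $q=0$ and the tail decay $x^{-2/\lambda}$ that makes the $\lambda=1$ case legitimate despite the infinite mean of $F_{1,r}$. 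The paper instead integrates directly against the density of the second-highest order statistic,
\begin{equation*}
\expectsmall{\orderstat{X}{n-1}{n}}=n(n-1)\int_1^\infty x\,F^{n-2}(x)\bigl(1-F(x)\bigr)\,dF(x)
=n(n-1)\int_0^1\Bigl(1-r+\frac{r}{q^{\lambda}}\Bigr)q(1-q)^{n-2}\,dq\;,
\end{equation*}
which reduces after the same substitution to just two beta evaluations, $(1-r)n(n-1)\mathrm{B}(2,n-1)+rn(n-1)\mathrm{B}(2-\lambda,n-1)$, and hence to $1+r(\beta_{n,\lambda}-1)$ with no telescoping needed. The trade-off is minor: the paper's computation is shorter but presupposes the explicit order-statistic pdf (which it recalls in its appendix), while your survival-function argument avoids that formula and makes the finiteness and $\lambda=1$ issues explicit; both yield the same result.
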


\begin{proof}
Point 2 follows directly from the change of variables
\[q=1-F_{\lambda,r}(x)\quad\Leftrightarrow\quad x=1+r\left(\frac{1}{q^\lambda}-1\right)=1-r+\frac{r}{q^\lambda}\;.\]

To prove point 1, we use the same change of variables to express the expectation in terms of the gamma function:
\begin{align*}
    \expect{X_{n-1:n}}&=n(n-1)\int_1^\infty x F^{n-2}(x)(1-F(x))\,dF(x)\\
    &=n(n-1)\int_0^1\left(1-r+\frac{r}{q^\lambda}\right)q(1-q)^{n-2}\,dq\\
    &= (1-r)n(n-1)\mathrm{B}(2,n-1)+rn(n-1)\mathrm{B}(2-\lambda,n-1)\\
    &=1-r+r\frac{n!\Gamma(2-\lambda)}{\Gamma(n+1-\lambda)}\\
    &=1+r(\beta_{n,\lambda}-1)\;.
\end{align*}
\end{proof}

\begin{lemmanonum}[\cref{lem:supglam}] For each $\lambda\in(0,1]$, let $g_\lambda$ be defined as in \eqref{eq:glambda} and $\eta(\lambda)$ be the unique positive solution of the equation $e^x=1+\frac{x}{\lambda}$. Let also $\lambda^\ast$ be the unique root of \eqref{eq:lambdastar}. The supremum of the function $g_\lambda$ over $[0,1]$ is as follows.
\begin{align*}
\text{If }0<\lambda\leq\lambda^\ast\text{, then}\quad&\sup_{c\in[0,1]}g_\lambda(c)=\sup_{c\geq 0}g_\lambda(c)=\frac{\eta(\lambda)^{1-\lambda}}{\Gamma(2-\lambda)(\lambda+\eta(\lambda))}\;.\\
\text{If }\lambda^\ast\leq\lambda\leq 1\text{, then}\quad&\sup_{c\in[0,1]}g_\lambda(c)=\phantom{\sup_{c\geq0}}g_\lambda(1)=1-e^{-\Gamma(2-\lambda)^{-1/\lambda}}\;.
\end{align*}
\end{lemmanonum}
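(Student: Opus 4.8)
The plan is to reduce everything to a sign analysis of $g_\lambda'$. Abbreviate $A=\Gamma(2-\lambda)$ and $h(c)=(cA)^{-1/\lambda}=A^{-1/\lambda}c^{-1/\lambda}$, so that $h$ is a strictly decreasing bijection of $(0,\infty)$ onto itself with $h'(c)=-h(c)/(\lambda c)$. A direct differentiation of $g_\lambda(c)=c\bigl(1-e^{-h(c)}\bigr)$ gives
\[
g_\lambda'(c)=\bigl(1-e^{-h(c)}\bigr)-\tfrac{h(c)}{\lambda}e^{-h(c)}=e^{-h(c)}\,\psi\!\bigl(h(c)\bigr),\qquad \psi(x):=e^x-1-\tfrac{x}{\lambda}\;.
\]
Since $e^{-h(c)}>0$, the sign of $g_\lambda'$ equals that of $\psi(h(c))$. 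For $\lambda\in(0,1)$ one checks, exactly as in the definition of $\eta(\lambda)$, that $\psi$ has a unique positive zero $\eta(\lambda)$, with $\psi<0$ on $(0,\eta(\lambda))$ and $\psi>0$ on $(\eta(\lambda),\infty)$. Letting $c^\ast$ be the unique point with $h(c^\ast)=\eta(\lambda)$, monotonicity of $h$ gives $g_\lambda'>0$ on $(0,c^\ast)$ and $g_\lambda'<0$ on $(c^\ast,\infty)$; hence $c^\ast$ is the unique, and therefore global, maximizer of $g_\lambda$ on $(0,\infty)$. Solving $h(c^\ast)=\eta(\lambda)$ yields $c^\ast=A^{-1}\eta(\lambda)^{-\lambda}$, and rewriting $e^{\eta}=1+\eta/\lambda$ as $e^{-\eta(\lambda)}=\lambda/(\lambda+\eta(\lambda))$ gives $g_\lambda(c^\ast)=c^\ast\cdot\frac{\eta(\lambda)}{\lambda+\eta(\lambda)}=\frac{\eta(\lambda)^{1-\lambda}}{\Gamma(2-\lambda)(\lambda+\eta(\lambda))}$.

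Next I restrict to $[0,1]$. Because $g_\lambda'=e^{-h}\psi(h)$ is positive precisely on $(0,c^\ast)$, the behaviour of $g_\lambda$ on $[0,1]$ is governed entirely by whether $c^\ast\le1$. Writing $B:=h(1)=\Gamma(2-\lambda)^{-1/\lambda}$ and using that $h$ is decreasing, $c^\ast\le1\iff\eta(\lambda)=h(c^\ast)\ge h(1)=B\iff\psi(B)\le0$. The identity linking this to $\lambda^\ast$ is the elementary rewriting
\[
\xi(\lambda)=1-\Bigl(1+\tfrac{B}{\lambda}\Bigr)e^{-B}=e^{-B}\Bigl(e^{B}-1-\tfrac{B}{\lambda}\Bigr)=e^{-B}\,\psi(B)\;,
\]
so $\xi(\lambda)$ and $\psi(B)$ always have the same sign. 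By \cref{lem:lambdastar} ($\xi<0$ on $(0,\lambda^\ast)$ and $\xi>0$ on $(\lambda^\ast,1]$), this gives $c^\ast\le1\iff\lambda\le\lambda^\ast$.

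Assembling the pieces finishes the proof. For $\lambda\in(0,\lambda^\ast]$ the global maximizer $c^\ast$ lies in $[0,1]$, so $\sup_{c\in[0,1]}g_\lambda(c)=\sup_{c\ge0}g_\lambda(c)=g_\lambda(c^\ast)=\frac{\eta(\lambda)^{1-\lambda}}{\Gamma(2-\lambda)(\lambda+\eta(\lambda))}$. For $\lambda\in[\lambda^\ast,1)$ we have $c^\ast\ge1$, so $g_\lambda'\ge0$ on all of $[0,1]$, $g_\lambda$ is nondecreasing there, and $\sup_{c\in[0,1]}g_\lambda(c)=g_\lambda(1)=1-e^{-B}=1-e^{-\Gamma(2-\lambda)^{-1/\lambda}}$. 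The endpoint $\lambda=1$ is treated separately and is even simpler: then $\psi(x)=e^x-1-x>0$ for all $x>0$, so $g_1'>0$ on $(0,\infty)$, $g_1$ is strictly increasing, and $\sup_{c\in[0,1]}g_1(c)=g_1(1)=1-e^{-1}=1-e^{-\Gamma(1)^{-1}}$, consistent with the second branch. I expect the only genuinely non-routine point to be spotting the identity $\xi(\lambda)=e^{-B}\psi(B)$, which is what makes the definition of $\lambda^\ast$ align with the threshold $c^\ast=1$; the remainder is a standard first-derivative analysis of $g_\lambda$ through the auxiliary function $\psi$, together with the observation that $g_\lambda$ vanishes at $c=0^+$ (and, for $\lambda<1$, as $c\to\infty$), which corroborates that the unique critical point is a maximum.
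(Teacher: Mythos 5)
Your proof is correct and follows essentially the same route as the paper's: a first-derivative analysis showing $g_\lambda$ has a unique peak over $[0,\infty)$, locating that peak in $[0,1]$ exactly when $g_\lambda'(1)=\xi(\lambda)\le 0$, i.e.\ $\lambda\le\lambda^\ast$ (via \cref{lem:lambdastar}), and evaluating the peak through the substitution $x=(c\Gamma(2-\lambda))^{-1/\lambda}$ and the defining equation of $\eta(\lambda)$, with $\lambda=1$ handled separately. The only difference is cosmetic: you get unimodality by factoring $g_\lambda'(c)=e^{-h(c)}\,\psi\bigl(h(c)\bigr)$ and exploiting the monotone change of variable $h$, whereas the paper argues via the single sign change of $g_\lambda''$.
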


\begin{proof} Fix $\lambda\in(0,1)$. Observe that $g_\lambda$ is twice continuously
differentiable; with some calculations one can see that
\begin{align*}
g'_\lambda(c)&=1-e^{-(c\Gamma(2-\lambda))^{-1/\lambda}}\left[1+\frac{\left(c\Gamma(2-\lambda)\right)^{-1/\lambda}}{\lambda}\right];\\
g''_\lambda(c)&=\frac{e^{-(c\Gamma(2-\lambda))^{-1/\lambda}}}{\lambda^2c^{1+2/\lambda}\Gamma(2-\lambda)^{2/\lambda}}\left[-1+(1-\lambda)(c\Gamma(2-\lambda))^{1/\lambda}\right]\;;
\end{align*} one also has the limits $g_\lambda(0)=0$, $g_\lambda(\infty)=0$,
$g'_\lambda(0)=1$, $g'_\lambda(\infty)=0$, $g''_\lambda(0)=0$,
$g''_\lambda(\infty)=0$. Now note that there is a unique point,
$\tau_\lambda=\frac{1}{\Gamma(2-\lambda)(1-\lambda)^\lambda}$, at which
$g''_\lambda$ changes sign from negative to positive, so that there is a unique
point $\xi_\lambda\leq\tau_\lambda$ at which $g'_\lambda$ changes sign from positive
to negative. Thus $g_\lambda$ has a unique peak over $[0,\infty)$ which corresponds
to the unique root of its derivative. This
peak occurs in the interval $[0,1]$ if and only if $g'_\lambda(1)\leq 0$; solving
for $\lambda$, we get $\lambda\leq\lambda^\ast$ as in~\eqref{eq:lambdastar} and
\cref{lem:lambdastar}. Thus, if $\lambda\geq\lambda^\ast$, the supremum of
$g_\lambda(c)$ over $[0,1]$ is achieved at $c=1$; this includes the case $\lambda=1$ since a similar analysis yields that $g''_\lambda$ is strictly negative and $g'_\lambda$ is strictly positive. On the other hand, for $\lambda\leq\lambda^\ast$,
by looking at the equation $g'_\lambda(c)=0$ and performing the change of variables
$x=(c\Gamma(2-\lambda))^{-1/\lambda}$ we get $e^x=1+\frac{x}{\lambda}$, or
$x=\eta(\lambda)$; plugging these back into $g_\lambda(c)$ gives us the desired
result.
\end{proof}

\begin{lemma}\label{lem:lowboundtechnical1} For each $\lambda\in(0,1]$, we have the
limit
\[\adjustlimits\lim_{n\rightarrow\infty}\sup_{c\geq1/(\Gamma(2-\lambda)n^\lambda)}c\left[1-\left(1-\frac{(c\Gamma(2-\lambda))^{-1/\lambda}}{n}\right)^n\right]
=\sup_{c\geq
0}g_\lambda(c)\;.\]
\end{lemma}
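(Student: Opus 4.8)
The statement has the shape $\lim_n \sup = \sup \lim$, so the plan is the textbook one: prove pointwise convergence of the family being maximised, then justify swapping the supremum and the limit by crude uniform control at the two ends of the domain ($c$ near $0$ and $c$ near $\infty$). Write $h_n(c)\equiv c\bigl[1-\bigl(1-\tfrac{(c\Gamma(2-\lambda))^{-1/\lambda}}{n}\bigr)^n\bigr]$; its natural domain is $c\ge a_n$ with $a_n\equiv 1/(\Gamma(2-\lambda)n^\lambda)$ the exact threshold making $(c\Gamma(2-\lambda))^{-1/\lambda}/n\le 1$, and note $a_n\to 0$. From $(1-t/n)^n\le e^{-t}$ (for $0\le t\le n$) I get the domination $h_n(c)\ge g_\lambda(c)$ for every $c\ge a_n$, and from $(1-t/n)^n\to e^{-t}$ I get $h_n(c)\to g_\lambda(c)$ pointwise. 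I will also record two elementary facts about $g_\lambda$: first, $g_\lambda(c)\le c$ (since $1-e^{-(\cdot)}\le 1$), so $g_\lambda(0^+)=0$ while $g_\lambda>0$ on $(0,\infty)$, whence $\sup_{c\ge0}g_\lambda=\sup_{c>0}g_\lambda>0$ and this supremum is already attained over $\{c\ge a\}$ for every $a$ below its value; second, $g_\lambda(c)\le \Gamma(2-\lambda)^{-1/\lambda}c^{1-1/\lambda}$ (Bernoulli), which for $\lambda<1$ decays as $c\to\infty$.

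\textbf{The two bounds.} For the lower bound, since $a_n\to0$ the domination gives, for all large $n$, $\sup_{c\ge a_n}h_n(c)\ge\sup_{c\ge a_n}g_\lambda(c)=\sup_{c>0}g_\lambda(c)$, hence $\liminf_n\sup_{c\ge a_n}h_n(c)\ge\sup_{c\ge0}g_\lambda(c)$. For the upper bound, fix $\varepsilon>0$ and split $[a_n,\infty)$ into $[a_n,\delta]$, $[\delta,M]$, $[M,\infty)$. On $[a_n,\delta]$ I use the trivial bound $1-(1-t/n)^n\le 1$, so $h_n(c)\le c\le\delta$; choosing $\delta<\varepsilon$ makes this piece negligible. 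On $[M,\infty)$ I use Bernoulli, $1-(1-t/n)^n\le t$, so $h_n(c)\le\Gamma(2-\lambda)^{-1/\lambda}c^{1-1/\lambda}$: for $\lambda\in(0,1)$ this is decreasing in $c$ and vanishes, while for $\lambda=1$ it equals the constant $1$, which is exactly $\sup_{c>0}g_1(c)$ --- so in either case I can fix $M=M(\varepsilon,\lambda)$ with $\sup_{c\ge M}h_n(c)\le\sup_{c>0}g_\lambda(c)+\varepsilon$ for all $n$. On the compact middle interval $[\delta,M]$ the convergence $h_n\to g_\lambda$ is \emph{uniform}: there $t=(c\Gamma(2-\lambda))^{-1/\lambda}$ stays in a compact subset of $(0,\infty)$, $(1-t/n)^n\to e^{-t}$ uniformly on such sets, and $h_n(c)-g_\lambda(c)=c\bigl[e^{-t}-(1-t/n)^n\bigr]$ with $c\le M$; hence $\sup_{[\delta,M]}h_n\to\sup_{[\delta,M]}g_\lambda\le\sup_{c>0}g_\lambda$. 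Taking the maximum over the three ranges gives $\sup_{c\ge a_n}h_n(c)\le\sup_{c\ge0}g_\lambda(c)+\varepsilon$ for all large $n$, so $\limsup_n\sup_{c\ge a_n}h_n(c)\le\sup_{c\ge0}g_\lambda(c)+\varepsilon$; letting $\varepsilon\to0$ and combining with the lower bound gives the claimed identity.

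\textbf{Main obstacle.} There is no genuine difficulty here --- everything reduces to the well-known uniform-on-compacts convergence $(1-t/n)^n\to e^{-t}$ plus two one-line tail estimates. The only point worth attention is the $c\to\infty$ tail, where the naive bound $\Gamma(2-\lambda)^{-1/\lambda}c^{1-1/\lambda}$ fails to vanish in the boundary case $\lambda=1$; the fix is simply to observe that it then equals precisely $\sup_{c>0}g_1(c)=1$, so one invokes that value rather than smallness. For the same reason it is cleanest to run the whole argument through $\liminf$ and $\limsup$ and only conclude at the very end that the limit exists and equals $\sup_{c\ge0}g_\lambda(c)$.
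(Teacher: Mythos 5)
Your proof is correct, and while it follows the same high-level skeleton as the paper's argument (split the domain into a vanishing left piece near $0$, a compact middle interval where one passes to the limit uniformly, and a far tail), the technical ingredients differ in a way worth noting. The paper extends the function by $c\mapsto c$ below the threshold $1/(\Gamma(2-\lambda)n^\lambda)$, invokes the monotonicity of $(1-x/n)^n$ in $n$ to get an eventually decreasing sequence on $[\epsilon,\infty)$, applies Dini's theorem for uniform convergence on compacta, and controls the tail at infinity via this monotonicity together with the claim that the functions vanish at infinity; you instead use the $n$-uniform Bernoulli bound $h_n(c)\le \Gamma(2-\lambda)^{-1/\lambda}c^{1-1/\lambda}$ for the tail, the standard uniform-on-compacts convergence $(1-t/n)^n\to e^{-t}$ for the middle, and the $n$-uniform domination $h_n\ge g_\lambda$ (from $(1-t/n)^n\le e^{-t}$) for the lower bound, which avoids both Dini and any monotonicity-in-$n$ argument. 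Your treatment of the tail is in fact more robust: at the endpoint $\lambda=1$ the functions do \emph{not} vanish at infinity (they tend to $1$), so the paper's statement ``vanish at infinity'' and the choice of $\delta$ with $h_{N}(c)\le\epsilon$ for $c\ge\delta$ break down there, whereas your observation that the Bernoulli bound then equals the constant $1=\sup_{c>0}g_1(c)$ handles this boundary case cleanly; the paper's conclusion is still valid for $\lambda=1$, but your write-up repairs this small inaccuracy explicitly. The only mild caveat is that you assert $\sup_{c>0}g_1(c)=1$ without proof, but this is immediate from $g_1(c)=c(1-e^{-1/c})\le 1$ and $g_1(c)\to1$ as $c\to\infty$.
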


\begin{proof} Define the sequence of auxiliary functions $h_{n,\lambda}$ for $n\geq
2$,

$$
h_{n,\lambda}(c)=
\begin{cases}
c, & c\leq
\frac{1}{\Gamma(2-\lambda)n^\lambda}\;,\\
c\left[1-\left(1-\frac{(c\Gamma(2-\lambda))^{-1/\lambda}}{n}\right)^n\right], & c\geq\frac{1}{\Gamma(2-\lambda)n^\lambda}\;.
\end{cases}
$$
These functions are continuous, vanish at infinity, and converge pointwise to
$g_\lambda$, which is also continuous. Also, for any $\epsilon>0$, the restriction
of $h_{n,\lambda}$ to $[\epsilon,\infty)$ forms an eventually decreasing sequence (with respect to $n$) ,
since $x\leq n\leq m$ implies $(1-x/n)^n\leq (1-x/m)^m$ (\cite[Theorem~35]{Hardy:1952aa}). Thus, by Dini's Theorem \cite[Theorem~7.13]{rudin:76}, over any
interval $[a,b]$ with $0<a<b<\infty$, the sequence $h_{n,\lambda}$ converges
uniformly to $g_\lambda$ and we have
$\adjustlimits\lim_{n\rightarrow\infty}\sup_{c\in[a,b]}h_{n,\lambda}(c)=\sup_{c\in[a,b]}g_\lambda(c)$.

To conclude the proof, let $z_\lambda^\ast$ be the unique maximizer of
$g_\lambda(c)$ (see also the proof of \cref{lem:supglam}). Take $0<\epsilon\leq
g_\lambda(z^\ast)$. Take $N$ large enough so that $h_{n,\lambda}$ are decreasing
over $[\epsilon,\infty)$, for $n\geq N$. Take $\delta$ such that $h_{N}(c)\leq
\epsilon$ for all $c\geq\delta$. Putting all this together, we have
\begin{align*}
\adjustlimits
\limsup_{n\rightarrow\infty}\sup_{c\in[0,\epsilon]}h_{n,\lambda}(c) &\leq\epsilon\;;\\
\limsup_{n\rightarrow\infty}\sup_{c\in[\epsilon,\delta]}h_{n,\lambda}(c) &=\sup_{c\in[\epsilon,\delta]}g_{\lambda}(c)\leq
g_\lambda(z^\ast)\;;\\
\limsup_{n\rightarrow\infty}\sup_{c\geq\delta}h_{n,\lambda}(c) &\leq
\epsilon\;,
\end{align*}
and thus $\limsup_{n\rightarrow\infty}\sup_{c\geq
0}h_{n,\lambda}(c)\leq\sup_{c\geq 0}g_\lambda(c)$. Since by elementary analysis we
also have $\liminf_{n\rightarrow\infty}\sup_{c\geq
0}h_{n,\lambda}(c)\geq\sup_{c\geq 0}g_\lambda(c)$, this concludes the proof.
\end{proof}

\begin{lemma}
\label{lem:lowboundtechnical2} 
For each $\lambda\in(0,1]$ and $n\geq 2$, the function
\[H_{n,\lambda}(1,q)=\frac{1-(1-q)^n}{q^\lambda}\;,\] defined over
$q\in[0,1]$, has a unique maximizer $\xi_{n,\lambda}$. Moreover, if in addition
$\lambda^\ast<\lambda\leq 1$, then
$\xi_{n,\lambda}\leq\beta_{n,\lambda}^{-1/\lambda}$ for large enough $n$. Here
$\lambda^\ast$ is the unique root of \eqref{eq:lambdastar} and $\beta_{n,\lambda}$
is defined as in \eqref{eq:helperreg_ab}.
\end{lemma}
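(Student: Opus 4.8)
The plan is to reduce the whole statement to an analysis of the sign of the derivative of the single‑variable function $\psi(q)\equiv H_{n,\lambda}(1,q)=q^{-\lambda}\bigl(1-(1-q)^n\bigr)$, extended continuously to $q=0$. First I would compute
\[
\psi'(q)=q^{-\lambda-1}\Bigl[nq(1-q)^{n-1}-\lambda\bigl(1-(1-q)^n\bigr)\Bigr],
\]
so that the sign of $\psi'(q)$ is the sign of $nq(1-q)^{n-1}-\lambda(1-(1-q)^n)$. Substituting $t=1-q$ and factoring $1-t^n=(1-t)\sum_{k=0}^{n-1}t^k$, this bracket becomes $(1-t)\bigl(\sum_{k=0}^{n-1}t^k\bigr)\bigl(\rho(t)-\lambda\bigr)$, where
\[
\rho(t)\equiv\frac{n\,t^{\,n-1}}{\sum_{k=0}^{n-1}t^k},\qquad\text{equivalently}\qquad\frac{1}{\rho(t)}=\frac1n\sum_{i=0}^{n-1}t^{-i}.
\]
Hence the sign of $\psi'(q)$ equals the sign of $\rho(1-q)-\lambda$. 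Since $1/\rho$ is a sum of non‑increasing functions of $t$ on $(0,1]$ — strictly decreasing in the $i=1$ term — the function $\rho$ is strictly increasing, with $\rho(0^+)=0$ and $\rho(1)=1$; therefore $q\mapsto\rho(1-q)$ is strictly decreasing on $[0,1]$.

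From here the first assertion is immediate. For $\lambda\in(0,1)$ there is a unique $\xi_{n,\lambda}\in(0,1)$ with $\rho(1-\xi_{n,\lambda})=\lambda$, and $\psi'>0$ on $(0,\xi_{n,\lambda})$ while $\psi'<0$ on $(\xi_{n,\lambda},1)$, so $\xi_{n,\lambda}$ is the unique maximizer; for $\lambda=1$ we have $\rho(1-q)<1=\lambda$ throughout $(0,1]$, so $\psi$ is strictly decreasing and its unique maximizer over $[0,1]$ is $\xi_{n,1}=0$ (consistent with $\beta_{n,1}^{-1}=1/n>0$).

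For the second assertion, put $q_n\equiv\beta_{n,\lambda}^{-1/\lambda}$, which lies in $(0,1)$ for all large $n$ since $\beta_{n,\lambda}\to\infty$. Because $\psi$ has a single peak at $\xi_{n,\lambda}$, it suffices to show $\psi'(q_n)<0$, i.e.\ $\rho(1-q_n)<\lambda$, for all large $n$. Using the Stirling asymptotics $\beta_{n,\lambda}\sim\Gamma(2-\lambda)n^\lambda$ from~\eqref{eq:betaapprox} one gets $nq_n\to c\equiv\Gamma(2-\lambda)^{-1/\lambda}$, hence $(1-q_n)^n\to e^{-c}$ and
\[
\rho(1-q_n)=\frac{nq_n(1-q_n)^{n-1}}{1-(1-q_n)^n}\;\longrightarrow\;\frac{c\,e^{-c}}{1-e^{-c}}\qquad(n\to\infty).
\]
So the claim boils down to the inequality $\frac{c\,e^{-c}}{1-e^{-c}}<\lambda$; multiplying through by $1-e^{-c}>0$ and dividing by $\lambda>0$, this is exactly $1-\bigl(1+\tfrac{c}{\lambda}\bigr)e^{-c}>0$, whose left‑hand side is precisely the quantity $\xi(\lambda)$ from~\cref{lem:lambdastar}, known there to be positive when $\lambda^\ast<\lambda\leq 1$. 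Thus $\rho(1-q_n)<\lambda$ for all large $n$, i.e.\ $\xi_{n,\lambda}<q_n=\beta_{n,\lambda}^{-1/\lambda}$.

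The derivative computation, the factoring trick, and the limit evaluation are routine; the one thing to be careful about is exploiting the single‑peak shape of $\psi$ correctly — the sign of $\psi'$ at the single test point $q_n$ already pins down $\xi_{n,\lambda}$ relative to $q_n$ — and cleanly justifying $nq_n\to c$ from~\eqref{eq:betaapprox}. I do not expect a genuine obstacle: the conceptual heart of the argument, and the reason the threshold is exactly $\lambda^\ast$, is simply that the limiting peak‑location inequality coincides with the defining sign condition $\xi(\lambda)>0$ of $\lambda^\ast$.
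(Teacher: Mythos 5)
Your proof is correct, and its overall architecture is the same as the paper's: establish that $q\mapsto H_{n,\lambda}(1,q)$ is single-peaked, then test the sign of its derivative at $q_n=\beta_{n,\lambda}^{-1/\lambda}$, pass to the limit using the Stirling asymptotics \eqref{eq:betaapprox} (so that $nq_n\to\Gamma(2-\lambda)^{-1/\lambda}$), and observe that the limiting sign condition is exactly the condition $\xi(\lambda)>0$ of \cref{lem:lambdastar}; indeed your inequality $\frac{c e^{-c}}{1-e^{-c}}<\lambda$ is the paper's limiting expression $\left(1+\frac{c}{\lambda}\right)e^{-c}-1<0$ up to a positive factor. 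The one genuine (local) difference is how unimodality is obtained: the paper differentiates the bracket a second time and tracks a sign change of the second derivative, whereas you factor $1-(1-q)^n$ as a geometric sum and reduce everything to monotonicity of the ratio $\rho(t)=n t^{n-1}/\sum_{k=0}^{n-1}t^k$, whose reciprocal is visibly decreasing. Your route is a bit more elementary, avoids the second-derivative computation, and treats $\lambda=1$ (maximizer at $q=0$) within the same framework rather than as a separately argued case; the paper's route is the same style of argument it reuses elsewhere (e.g.\ in \cref{lem:propsg,lem:supglam}), so it keeps the appendix uniform. Either way, the substantive content — the threshold $\lambda^\ast$ arising from the limiting peak-location inequality — is identical.
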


\begin{proof} For simplicity we shall assume that $\lambda<1$; when $\lambda=1$ a
similar analysis implies that the unique maximizer occurs at $\xi_{n,\lambda}=0$.

The first derivative of $H_{n,\lambda}$ is
\[\frac{\partial\,
H_{n,\lambda}(1,q)}{\partial\,q}=\frac{\lambda}{q^{1+\lambda}}\left[(1-q)^{n-1}\left[1+q\left(\frac{n}{\lambda}-1\right)\right]-1\right]\;,\]
which (for $\lambda<1$) has a positive pole at $q=0$ and is negative at $q=1$. Its
first factor is always positive, and its second factor can be differentiated:
\[\tilde{H}_{n,\lambda}(1,q)=(1-q)^{n-1}\left[1+q\left(\frac{n}{\lambda}-1\right)\right]-1\;;\]
\[\frac{\partial\,\tilde{H}_{n,\lambda}(1,q)}{\partial\, q}=\frac{n}{\lambda}(1-q)^{n-2}\left[(1-\lambda)-q(n-\lambda)\right]\;.\]
This function changes sign from positive to negative at a single point,
$\tau_{n,\lambda}=\frac{1-\lambda}{n-\lambda}$. Thus, both $\tilde{H}_{n,\lambda}$
and $\frac{\partial}{\partial\, q} H_{n,\lambda}$ change sign from positive to negative at a single
point $\xi_{n,\lambda}\geq\tau_{n,\lambda}$, which enables us to conclude that
$H_{n,\lambda}(1,q)$ has a single peak.

Next, we argue that the unique maximizer of $H_{n,\lambda}(1,q)$ is smaller than the
quantity $\beta_{n,\lambda}^{-1/\lambda}$, for large enough $n$. In order to do
this, observe that
\[\left.\frac{\partial\,
H_{n,\lambda}\left(1,q\right)}{\partial\,q}\right|_{q=\beta_{n,\lambda}^{-1/\lambda}}=\lambda\beta_{n,\lambda}^{1+1/\lambda}\left[\left(1-\beta_{n,\lambda}^{-1/\lambda}\right)^{n-1}\left[1+\beta_{n,\lambda}^{-1/\lambda}\left(\frac{n}{\lambda}-1\right)\right]-1\right]\;.\]

The first factor of the above expression is strictly positive for all choices of $n$
and $0<\lambda\leq1$. The second factor actually has a limit as
$n\rightarrow\infty$; it converges to

\[\left(1+\frac{\Gamma(2-\lambda)^{-1/\lambda}}{\lambda}\right)e^{-\Gamma(2-\lambda)^{-1/\lambda}}-1\;.\]

This quantity is negative precisely when $\lambda>\lambda^\ast$, as proven in
\cref{lem:lambdastar}. Thus, as long as $\lambda>\lambda^\ast$, we have that
$\frac{\partial}{\partial\, q} H_{n,\lambda}\left(1,\beta_{n,\lambda}^{-1/\lambda}\right)$ is negative
for large enough $n$, which implies that the unique maximizer of
$H_{n,\lambda}(1,q)$ is to the left of $\beta_{n,\lambda}^{-1/\lambda}$.
\end{proof}
\end{document}